\DeclareMathOperator*{\argmax}{arg\,max}
\newtheorem{theorem}{Theorem}
\newtheorem{lemma}{Lemma}
\newtheorem{corollary}{Corollary}
\newtheorem{assumption}{Assumption}
\title{Winning the CityLearn Challenge: Adaptive Optimization \\ with 
Evolutionary Search under Trajectory-based Guidance}
\author{
    Vanshaj Khattar and Ming Jin
}
\begin{document}
\maketitle

\begin{abstract}
\begin{quote}
Modern power systems will have to face difficult challenges in the years to come: frequent blackouts in urban areas caused by high power demand peaks, grid instability exacerbated by intermittent renewable generation, and global climate change amplified by rising carbon emissions. While current practices are growingly inadequate, the path to widespread adoption of artificial intelligence (AI) methods is hindered by missing aspects of trustworthiness. The CityLearn Challenge is an exemplary opportunity for researchers from multiple disciplines to investigate the potential of AI to tackle these pressing issues in the energy domain, collectively modeled as a reinforcement learning (RL) task. Multiple real-world challenges faced by contemporary RL techniques are embodied in the problem formulation. In this paper, we present a novel method using the solution function of optimization as policies to compute actions for sequential decision-making, while notably adapting the parameters of the optimization model from online observations. Algorithmically, this is achieved by an evolutionary algorithm under a novel trajectory-based guidance scheme. Formally, the global convergence property is established. Our agent ranked first in the latest 2021 CityLearn Challenge, being able to achieve superior performance in almost all metrics while maintaining some key aspects of interpretability.  
\end{quote}
\end{abstract}

\section{Introduction}
\label{sec:Introduction}
{Rapid urbanization in the past decades has led to a substantial increase in energy use that puts stress on the grid assets, while the integration of additional renewable generation and energy storage at the distribution level introduces both opportunities and new challenges \cite{rolnick2022tackling}. The cornerstone of addressing emerging issues is the deployment of proper control and coordination strategies, which have a potential impact on enhancing energy flexibility and resilience in the face of a surge in climate-induced demand (as already seen in places like California, where rolling blackouts are increasingly frequent during the summer) \cite{vazquez2020citylearn}}.{
Current industry practice is heavily based on optimization models, such as energy dispatch and unit commitment, where parameters (e.g., technological and physical constraints) are fixed throughout the lifecycle; however, such an approach is increasingly confronted by environmental uncertainty,  renewable generation stochasticity, and the ever-increasing complexity of the distribution grid \cite{abedi2019review}. On the other hand, there has been a surge in machine learning research, notably RL, because it allows the agent to act without the need to access the true model---a feature of particular interest for large-scale, complex systems, where it is not cost-effective to develop models of such high fidelity. 
Despite recent progress, real-world RL is still in its infancy \cite{dulac2021challenges}. }

{
Against this backdrop, the CityLearn Challenge aims to spur RL solutions for the control of modern energy systems by providing a set of benchmarks for urban energy management, load shaping, and demand response in a range of climate zones \cite{vazquez2020citylearn}. The agent is tasked with exploring and exploiting the best control strategy for energy storage distributed in a community of buildings. Performance is evaluated against standard metrics such as ramping costs, peak demands, and carbon emissions. The CityLearn encapsulates 4  of the 9  real-world RL challenges identified by \cite{dulac2021challenges}, including \emph{1)} the ability to learn on live systems from limited samples---there is no training period; \emph{2)} dealing with system constraints that should never or rarely be violated---there are strict balancing equations for electricity, heating, and cooling energy; \emph{3)} the ability to provide quick action---there is a strict time limit for completing the 4-year evaluation on Google's Colab; and \emph{4)} providing system operators with explainable policies---a necessity to facilitate real-world adoption and deployment.}

\begin{figure*}[t]
  \centering
  \includegraphics[width=1.98\columnwidth]{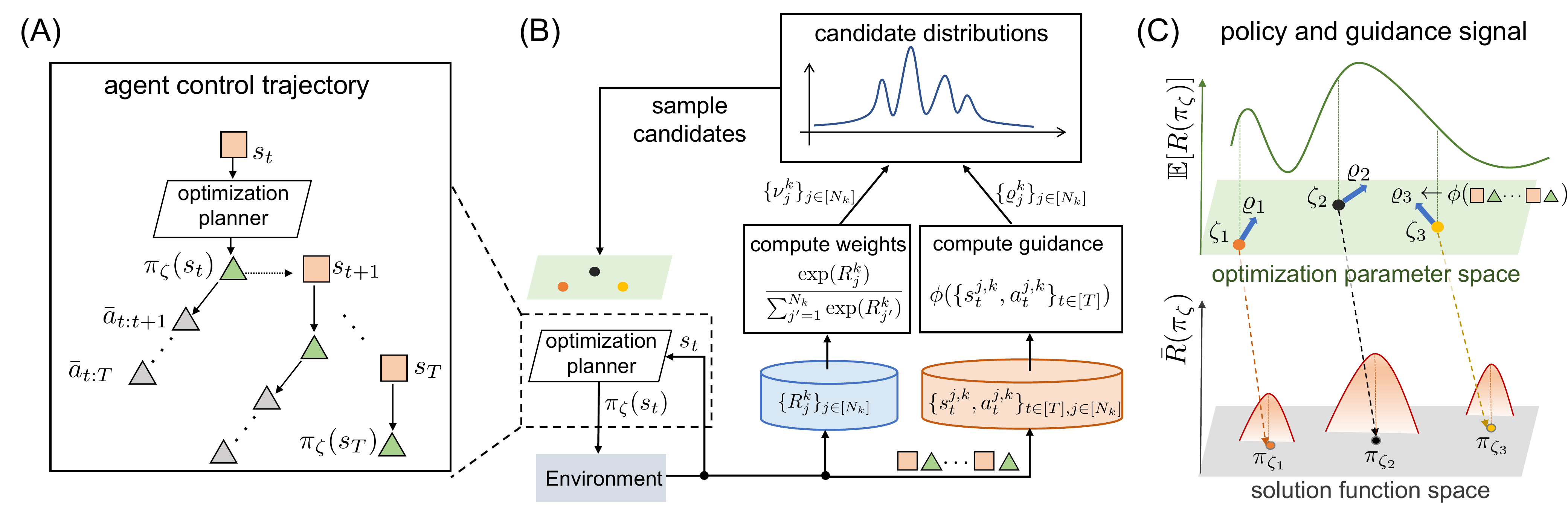}
\caption{\textbf{Overview.} (A) Agent trajectory of observed states (square) and actual (green triangle)/planned (grey triangle) actions. At each step, the agent solves optimization \eqref{eq:sol-map} to plan ahead but only executes the most immediate action; $\bar{a}_{t:t'}$ represents the action planned at time $t$ for time $t'\geq t$. (B) Illustration of the system, including interaction with the environment and adaptation of optimization parameters. Each iteration of $k$ consists of evaluating $N_k$ (three in this case) policies sampled from the candidate distribution. The observed rewards $\{R_j^k\}_{j\in[N_k]}$ and trajectories $\{s_t^{j,k},a_t^{j,k}\}_{j\in[N_k]}$ are stored in some buffers, which are then used to compute weights $\{\nu_j^k\}_{j\in[N_k]}$ and guidance signals $\{\varrho_j^k\}_{j\in[N_k]}$ to update the distribution as in \eqref{eq:iteration-measure}. (C) The two-level structure indicates the correspondence between an optimization parameter $\zeta$ and the solution function $\pi_\zeta$ (used as policy). The upper-level objective is the expected episodic rewards \eqref{eq:SeqDecMaking} and the lower-level objective is the objective function of \eqref{eq:sol-map}, the extremum of which is the policy action (see \eqref{eq:OuterLayer}). The guidance signal $\varrho$ is computed based on the trajectory of each candidate and applied on top of the original parameter during the update. }
\label{fig:overview}
\end{figure*}

In this paper, we describe our winning solution for the 2021 CityLearn Challenge based on the idea broadly categorized as \emph{adaptive optimization}. Indeed, optimization (especially convex optimization) has become the de facto standard in industrial systems with profound theoretical foundations, and various formulations for control and planning applications \cite{boyd2004convex}. Such approaches can easily encode domain-specific constraints (in the form of nonlinear functions and variational inequalities) and can gracefully handle problems with millions of decision variables \cite{facchinei2007finite}. Although well established, optimization models, once built, typically do not adapt to real-world conditions, rendering current approaches rather ``rigid.'' 

To address this fundamental limitation, we exploit that the solution of optimization lies on a manifold implicitly defined by a general equation \cite{dontchev2009implicit}. The crux of our idea is to shape this manifold by adapting the parameters of the optimization model while extracting insights from trajectory data to design guidance signals (see Fig. \ref{fig:overview}  for a detailed illustration). Key differences between our method and well-established optimization techniques (e.g., stochastic optimization \cite{powell2020reinforcement}, bi-level optimization \cite{dempe2020bilevel}) include:
\begin{enumerate}
    \item[\emph{1)}] we only allow access to the environment through interactive samples (reward, states, etc.) but not the true dynamics or reward function;
    \item[\emph{2)}] we make full use of the control trajectory of a Markov decision process (MDP) to obtain a guidance signal.
\end{enumerate}
While \emph{1)} is similar to the RL setup, \emph{2)} can be viewed as an \emph{augmentation of zeroth-order search methods with insights extracted from control data}, which, to the best of our knowledge, is the \emph{first of its kind}. In general, zeroth-order algorithms, such as simultaneous perturbation \cite{spall2005introduction,mania2018simple}, evolutionary algorithms \cite{salimans2017evolution,zhou2019evolutionary}, and Bayesian optimization \cite{snoek2012practical,frazier2018bayesian} are natural candidates for RL and easy to implement, but can potentially suffer scalability problems \cite{ghadimi2013stochastic}. Nevertheless, the parameters of an optimization model (i.e., variables to be learned) usually have clear interpretations. Thus, we design a mechanism to leverage domain knowledge for the design of appropriate guidance during the search for these parameters; such guidance can be specified as a general function of trajectory data (including  observed states and actions of an MDP). 
The method works well in an online environment without an extensive training period, which is particularly advantageous in a real-world setting where an offline environment for model training is usually not available. According to independent evaluations, the proposed method achieved the highest performance in the recent 2021 CityLearn Challenge. To demonstrate effectiveness against existing techniques, we further validate the method by comparing it with a range of baselines. Key contributions are as follows:
\begin{itemize}
    \item A framework of adaptive optimization for online control, winning the \textbf{1st place in the 2021 CityLearn Challenge}
    \item A novel evolutionary search (ES) algorithm  with a guidance function based on state-action-trajectory data
    \item Theoretical analysis of asymptotic convergence to global optima with noisy function evaluations
    \item Empirical comparison against a range of baselines
\end{itemize}

\subsection{Related work}
\label{sec:related}

{Optimal control and stochastic optimal control are well-known approaches to sequential decision-making problems} \cite{bertsekas2019reinforcement}. {Convex optimization  is another avenue} \cite{agrawal2020learning}. Most existing works assume a known dynamic model of the system. Various large-scale stochastic programs have been proposed in the literature to deal with future uncertainty \cite{prekopa2013stochastic,powell2020reinforcement}. The major drawback is the computational burden of rapidly expanding scenario trees in multi-stage stochastic programming. Our method relieves computation by using plug-in estimators, a.k.a., deterministic approximation of future uncertainty within convex optimization. 

Recently, RL has gained popularity in various domains \cite{chen2022reinforcement,haydari2020deep,nian2020review}. To contextualize the present approach, we make a few remarks about the relation to model-based RL (MBRL). In particular \emph{implicit} MBRL, where the entire procedure (e.g., model learning and planning) is optimized for optimal policy computation \cite{moerland2020model}. However, unlike existing works (e.g., \cite{tamar2016value,karkus2017qmdp,racaniere2017imagination,guez2018learning,schrittwieser2020mastering} that build a model based on (recurrent/convolutional) neural networks (NNs) with primary restrictions to discrete state and action space, our method learns how to plan by solving optimization and adapting its parameters; hence, it is amenable to a wide range of applications with continuous states and actions. The present work is closely related to \cite{ghadimi2020reinforcement,agrawal2020learning}, which also use convex optimization as a policy class to handle uncertainty. In particular, convex optimization control policies are learned in \cite{agrawal2020learning} with implicit differentiation \cite{agrawal2019differentiable}. We extend their method to the RL setting and propose a novel ES algorithm for learning.

Differentiated from existing ES-based strategies \cite{szita2006learning,salimans2017evolution,khadka2018evolution,gangwani2018policy,conti2018improving} or zeroth-order optimization \cite{snoek2012practical,liu2020primer}, our ES is augmented with a guidance function that depends on the data collected as the policy interacts with the environment; hence, it can be viewed as a type of \emph{MDP-augmented ES}. The guidance mechanism is also flexible enough to allow the effective incorporation of domain knowledge, as demonstrated in CityLearn. We further provide theoretical justification for this rather complex scheme.

\section{Preliminaries}
\label{sec:Preliminaries}

\subsection{Problem setup}

Consider an MDP $(\mathcal{S},\mathcal{A},\mathcal{P},r)$, where $\mathcal{S}$ is the (possibly infinite) state space, $\mathcal{A}$ is the set of actions, $\mathcal{P}:\mathcal{S}\times\mathcal{A}\rightarrow\mathcal{M}(\mathcal{S})$ is the transition probability kernel with $\mathcal{M}(\mathcal{S})$ denoting the set of all probability measures over $\mathcal{S}$, and $r(s,a)$ gives the corresponding immediate reward (can be time-dependent). The goal in episodic RL is to learn a policy $\pi:\mathcal{S} \rightarrow \mathcal{A}$ that maximizes cumulative rewards over a finite time horizon:
\begin{equation}
    \begin{aligned}
    \underset{\pi \in \Pi}{\max} \quad & \mathbb{E}\left[R(\pi)  \right]{,}
    \end{aligned}
    \label{eq:SeqDecMaking}
\end{equation}
where $R(\pi)\coloneqq{\sum}_{t=0}^T \hspace{0.1cm} r_{t}\big(s_t, \pi(s_t) \big)$ is the episodic reward, with $s_t \in \mathcal{S}\subseteq\mathbb{R}^{n_s}$ denoting the state at time $t$ and $T$ as the horizon. The expectation is taken over initial state distributions and transition dynamics (under deterministic policy $\pi$). We require access to a random sample $R(\pi)$ of the episodic reward as well as trajectory data $\{(s_t,a_t)\}_{t\in[T]}$ to be used to later compute the guidance signal. Here, we use the shorthand $[T]=\{1,...,T\}$.

\subsection{Canonical approaches and solution functions}
\label{subsec:canonical approaches}

The proposed method is based on canonical  stochastic programming methods \cite{powell2020reinforcement}. For example, in multi-stage stochastic programming \cite{pflug2014multistage}, the action at state $s_t$ is computed as
{
\begin{equation*}
     \underset{a_t\in\mathcal{A}}{\argmax} \Big( \tilde{r}_{t}(s_t, a_t  )\! + \! \!\! \!  \underset{\{a_{t'}\}_{t'=t+1}^T}{\max} \!\!\!\tilde{\mathbb{E}}\Big[\!\! \mathlarger{\sum}_{t' = t+1}^T \! \tilde{r}_{t'}\big(s_{t'}, a_{t'})\Big | s_t,a_t
    \Big] \Big),
\end{equation*}}
where $\tilde{r}_t:\mathcal{S} \times \mathcal{A}\rightarrow \mathbb{R}$ is the {surrogate} reward function and $\tilde{\mathbb{E}}[\cdot]$ is the {surrogate} expectation operator (e.g., model-based scenario trees) designed to {approximate the true environment}. The above formulation can also be seen as finding the solution to a Bellman equation in dynamic programming. The main limitations, nevertheless, are the high computational cost of evaluating the expectation operator and the potential model mismatch due to approximations. 

A simpler yet more practically appealing method, widely adopted in today's industries, is to use deterministic approximations of the future and capture the dependence of future states on prior decisions through constraints as part of the lookahead model \cite{powell2020reinforcement}: $\pi_{\zeta}(s_t)$ is given by
\begin{equation}
    \begin{aligned}
     & \underset{\bar{a}_t\in\mathcal{A}}{\argmax} \max_{\{\bar{s}_{t'}, \bar{a}_{t'}\}_{t'=t+1}^{T}}\bar{R}(s_t,\bar{a}_{t},\{\bar{s}_{t'},\bar{a}_{t'}\}_{t'=t+1}^T;\zeta)\\
     & \hspace{0.02cm}\text{s. t.} \quad  g_i(s_t,\bar{a}_{t},\{\bar{s}_{t'},\bar{a}_{t'}\}_{t'=t+1}^T;\zeta) \leq 0 \hspace{0.17cm}; \;\; \hspace{0.09cm}\forall i \in\mathcal{I}\\
     & \hspace{0.04cm}\quad\;\quad h_i(s_t,\bar{a}_{t},\{\bar{s}_{t'},\bar{a}_{t'}\}_{t'=t+1}^T;\zeta) = 0 \hspace{0.19cm} ; \;\;\; \forall i \in\mathcal{E},
     \end{aligned}
     \label{eq:sol-map}
\end{equation}
with the objective $\bar{R}(s_t,\bar{a}_{t},\{\bar{s}_{t'},\bar{a}_{t'}\}_{t'=t+1}^T;\zeta)$ defined as
\begin{equation*}\label{eq:surrogate-reward}
    \bar{r}_{t}\big({s_t}, \bar{a}_t;\zeta \big) + {\sum}_{t' = t+1}^T \bar{r}_{t'}\big(\bar{s}_{t'}, \bar{a}_{t'};\zeta\big),
\end{equation*}
where $\{\bar{s}_{t'}\}_{t'=t+1}^T$ and $\{\bar{a}_{t'}\}_{t'=t}^T$ are optimization variables corresponding to the planned states and actions, $\bar{r}_t$ is surrogate reward, and the feasible set is defined by $\{g_i\}_{i \in\mathcal{I}}$ and $\{h_i\}_{i\in\mathcal{E}}$. We denote the parameters of the objective function and the constraints collectively by $\zeta\in\mathcal{Z}\subset\mathbb{R}^d$. The dependencies of future states on current and \emph{planned} states/actions are encoded as constraints in \eqref{eq:sol-map} as part of the lookahead model. Many examples can be found in, e.g., \cite{borrelli2017predictive}. We remark that some optimization parameters may be provided by predictors based on the current state $s_t$, the parameter of which is also collected by $\zeta$ in this case.

 The policy $\pi_{\zeta}(s_t)$, a.k.a., the solution function \cite{dontchev2009implicit}, provides the action at the current state $s_t$ as the optimal solution to \eqref{eq:sol-map}. Since this function is generally set-valued \cite{dontchev2009implicit}, we make the following assumption.
\begin{assumption}\label{asmptn:uniqueness}
For each $\zeta\in \mathcal{Z}$ and $s_t\in\mathcal{S}$: \emph{a)} the objective function in  \eqref{eq:sol-map} is continuous, strictly convex, $g_i$ is continuous and convex for each $i\in\mathcal{I}$, and $h_i$ is affine for each $i\in\mathcal{E}$; \emph{b)} the feasible set of \eqref{eq:sol-map} is closed, absolutely bounded, and has a nonempty interior.
\end{assumption}
The above assumption can be satisfied by imposing proper conditions on the design of the surrogate model, i.e., objective and constraints in \eqref{eq:sol-map}. Note that in our approach, we make no convexity assumption about the true dynamics or rewards of the environment, which can be seen as a blackbox. The convexity condition is only stipulated for the surrogate model for computational efficiency. Our goal is simply to learn the parameters of the optimization model in order to have a good decision-making capability.\footnote{Perhaps surprisingly, despite the fact that \eqref{eq:sol-map} is convex, the policy (given by the solution function) can be highly nonconvex with high representational capacity. In particular, the contemporary work by \citep{2023_4C_Sol} shows a ``universal approximation'' property of the solution functions of linear programs (LPs). } 
An immediate consequence of the above assumption is that the solution to \eqref{eq:sol-map} is unique; furthermore, it implies continuity with respect to parameters.
\begin{lemma}\label{lem:continuity}
The solution function $\pi_\zeta(s_t)$ defined in \eqref{eq:sol-map} is continuous  with respect to parameter $\zeta$ for each $s_t\in\mathcal{S}$.
\end{lemma}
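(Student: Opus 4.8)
The plan is to fix the state $s_t$ and regard \eqref{eq:sol-map} as a parametric optimization problem in the decision vector $x\coloneqq(\bar a_t,\{\bar s_{t'},\bar a_{t'}\}_{t'=t+1}^{T})$ with parameter $\zeta$, feasible-set correspondence $C(\zeta)\coloneqq\{x: g_i(s_t,x;\zeta)\le 0\ \forall i\in\mathcal I,\ h_i(s_t,x;\zeta)=0\ \forall i\in\mathcal E\}$, and objective $\bar R(s_t,x;\zeta)$. Lemma \ref{lem:continuity} is then an instance of Berge's maximum theorem: if $C$ is a continuous (both upper and lower hemicontinuous) compact-valued correspondence and $\bar R$ is jointly continuous in $(x,\zeta)$, then the optimal value is continuous in $\zeta$ and the argmax correspondence is upper hemicontinuous; since Assumption \ref{asmptn:uniqueness} renders this correspondence single-valued (the uniqueness noted immediately after the assumption), an upper hemicontinuous single-valued correspondence into a metric space is a continuous function, which is exactly the claim.

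Concretely I would run a sequential argument that makes the ingredients explicit. Take $\zeta_n\to\zeta$ in $\mathcal Z$ and set $x_n\coloneqq\pi_{\zeta_n}(s_t)$. By the uniform (``absolute'') boundedness of the feasible sets in Assumption \ref{asmptn:uniqueness}b, $\{x_n\}$ lies in a fixed compact set and hence has a subsequence $x_{n_k}\to x^\star$. Step one: $x^\star\in C(\zeta)$, which follows by passing the inequalities $g_i\le 0$ and equalities $h_i=0$ to the limit using joint continuity of the constraint data in $(x,\zeta)$ together with closedness of the feasible set. Step two: $x^\star$ is optimal for $\zeta$; given an arbitrary $y\in C(\zeta)$, invoke lower hemicontinuity of $C$ at $\zeta$ to obtain $y_n\in C(\zeta_n)$ with $y_n\to y$, use optimality of $x_{n_k}$ for $\zeta_{n_k}$ to write $\bar R(s_t,x_{n_k};\zeta_{n_k})\ge\bar R(s_t,y_{n_k};\zeta_{n_k})$, and let $k\to\infty$ with joint continuity of $\bar R$ to conclude $\bar R(s_t,x^\star;\zeta)\ge\bar R(s_t,y;\zeta)$. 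Thus $x^\star$ attains the maximum, and uniqueness forces $x^\star=\pi_\zeta(s_t)$. Since the limit does not depend on the subsequence, the whole sequence satisfies $x_n\to\pi_\zeta(s_t)$, establishing continuity.

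The main obstacle is the lower-hemicontinuity step: producing, for each feasible point $y$ of the limit problem, feasible points $y_n$ of the perturbed problems with $y_n\to y$. For the inequality constraints this is where the nonempty-interior (Slater-type) condition of Assumption \ref{asmptn:uniqueness}b does the work — nudge $y$ slightly toward a strict interior point to obtain strict feasibility, which persists for all large $n$ by continuity of the $g_i$ in $\zeta$, then interpolate back to $y$ using convexity. The affine equalities $h_i(\cdot;\zeta)=0$ need the mild extra care of restricting to (and projecting onto) the affine subspace they define, using their affine dependence on $x$; alternatively one eliminates the equality constraints at the outset. These are standard facts from parametric convex optimization (e.g., Robinson-type or Hoffman-type stability), and joint continuity of $\bar R$, $\{g_i\}$, $\{h_i\}$ in $(x,\zeta)$ — the natural reading of the parametrization in \eqref{eq:sol-map} — is the only hypothesis used beyond Assumption \ref{asmptn:uniqueness}.
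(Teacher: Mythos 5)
Your proposal is correct and follows essentially the same route as the paper: establish continuity of the feasible-set correspondence (the paper cites the nonempty-interior/convexity argument via Rockafellar--Wets, you spell out the Slater-point lower-hemicontinuity argument explicitly), apply Berge's maximum theorem to get upper hemicontinuity of the argmax, and invoke single-valuedness from strict convexity to conclude continuity. Your sequential argument is simply a self-contained expansion of the steps the paper delegates to citations, including the appropriate caveat about handling the affine equality constraints in the interiority condition.
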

The proof is a direct application of the Berge maximum theorem \cite{berge1997topological}. To conclude this section, let us make some comments on the construction of the surrogate model in \eqref{eq:sol-map}. By analogy to reward design \cite{prakash2020guiding}, the objective function should be chosen to promote desirable behaviors. The set of constraints introduces inductive bias on the transition dynamics of the environment. It is beneficial, though oftentimes unlikely and non-essential, that the surrogate model matches the functional forms of true reward or dynamics, an idea shared in model-based RL \cite{moerland2020model}. It is, nevertheless, desirable to ensure the computational efficiency of \eqref{eq:sol-map} to provide actions quickly---hence the choice of convex programs.

\section{Policy adaptation with ES under guidance}
\label{sec:ZO-iRL framework}

The potential mismatch between the surrogate model and the real environment and errors due to predictions may adversely affect the decision quality of \eqref{eq:sol-map}. Thereby, we aim to adapt the parameters of the surrogate model to shape the solution function. The task of finding the optimal parameter within the set of solution functions $\Pi=\{\pi_\zeta:\zeta\in\mathcal{Z}\}$ can be compactly written as:
\begin{equation}
    \Upsilon\coloneqq \underset{\zeta\in\mathcal{Z}}{\argmax}\hspace{0.05cm} \mathbb{E}\Bigg[\mathlarger{\sum}_{t=0}^T r_{t}\big(s_t, \pi_{\zeta}(s_t) \big) \Bigg],
    \label{eq:OuterLayer}
\end{equation}
where $\Upsilon$ is the set of global optima for policy parameters. Note that $\zeta$ is not part of the true reward (which remains unknown to the agent) but only the parameters of the surrogate model that implicitly defines the policy in \eqref{eq:sol-map}. Since $\pi_{\zeta}(s_{t})$ is given by an optimization \eqref{eq:sol-map}, \eqref{eq:OuterLayer} can also be viewed as a bi-level problem (c.f., \cite{dempe2020bilevel}): the outer level aims to learn parameters to maximize rewards, while the inner level defines  policy action as a solution to~\eqref{eq:sol-map}. The key challenge in solving \eqref{eq:OuterLayer} as a bi-level problem, however, is that the outer level objective is not analytically revealed (thus prohibiting any direct differentiation approach) and can be nonconvex with respect to $\zeta$.

\begin{algorithm}[t]
\caption{Evolutionary search under guidance}\label{alg}
\KwInput{Hyperparameters $\{N_k\}$, uniform distribution $\mu$, initial point $z_0$ }
  \begin{algorithmic}[1]
  \STATE Initialize $P_1(d\zeta) \sim \exp(\|\zeta-z_0\|)\mu(d\zeta)$ 
  \FOR{$k = 1, 2,\ldots$}
    \STATE Sample $N_k$ candidates from the distribution $p_k$: $\zeta_1^{k}, \zeta_2^{k}, \cdots, \zeta_{N_k}^{k}\overset{\text{iid}}{\sim} p_k$
    \FOR{$j = 1, \ldots, N_k$}
            \STATE Deploy policy $\pi_{\zeta_j^{k}}$ for one episode and observe an episodic reward $R_j^{k}\leftarrow R(\pi_{\zeta_j^{k}})$
            \STATE Compute the guidance signal $\varrho_j^k$ by \eqref{eq:guidance}
        \ENDFOR
    \STATE Update the distribution $p_{k+1}$ for the next iteration according to \eqref{eq:iteration-measure}.
    \ENDFOR
  \end{algorithmic}
  \label{alg:Zeroth_Order method}
\end{algorithm}

\subsection{Guided evolutionary search}

This section discusses the proposed ES algorithm (Algorithm~\ref{alg}), inspired by the method of generations \cite{zhigljavsky2012theory}. At each iteration $k$, the algorithm randomly samples a set of $N_k$ parameter candidates, $\zeta_1^{k}, \cdots ,\zeta_{N_k}^{k} \overset{iid}{\sim} p_k$.
{ For each candidate $j\in[N_k]$, we evaluate the corresponding policy in the environment and observe an episodic reward $R_j^k\sim \mathcal{R}(\pi_{\zeta_j^{k}})$, where $\mathcal{R}(\pi_{\zeta_j^{k}})$ denotes the distribution of episodic reward for policy $\pi_{\zeta_j^{k}}$, as well as all the state and action pairs $\{s^{j,k}_t,a^{j,k}_t\}_{t\in[T]}$ in the past episode. Based on  trajectory data, we compute a guidance signal 
\begin{equation}
    \label{eq:guidance}
    \varrho_j^k=\phi(\{s^{j,k}_t,a^{j,k}_t\}_{t\in[T]}),
\end{equation}
where $\phi:(\mathcal{S}\times\mathcal{A})^T\rightarrow\mathcal{Z}'$ is a function that may have complicated dependence on past states and actions with $\mathcal{Z}'$ as the range. The design of such a guidance function is often based on domain knowledge (to be discussed later in CityLearn). Then, we update the distribution for the next iteration as
\begin{equation}\label{eq:iteration-measure}
        p_{k+1}(d\zeta) = \sum_{j = 1}^{N_k}\nu_j^kQ_k(\zeta_j^k,\varrho_j^k, d\zeta),
\end{equation}
where
\begin{equation}\label{eq:compute-weights}
    \nu_j^k = \frac{\exp(R_j^{k})}{\sum_{j=1}^{N_k}\exp(R_j^{k})}
\end{equation}
are the weights obtained by taking the softmax over candidate rewards. The probability measure $Q_k(\zeta_j^k,\varrho_j^k,d\zeta)$ is the transition probability given candidate $\zeta_j^k$ and guidance signal $\varrho_j^k$. Hence, $p_{k+1}(d\zeta)$ is a mixture of distributions weighted by observed rewards in the current iteration $k$, which can be sampled by the standard superposition method: at first the index $j$ is sampled from the discrete distribution $\{\nu_j^k\}$, followed by sampling from $Q_k(\zeta_j^k,\varrho_j^k,d\zeta)$. For example, in our algorithm for CityLearn,
\begin{equation}\label{eq:Q_normal-main}
    Q_k(z,\varrho,d\zeta)\sim\exp(\|\zeta-z-\alpha_k\varrho\|/\iota_k)\mu(d\zeta),
\end{equation}
where $\|\cdot\|$ is the Euclidean norm, $\mu(d\zeta)$ is a uniform measure over $\mathcal{Z}$, and $\iota_k>0$ and $\alpha_k\geq 0$ are such that their sum over time is bounded. Other candidates are possible and can still ensure convergence to global optimal, as long as certain conditions are met; intuitively, we require that the span of $Q_k$ decreases over time but not so rapidly that it fails to reach a global optimum. Note that to simplify the presentation, in the above algorithm, we assume that each candidate policy is evaluated only on one episode; extending this to the case of multiple episodes is straightforward (e.g., we would instead take the average of the evaluations among the episodes in the computation of weights \eqref{eq:compute-weights}).}

\section{Theoretical analysis}
\label{sec:theory}

We now analyze the convergence property of the sequence generated by Algorithm \ref{alg}. The following notations are used: $f(\zeta)=\mathbb{E}[R(\pi_\zeta)]$ is the expected episodic reward of policy $\pi_\zeta$, $\Upsilon=\argmax_{\zeta\in\mathcal{Z}} f(\zeta)$ is the set of global maximizers (may not be unique), $f^*=\max_{\zeta\in\mathcal{Z}} f(\zeta)$ is the global maximum, and $\lambda(d\zeta)$ is some measure over $\Upsilon$; $\mathbb{B}(\zeta,\epsilon)=\{\zeta'\in\mathcal{Z}:\|\zeta'-\zeta\|\leq \epsilon\}$ is a ball centered at $\zeta$ with radius $\epsilon$, $\mathbb{B}^*(\epsilon)=\{\zeta\in\mathcal{Z}:\min_{\zeta'\in\Upsilon}\|\zeta'-\zeta\|\leq \epsilon\}$ is a set of points that are $\epsilon$ away from the optimal solution set $\Upsilon$; $\delta_\zeta(dz)$ is the probability measure concentrated at the point $\zeta$. We use $\Rightarrow $ to denote the weak convergence of measures. We can consider $\|\cdot\|$ as any norm (e.g., Euclidean norm).

The measures $p_{k+1}(d\zeta)$, $k\in\mathbb{N}$ defined in \eqref{eq:iteration-measure} are distributions of random points $\zeta_j^{k+1}$, for any $j\in[N_{k+1}]$, conditional on the results of preceding evaluations of $\{R_j^{k}\}_{j\in[N_k]}$ and realizations of $\{\zeta_j^k,\varrho_j^k,\xi_j^k\}_{j\in[N_{k}]}$. Let $P_k(d\zeta_1,...,d\zeta_{N_k})$ represent their unconditional joint distributions at iteration $k$, and $$\tilde{P}_k(d\zeta)=\int_{\mathcal{Z}^{N_k-1}}P_k(d\zeta,dz_2,...,dz_{N_k})$$ is the unconditional marginal distribution (note that we introduce $z$ for $\zeta$ as the need arises in  integration).

The formalism of the guidance signal requires some basics of random process and measure theory (details can be found in the appendix). Essentially, the guidance signal $\varrho_j^k\in\mathcal{Z}'$ is a random variable (adapted to the $\sigma$-algebra generated by the trajectory within an episode) with probability measure $M_k(\zeta_j^k,d\varrho)$. Note that $M_k(\zeta_j^k,d\varrho)$ is dependent on $\zeta_j^k$ because the stochastic process that generates the trajectory depends on policy $\pi_{\zeta_j^k}$, but $M_k(\zeta_j^k,d\varrho)$ is conditionally independent of all other candidates $\{\zeta_{j'}^k\}_{j'\neq j}$. For analysis, we make the following assumptions.
\begin{assumption}\label{asm:main}
The followings hold:
\begin{enumerate}[label=(\alph*),leftmargin=2em]\setlength\itemsep{-0.058em}
    \item $R_j^k=f({\zeta_j^{k}})+\xi_j^k$, where $\xi_j^k\overset{iid}{\sim} F_k(d\xi)$ for any $k\in\mathbb{N}$ are independent with distribution $F_k(d\xi)$ bounded on a finite interval $[-c_\xi,c_\xi]$ and $\mathbb{E}\exp(\xi_j^k)=1$;
    \item $|f(\zeta)|\leq c_f$ for all $\zeta\in\mathcal{Z}$ and $\mathcal{Z}$ is compact;
    \item there exists $\epsilon>0$ such that $f$ is continuous on $\mathbb{B}^*(\epsilon)$;
    \item $Q_k(z,\varrho,d\zeta)=q_k(z,\varrho,\zeta)\mu(d\zeta)$, with $\sup_{z,\varrho,\zeta\in\mathcal{Z}}q_k(z,\varrho,\zeta)\leq L_k<\infty$ for all $k\in \mathbb{N}$, where $\mu$ is a probability measure on $\mathcal{Z}$ such that $\mu(\mathbb{B}^*(\epsilon))>0$ for any $\epsilon>0$; for any $z\in\mathcal{Z}$, the sequence of probability measures $Q_k(z,\varrho,d\zeta)$ weakly converges to $\delta_z(d\zeta)$;
    \item $\{N_k\}$ is a sequence of natural numbers $N_k\in\mathbb{N}$ such that $N_k\rightarrow\infty$ for $k\rightarrow\infty$;
    \item $\sup_{z,\varrho}M_k(z,d\varrho)<\infty$ for all $k\in\mathbb{N}$;
    \item $\tilde{P}_1(\mathbb{B}(\zeta,\epsilon))>0$ for all $\epsilon>0$ and $\zeta\in\mathcal{Z}$; 
    \item for any $\epsilon>0$, there are $\delta>0$ and a natural $\bar{k}$ such that $\tilde{P}_k(\mathbb{B}^*(\epsilon))\geq \delta$ for all $k\geq \bar{k}$.
\end{enumerate}
\end{assumption}

Let us comment on the assumptions above. Condition $(a)$ requires that the evaluation noise be independent and bounded; the expectation requirement can be satisfied for truncated log-normal distributions \cite{thompson1950truncated}. The iid requirement can be relaxed to mixing processes at the cost of more complex analysis \cite{doukhan2012mixing}; the boundedness condition, on the other hand, seems necessary to keep iterates in the vicinity of global maximum if they are already there. Condition $(b)$ is non-restrictive for practical problems. Condition $(c)$ is natural since $\pi_\zeta$ is continuous by Lemma \ref{lem:continuity}, and can be satisfied if the true reward functions $r_t$ are continuous for all $t\in[T]$. Assumptions $(d), (e), (f), (g)$, and $(h)$ formulate necessary requirements on the parameters of the algorithm. Intuitively, conditions $(d)$, $(e)$, and $(f)$ stipulate that the search becomes more ``focused'' over time in order to concentrate on the global optima; however, conditions $(g)$ and $(h)$ indicates that the decrease of span cannot be too fast in order not to miss the global optima. Condition $(f)$ on the guidance function $\phi$ can be met by proper smoothing if needed. Condition $(e)$ can be relaxed to $N_k=N$ for some finite integer $N$ for all $k\in\mathbb{N}$, but the convergence will only be towards the vicinity of $\Lambda$ due to the finite sample effect (see Lemma 4 in the appendix, which states the rate to be on the order $N^{-1/2}$). Unlike $(c), (d), (e), (f)$, and $(g)$, condition $(h)$ is not constructive; hence, we provide some verifiable conditions sufficient for $(h)$ to hold in Corollary \ref{cor:normal} (also see Corollary 2 in the appendix). Next, we analyze the update rule of Algorithm \ref{alg}.

\begin{lemma}\label{lem:distr-update-rule-guide}
The probability distribution $P_{k+1}(d\zeta_1,...,d\zeta_{N_{k+1}})$ can be written in terms of the distribution $P_k(d\zeta_1,...,d\zeta_{N_k})$ as:
\begin{align}
    \label{eq:distr-update-rule}
    \int_{\Omega^{N_k}}\!\!\chi_k(d\omega_{N_k})\!\prod_{j=1}^{N_{k+1}}\! \left\lbrace\beta(\omega_{N_k})\sum_{i=1}^{N_k}\Lambda(z_i,\varrho_i,\xi_i,d\zeta_j)\right\rbrace,
\end{align}
where $\Omega=\mathcal{Z}\times \mathcal{Z}'\times[-c_\xi,c_\xi]$, 
\begin{align*}
    &\omega_{N_k}=\{z_1,...,z_{N_k},\varrho_1,...,\varrho_{N_k},\xi_1,...,\xi_{N_k}\}\in\Omega^{N_k},\\
    &\chi_k(d\omega_{N_k})=P_k(dz_1,...,dz_{N_k})\prod_{j=1}^{N_{k}}F_k(d\xi_j)M_k(z_j,d\varrho_j),\\
    &\beta(\omega_{N_k})=\frac{1}{\sum_{j=1}^{N_k}\exp(f(z_j)+\xi_j)},\quad\text{ and}\\
    &\Lambda(z,\varrho,\xi,d \zeta)=\exp(f(z)+\xi)Q_k(z,\varrho,d\zeta).
\end{align*}
\end{lemma}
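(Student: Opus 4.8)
The plan is to derive the update rule by carefully unwinding the definitions of $p_{k+1}$, the sampling mechanism, and the conditioning structure, then integrating out the randomness of iteration $k$. First I would write down, for a fixed realization of the candidates $\{\zeta_i^k\}_{i\in[N_k]}$, guidance signals $\{\varrho_i^k\}_{i\in[N_k]}$, and noises $\{\xi_i^k\}_{i\in[N_k]}$, the conditional distribution of a single next-iteration sample $\zeta_j^{k+1}$. By \eqref{eq:iteration-measure} together with \eqref{eq:compute-weights}, this conditional distribution is $\sum_{i=1}^{N_k}\nu_i^k Q_k(\zeta_i^k,\varrho_i^k,d\zeta)$ with $\nu_i^k = \exp(R_i^k)/\sum_{\ell}\exp(R_\ell^k)$; substituting $R_i^k = f(\zeta_i^k)+\xi_i^k$ from Assumption \ref{asm:main}(a) turns this into $\beta(\omega_{N_k})\sum_{i=1}^{N_k}\exp(f(z_i)+\xi_i)Q_k(z_i,\varrho_i,d\zeta) = \beta(\omega_{N_k})\sum_{i=1}^{N_k}\Lambda(z_i,\varrho_i,\xi_i,d\zeta)$, which is exactly the bracketed expression in \eqref{eq:distr-update-rule}.

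Next I would use the conditional independence: given the candidates of iteration $k$ and their induced randomness $\omega_{N_k}$, the $N_{k+1}$ samples $\zeta_1^{k+1},\dots,\zeta_{N_{k+1}}^{k+1}$ are drawn i.i.d.\ from that mixture (this is the "method of generations" structure, and it is what the superposition-sampling description after \eqref{eq:iteration-measure} encodes). Hence the conditional joint law of $(\zeta_1^{k+1},\dots,\zeta_{N_{k+1}}^{k+1})$ given $\omega_{N_k}$ is the product $\prod_{j=1}^{N_{k+1}}\{\beta(\omega_{N_k})\sum_{i=1}^{N_k}\Lambda(z_i,\varrho_i,\xi_i,d\zeta_j)\}$. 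To obtain the unconditional law $P_{k+1}$, I integrate this against the joint law of $\omega_{N_k}$. The key point is that $\omega_{N_k}$ decomposes as the candidates $(z_1,\dots,z_{N_k})$ distributed according to $P_k$, and then, conditionally on the candidates, each pair $(\xi_j,\varrho_j)$ is drawn independently — $\xi_j\sim F_k$ independent of everything (Assumption \ref{asm:main}(a)) and $\varrho_j\sim M_k(z_j,d\varrho)$ depending only on $z_j$ and conditionally independent across $j$ (as stated in the paragraph introducing $M_k$). This gives precisely $\chi_k(d\omega_{N_k}) = P_k(dz_1,\dots,dz_{N_k})\prod_{j=1}^{N_k}F_k(d\xi_j)M_k(z_j,d\varrho_j)$, and \eqref{eq:distr-update-rule} follows.

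I would also include a brief justification that all the integrals are well defined: $\beta$ is bounded above by $1$ on its domain (it is a reciprocal of a sum of positive exponentials, each at least $\exp(-c_f-c_\xi)$, so in fact $\beta \le \exp(c_f+c_\xi)/N_k$), $\Lambda(z,\varrho,\xi,d\zeta)$ is a finite measure since $\exp(f(z)+\xi)\le\exp(c_f+c_\xi)$ by Assumption \ref{asm:main}(a),(b) and $Q_k(z,\varrho,\cdot)$ is a probability measure, and $F_k$, $M_k(z,\cdot)$ are (sub)probability measures by Assumption \ref{asm:main}(a),(f); so Fubini–Tonelli applies and the order of integration is immaterial. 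The main obstacle is not any single estimate but the bookkeeping: being precise about what each sample is conditioned on, invoking the conditional independence of $\{\varrho_j^k\}$ across $j$ and their dependence only through $\zeta_j^k$, and cleanly factoring the product measure $\chi_k$ — in other words, the measure-theoretic formalization of the "sample candidates, then observe noisy rewards and trajectory-based guidance, then resample" loop. Once the conditioning structure is pinned down, the algebraic substitution $R_j^k\mapsto f(z_j)+\xi_j$ and the collection of terms into $\beta$ and $\Lambda$ is routine.
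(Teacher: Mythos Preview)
Your proposal is correct and follows exactly the paper's own argument: identify the bracketed term as the conditional distribution $p_{k+1}(d\zeta_j)$ from \eqref{eq:iteration-measure}, take the product because the $N_{k+1}$ candidates are drawn i.i.d., and integrate against the law of the preceding iteration's randomness. Your additional Fubini/boundedness check is a nice extra that the paper omits.
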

The proof is immediate by recognizing that the bracket term in \eqref{eq:distr-update-rule} is the conditional distribution $p_{k+1}(d\zeta_j)$ defined in \eqref{eq:iteration-measure} and the integration is over the distribution from the preceding iteration. We take the product over $N_{k+1}$ candidates since they are drawn iid from $p_{k+1}$.

Now, we provide the main result on the convergence of $\tilde{P}_k(d\zeta)$  to some distribution $\lambda(d\zeta)$ over the global optima.

\begin{theorem}\label{thm:conv-general}
Suppose that Assumption \ref{asm:main} holds true, and let $\{\tilde{P}_k\}$ be the sequence of unconditional marginal distributions determined by Algorithm \ref{alg}.  Then, the distribution sequence  weakly converges to some measure  $\lambda$ over the optimal set, i.e., $\tilde{P}_k\Rightarrow \lambda$ as $k\rightarrow\infty$.
\end{theorem}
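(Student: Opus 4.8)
\emph{Proof strategy.} The argument has the three–part shape typical of convergence proofs for methods of generations: reduce, via tightness, to identifying the weak limit points; show that all mass accumulates on $\Upsilon$; and then upgrade subsequential convergence to genuine convergence. Since $\mathcal{Z}$ is compact (Assumption~\ref{asm:main}(b)), $\{\tilde{P}_k\}$ is automatically tight, so by Prokhorov's theorem it is relatively compact in the weak topology, and it suffices to prove (a) every weak limit point is supported on $\Upsilon$, and (b) the limit point is unique. The starting object is Lemma~\ref{lem:distr-update-rule-guide}: integrating out all but one of the $N_{k+1}$ exchangeable coordinates in \eqref{eq:distr-update-rule} yields the marginal recursion $\tilde{P}_{k+1}(d\zeta)=N_k\int_{\Omega^{N_k}}\chi_k(d\omega_{N_k})\,\beta(\omega_{N_k})\,\Lambda(z_1,\varrho_1,\xi_1,d\zeta)$. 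The first move is to marginalize the guidance out of $\Lambda$: the effective kernel $\bar{Q}_k(z,d\zeta)\coloneqq\int_{\mathcal{Z}'}Q_k(z,\varrho,d\zeta)\,M_k(z,d\varrho)$ inherits from Assumption~\ref{asm:main}(d),(f) both a uniform density bound $\bar{q}_k\le L_k$ and the weak convergence $\bar{Q}_k(z,\cdot)\Rightarrow\delta_z$, so the guidance reduces the analysis to a standard generation recursion with $\bar{Q}_k$ in place of $Q_k$. This is the precise sense in which guidance cannot break convergence: being scaled by $\alpha_k\to 0$, it is a vanishing perturbation of the mutation step.

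\emph{Concentration (the core estimate).} Fix $\epsilon>0$ and set $a_k=\tilde{P}_k(\mathbb{B}^*(\epsilon))$. I would split one iteration into a \emph{selection} map $p\mapsto p^{\mathrm{sel}}$ and a \emph{mutation} map $p^{\mathrm{sel}}\mapsto\tilde{P}_{k+1}$. Because $\xi_j^k$ is independent of $\zeta_j^k$ with $\mathbb{E}\exp(\xi_j^k)=1$ (Assumption~\ref{asm:main}(a)), the conditional expectation of the softmax reweighting factor from \eqref{eq:compute-weights} is exactly $\exp(f(\zeta))$, so as $N_k\to\infty$ one has $p^{\mathrm{sel}}(d\zeta)\propto\exp(f(\zeta))\,p(d\zeta)$, the finite-population discrepancy being $O(N_k^{-1/2})$ (the finite-sample lemma of the appendix) and hence vanishing since $N_k\to\infty$ (Assumption~\ref{asm:main}(e)). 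A Cauchy--Schwarz bound shows selection raises the $\exp f$-average, but the decisive consequence is that it multiplies the odds ratio $a_k/(1-a_k)$ by a factor at least $\exp(\gamma_\epsilon-\eta_\epsilon)$, where $\eta_\epsilon=\sup_{\mathbb{B}^*(\epsilon)}(f^*-f)\to 0$ as $\epsilon\downarrow 0$ by continuity of $f$ near $\Upsilon$ (Assumption~\ref{asm:main}(c)) and $\gamma_\epsilon=f^*-\sup_{\mathcal{Z}\setminus\mathbb{B}^*(\epsilon)}f>0$ by compactness; for $\epsilon$ small (with a suitably nested smaller radius) this factor exceeds one. For mutation, since $\bar{Q}_k(z,\cdot)\Rightarrow\delta_z$ with $\bar{q}_k\le L_k$ and $\mu(\mathbb{B}^*(\epsilon))>0$ (Assumption~\ref{asm:main}(d)), the mass leaking out of a slightly smaller ball tends to $0$ as $k\to\infty$. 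Combining the geometric amplification of the odds ratio with the vanishing leakage, and invoking Assumption~\ref{asm:main}(h) to guarantee $a_k\ge\delta>0$ for all large $k$ (so the amplification has a positive floor on which to act), a $\limsup$ argument forces $a_k\to 1$. Letting $\epsilon\downarrow 0$ and using Assumption~\ref{asm:main}(c) gives (a).

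\emph{Uniqueness of the limit, and the main obstacle.} On $\Upsilon$ we have $f\equiv f^*$, so $\exp f$ is constant there; since by step (a) almost all mass of $\tilde{P}_k$ lies in any fixed neighbourhood of $\Upsilon$ for $k$ large, the selection map acts there as the identity up to a negligible error, while the mutation map $\bar{Q}_k\to\mathrm{id}$ with cumulative drift and spread controlled by the summability of $\{\alpha_k\}$ and $\{\iota_k\}$ (the ``sum over time is bounded'' hypothesis accompanying \eqref{eq:Q_normal-main}). A further argument — showing that the tail increments $\sum_{k\ge K}d(\tilde{P}_{k+1},\tilde{P}_k)\to 0$ in a metric $d$ for weak convergence, or that all subsequential limits coincide by the asymptotic triviality of the dynamics on $\Upsilon$ — then upgrades relative compactness to $\tilde{P}_k\Rightarrow\lambda$ for a single $\lambda$, which by (a) lives on $\Upsilon$. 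I expect \textbf{this last step, together with the uniformity needed in step (b), to be the main obstacle}: one must ensure the $O(N_k^{-1/2})$ fluctuations of the softmax weights never overwhelm the $(\exp(\gamma_\epsilon-\eta_\epsilon)-1)$ selection gain during the transient phase, and that the ``selection $\approx$ identity near $\Upsilon$'' errors accumulate to something controllable; Assumption~\ref{asm:main}(h), the summability of $\{\alpha_k\},\{\iota_k\}$, and $N_k\to\infty$ are exactly the hypotheses that make this go through, and Corollary~\ref{cor:normal} supplies verifiable conditions for the non-constructive Assumption~\ref{asm:main}(h). The guidance signal itself is a subsidiary technical point, handled by dominated convergence in the marginal recursion using Assumptions~\ref{asm:main}(d) and (f).
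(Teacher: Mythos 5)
Your proposal takes a genuinely different route from the paper's, and the comparison is worth making. The paper's proof is subsequence-based: by Prohorov's theorem it extracts a weakly convergent subsequence $\tilde{P}_{k_i}\Rightarrow\kappa$, then applies the marginal recursion of Lemma~\ref{lem:update-distr-guide} (whose error term $\Delta_{N_k}$ vanishes at rate $N_k^{-1/2}$) together with $Q_k(z,\varrho,\cdot)\Rightarrow\delta_z$ to conclude that the $m$-step-shifted subsequence $\{\tilde{P}_{k_i+m}\}$ converges to the $m$-fold exponentially tilted measure $\vartheta_m(d\zeta)\propto\exp(mf(\zeta))\kappa(d\zeta)$; Lemma~\ref{lem:soft-max-converge-guide} (a Laplace-type concentration statement, which is where Assumptions~\ref{asm:main}(c) and (h) do their work) gives $\vartheta_m\Rightarrow\lambda$ as $m\to\infty$, and a diagonalization finishes. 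Your concentration step replaces this iterated-tilting-plus-Laplace engine with a per-iteration odds-ratio amplification of $a_k=\tilde{P}_k(\mathbb{B}^*(\epsilon))$. That is a legitimate alternative for part (a), and in some ways more quantitative: your gap $\gamma_\epsilon-\eta_\epsilon$ is exactly the constant $C_2-C_1$ hidden inside the proof of Lemma~\ref{lem:soft-max-converge-guide}, and your marginalized kernel $\bar{Q}_k(z,d\zeta)=\int Q_k(z,\varrho,d\zeta)M_k(z,d\varrho)$ is precisely how the guidance appears in \eqref{eq:update-eq-distr}. Your identification of the softmax expectation with $\exp(f(\zeta))$ via $\mathbb{E}\exp(\xi)=1$, and of the $O(N_k^{-1/2})$ finite-population error, matches Lemma~\ref{lem:update-distr-guide}.

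The genuine gap is in your step (b), and you have correctly located it yourself. First, the uniqueness sketch leans on $\sum_k\iota_k<\infty$ and $\sum_k\alpha_k<\infty$; those are hypotheses of Corollary~\ref{cor:normal}, not of Theorem~\ref{thm:conv-general}. Under the theorem's assumptions you only have $Q_k(z,\varrho,\cdot)\Rightarrow\delta_z$ with no summable rate, so the strategy of showing $\sum_{k\geq K}d(\tilde{P}_{k+1},\tilde{P}_k)\to 0$ in a metric for weak convergence is not available, and the Cauchy argument does not close. The paper sidesteps this entirely: once a subsequence converges to $\lambda$ supported on $\Upsilon$, where $\exp(f)$ is constant, $\lambda$ is a fixed point of the selection-plus-mutation map in the limit, so the shifted subsequence converges to the same $\lambda$; combined with the diagonalization over the tilting index $m$, this is how subsequential convergence is upgraded without any summability. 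Second, and for the same reason, your framing of the guidance as a "vanishing perturbation because it is scaled by $\alpha_k\to 0$" again imports the structure of \eqref{eq:Q_normal-main}; in the general theorem the guidance is neutralized solely by Assumption~\ref{asm:main}(d) (weak convergence of $Q_k(z,\varrho,\cdot)$ to $\delta_z$) together with (f), which is what your own reduction to $\bar{Q}_k$ actually needs. If you rewrite step (b) along the paper's fixed-point lines (or restrict your claim to the setting of Corollary~\ref{cor:normal}), the rest of your argument stands as a valid alternative proof of the concentration part.
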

The key stage of proof is to show that there exists a subsequence in $\{\tilde{P}_k\}$ that weakly converges to  \begin{equation*}
    \vartheta_m(d\zeta)=\frac{\exp(mf(\zeta))\kappa(d\zeta)}{\int\exp(mf(z))\kappa(dz)}
\end{equation*}
for some measure $\kappa$, where $m$ is the index of the subsequence. The above distribution is effectively a softmax function over function values and converges to the extrema as $m\rightarrow\infty$. 

All the conditions in Assumption \ref{asm:main} are natural  with the exception of $(h)$, which requires some further justification. In the following, we present a sufficient condition for $(h)$ with a proper design of $Q_k(z,\varrho,d\zeta)$, which applies to the case of noisy function evaluations; see the appendix for another example in the case of noiseless function evaluations.

\begin{corollary}\label{cor:normal}
Under Assumption \ref{asm:main} (except $(h)$), and let the transition probability $Q_k(z,\varrho,d\zeta)$ be
\begin{equation}\label{eq:Q_normal}
    Q_k(z,\varrho,d\zeta)=c_k(z,\varrho)\psi((\zeta-z-\alpha_k\varrho)/\iota_k)\mu(d\zeta),
\end{equation}
where $c_k(z,\varrho)=({\int\psi((\zeta-z-\alpha_k\varrho)/\iota_k)\mu(d\zeta)})^{-1}$ is the normalization term, $\psi$ is a continuous symmetrical finite density, and
\begin{equation*}
    \iota_k>0,\quad\sum_{k=1}^\infty \iota_k<\infty,\quad\alpha_k\geq 0,\quad\sum_{k=1}^\infty \alpha_k<\infty.
\end{equation*}
Then, there exists a sequence of natural numbers $\{N_k\}$ such that $\{\tilde P_k\}$ weakly converges to $\lambda$.
\end{corollary}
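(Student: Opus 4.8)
The plan is to verify that the specific transition kernel $Q_k(z,\varrho,d\zeta)$ in \eqref{eq:Q_normal} satisfies all the structural hypotheses of Assumption \ref{asm:main} that involve $Q_k$ — namely $(d)$ — and, crucially, that it implies the non-constructive condition $(h)$; then Theorem \ref{thm:conv-general} applies verbatim. Concretely, I would first record that $Q_k(z,\varrho,\cdot)$ is absolutely continuous with respect to $\mu$ with density $q_k(z,\varrho,\zeta)=c_k(z,\varrho)\psi((\zeta-z-\alpha_k\varrho)/\iota_k)$; since $\psi$ is a finite density (bounded) and $\mathcal{Z}$ is compact, $\psi(\cdot/\iota_k)\leq \|\psi\|_\infty$ gives a uniform bound $L_k$, while $c_k(z,\varrho)$ is bounded above because the normalizing integral is bounded below uniformly in $z,\varrho$ (here one uses that $\alpha_k\varrho$ stays in a bounded set — $\varrho\in\mathcal{Z}'$, itself bounded by Assumption \ref{asm:main}(f) via $\sup_{z,\varrho}M_k(z,d\varrho)<\infty$ and the construction of $\phi$ — together with compactness of $\mathcal{Z}$ and positivity of $\psi$ on a neighborhood of the origin). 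For the weak-convergence-to-$\delta_z$ part of $(d)$: as $\iota_k\to 0$ and $\alpha_k\to 0$ (both forced by the summability hypotheses), the density $c_k(z,\varrho)\psi((\zeta-z-\alpha_k\varrho)/\iota_k)$ concentrates mass in any neighborhood of $z$; a standard approximate-identity argument shows $Q_k(z,\varrho,d\zeta)\Rightarrow\delta_z(d\zeta)$ uniformly enough for our purposes.

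The heart of the argument is establishing $(h)$: for every $\epsilon>0$ there are $\delta>0$ and $\bar k$ with $\tilde P_k(\mathbb{B}^*(\epsilon))\geq\delta$ for all $k\geq\bar k$. The idea is to propagate a lower bound across iterations using the update formula of Lemma \ref{lem:distr-update-rule-guide}. Fix $\epsilon>0$. From \eqref{eq:distr-update-rule}, the probability a fresh candidate lands in $\mathbb{B}^*(\epsilon)$ is, up to the reward-softmax weights $\nu_j^k$ (which are positive and sum to one), a convex combination of $Q_k(z_i,\varrho_i,\mathbb{B}^*(\epsilon))$. So it suffices to show that $Q_k(z,\varrho,\mathbb{B}^*(\epsilon))$ is bounded below by a strictly positive constant uniformly over $z\in\mathcal{Z}$, $\varrho\in\mathcal{Z}'$, and all large $k$. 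This is where the summability conditions $\sum_k\iota_k<\infty$ and $\sum_k\alpha_k<\infty$ are used differently from $(d)$: one cannot merely use a single-step bound because as $\iota_k\to 0$ the kernel $Q_k(z,\varrho,\cdot)$ puts vanishing mass on a fixed set $\mathbb{B}^*(\epsilon)$ unless $z$ is already close to $\Upsilon$. The correct route is the classical Solis–Wets / method-of-generations telescoping: show that if a point is within $\epsilon/2$ of $\Upsilon$ at step $k$, then because the total future displacement $\sum_{\ell\geq k}(\iota_\ell+\alpha_\ell)$ can be made arbitrarily small (tail of a convergent series), the kernel keeps it within $\epsilon$ of $\Upsilon$ with probability bounded below; combine with the fact (from the proof of Theorem \ref{thm:conv-general}, or directly from $(d)$, $(e)$ applied once) that $\tilde P_{\bar k}(\mathbb{B}^*(\epsilon/2))>0$ for some finite $\bar k$. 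Because $\psi$ is continuous and symmetric with $\psi(0)>0$, for small enough displacement the Radon–Nikodym density is bounded below on $\mathbb{B}^*(\epsilon)$ by a positive constant times $\psi$ evaluated near $0$, and integrating against $\mu$ — which satisfies $\mu(\mathbb{B}^*(\epsilon))>0$ by $(d)$ — yields the required uniform lower bound $\delta$.

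I would then state the corollary's conclusion: since \eqref{eq:Q_normal} satisfies Assumption \ref{asm:main}$(d)$ and implies $(h)$, and the remaining conditions $(a)$–$(c)$, $(e)$–$(g)$ are assumed, Theorem \ref{thm:conv-general} gives $\tilde P_k\Rightarrow\lambda$ for a suitable choice of $\{N_k\}$ (any sequence with $N_k\to\infty$ works for the weak-convergence claim; the ``there exists'' phrasing merely reflects that $(e)$ needs $N_k\to\infty$).

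The main obstacle I anticipate is the uniform-in-$z,\varrho$ lower bound on $Q_k(z,\varrho,\mathbb{B}^*(\epsilon))$ for large $k$. Two subtleties must be handled carefully: first, as $\iota_k\to 0$ the kernel degenerates, so the bound cannot hold for $z$ far from $\Upsilon$ — this is why $(h)$ must be proved as a propagated invariant (starting from a positive-mass base case) rather than a one-step estimate, and getting the telescoping of the tails $\sum_{\ell\geq k}\iota_\ell$ and $\sum_{\ell\geq k}\alpha_\ell$ to interlock with the geometry of $\mathbb{B}^*(\epsilon/2)\subset\mathbb{B}^*(\epsilon)$ requires care, especially since $\Upsilon$ need not be convex. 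Second, the guidance term $\alpha_k\varrho$ is random (its law $M_k(z,d\varrho)$ depends on the policy), so the bound must be uniform over the support of $\varrho$; here one leans on $\sup_{z,\varrho}M_k(z,d\varrho)<\infty$ and the boundedness of $\mathcal{Z}'$ to ensure $\|\alpha_k\varrho\|\leq \alpha_k\,\mathrm{diam}(\mathcal{Z}')\to 0$, so the guidance perturbation is asymptotically negligible and does not disrupt concentration near $z$. Once these are in place, the rest is bookkeeping against the already-proved Theorem \ref{thm:conv-general}.
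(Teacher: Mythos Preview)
Your overall strategy matches the paper's: verify that the specific kernel \eqref{eq:Q_normal} forces condition $(h)$ of Assumption~\ref{asm:main} by telescoping a lower bound on the mass near the optimal set, and then invoke Theorem~\ref{thm:conv-general}. The paper runs the telescoping on function-value level sets $A(\epsilon)=\{\zeta:f(\zeta)\ge f^*-\epsilon\}$ with an expanding sequence $A(\epsilon+\epsilon_k)$, whereas you use distance balls $\mathbb{B}^*(\epsilon)$; given the continuity in Assumption~\ref{asm:main}(c) these are interchangeable, so that difference is cosmetic. Your explicit check of $(d)$ is extra relative to the paper (which takes it as evident for this $Q_k$), but harmless.

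There is, however, a genuine gap in your treatment of $\{N_k\}$. You write that ``any sequence with $N_k\to\infty$ works'' and that the ``there exists'' in the statement merely echoes condition $(e)$. That is not correct, and it is precisely the point where the paper's proof does real work. The object you must control is the \emph{unconditional} marginal $\tilde P_{k+1}$, and by Lemma~\ref{lem:update-distr-guide} this equals the reward-weighted kernel integral \emph{plus} a signed error $\Delta_{N_k}(d\zeta)$ of total variation $O(N_k^{-1/2})$. Your conditional argument (``a convex combination of $Q_k(z_i,\varrho_i,\mathbb{B}^*(\epsilon))$'') ignores this term entirely. To obtain an inequality of the form $\tilde P_{k+1}(\text{target})\ge(1-\delta_k)\tilde P_k(\text{source})$ with $\sum_k\delta_k<\infty$---which is what makes the infinite product $\prod_k(1-\delta_k)$ positive and hence delivers $(h)$---one must absorb $\|\Delta_{N_k}\|_{\mathrm{TV}}$ into $\delta_k$. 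That forces a \emph{choice} of $N_k$ growing fast enough that $\sum_k N_k^{-1/2}<\infty$; a generic $N_k\to\infty$ (say $N_k=k$) fails. This is exactly why the corollary is stated as ``there exists a sequence $\{N_k\}$'' and why the paper's proof says ``We shall choose $\{N_k\}$ such that \ldots'' and then appeals to Lemma~\ref{lem:distr-update-rule-guide}. Once you reinstate this step---choose $N_k$ so that the finite-sample error is summable---your telescoping argument goes through and aligns with the paper's.
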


Our analysis accounts for the effect of trajectory-based guidance, which is a novel contribution to the ES literature. 
By examining the proof, we can relax the condition that $\sum_{k=1}^\infty \alpha_k<\infty$, i.e., continue applying the guidance without the need to diminish its impact in the long run, as long as the guidance signal ``approximately'' points to the global optima in the proximity (see the appendix for exact conditions). However, such guidance can be difficult to design or even verify in practice; thus, it is still advisable to relinquish human knowledge and let data drive the decision, eventually.

\section{Results from the CityLearn Challenge}
\label{sec:Experiments}

\emph{Challenge overview.} The competition has an online setup with a simulation period of 1 or 4 years, where agents exploit the best policies to optimize the coordination strategy. The goal of each agent is to minimize environmental costs, such as ramping costs, peak demands, 1-load factor, and carbon emissions. The state space contains information such as daylight hours, outdoor temperature, storage device state of charge (SOC), net electricity consumption of the building, carbon intensity of the power grid, among a total of 30 continuous states. The agent is allowed to control the charging/discharging actions of storage devices for domestic hot water (DHW), chilled water, and electricity (i.e., 3 continuous actions per building). The environment is seen as a blackbox to the agent as a standard RL setup, where the transition dynamics depend on the responses of various devices (e.g., air-to-water heat pumps, electric heaters) as well as the energy loads of buildings, which include space cooling, dehumidification, DHW demand, and solar generation. 

\emph{Evaluation.} The submission of each team is evaluated on a set of metrics, including: \emph{(1)} ramping: $\sum |e_t-e_{t-1}|$, where $e$ is the net electricity consumption at each time step; \emph{(2)} 1-load factor:  average net electricity load divided by  maximum electricity load; \emph{(3)} average daily peak demand; \emph{(4)} maximum peak electricity demand; \emph{(5)} total electricity consumed; \emph{(6)} carbon emissions. The competition evaluates performance by computing the ratio of costs with respect to a rule-based controller (RBC)---lower ratios indicate better performances.\footnote{Note that the RBC controller is ubiquitous in traditional building control systems and is a simple form of ``take action $a_h$ in hour $h$,'' where $a_h$ is a constant independent of current states except for the hour of the day ($h\in[24]$). } The average of the above metrics for the full simulated period is the \emph{total score}, while the average of the metrics \emph{(1)}-\emph{(4)} is \emph{coordination score}. The performance of the top 4 teams is listed in Table \ref{tab:performance}. Refer to \cite{vazquez2020citylearn} for more details on the contest. 

\begin{figure*}[t]
  \centering
  \includegraphics[width=1.61\columnwidth]{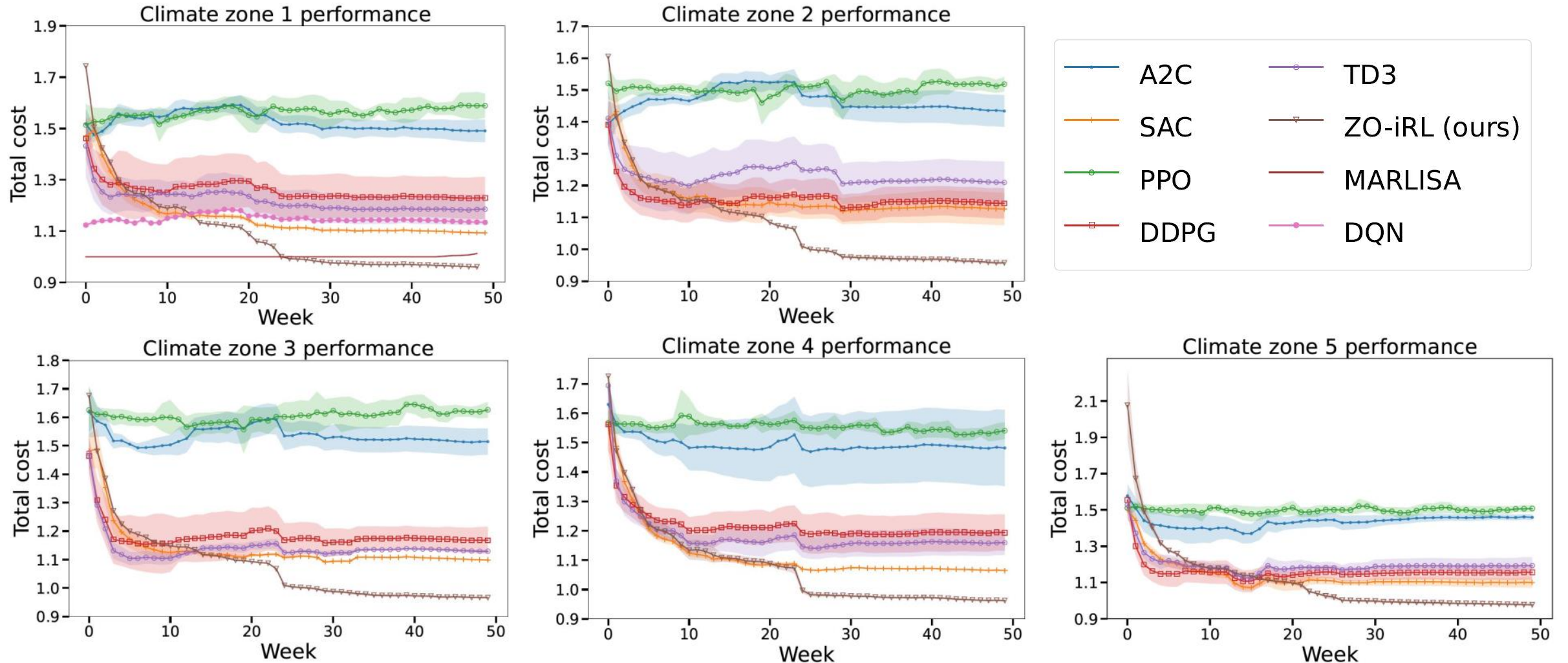}
\caption{Learning curves of ZO-iRL and baselines. We perform 10 runs on each baseline to obtain performance plots with standard deviations for Climate Zones 1 to 5.}
\label{fig:Learning curve}
\end{figure*}

\begin{table}[t]
\centering
\begin{adjustbox}{width=\columnwidth,center}
\begin{tabular}{@{}llllll@{}}
& 
  \multicolumn{1}{c}{\textbf{ZO-iRL}(ours)} &
  \multicolumn{1}{c}{\textbf{ICD-CA}} &
  \multicolumn{1}{c}{\textbf{IDLab}} &
  \multicolumn{1}{c}{\textbf{Breakfast Club}} \\ \toprule
\textbf{Total score}      

& \hspace{0.2cm}\textbf{0.944}
& \hspace{0.1cm}1.070
& 1.070
& \hspace{0.4cm}1.130
 \\
\textbf{Coord. score}     
& \hspace{0.2cm}\textbf{0.915}
& \hspace{0.1cm}1.107
& 1.098
& \hspace{0.4cm}1.095
 \\
\end{tabular}
\end{adjustbox}
\caption{Total and coordination scores of the top 4 teams in 2021 CityLearn Challenge.  }\label{tab:performance}
\end{table}

\begin{table*}[t]
\centering
\begin{adjustbox}{width=2\columnwidth,center}
\begin{tabular}{@{}llllllllll@{}}
& 
  
  \multicolumn{1}{c}{\textbf{SAC}} &
  \multicolumn{1}{c}{\textbf{A2C}} &
  \multicolumn{1}{c}{\textbf{DDPG}} &
  \multicolumn{1}{c}{\textbf{DQN}} &
  \multicolumn{1}{c}{\textbf{PPO}} &
  \multicolumn{1}{c}{\textbf{TD3}}&
  \multicolumn{1}{c}{\textbf{MARLISA}} &
  \multicolumn{1}{c}{\textbf{ZO-iRL}}  \\ \toprule
\textbf{ramping cost}      

& 1.145 (0.015)
& 1.189 (0.002)
& 1.174 (0.029)
& 1.302 (0.008)
& 1.638 (0.005)
& 1.178 (0.026)
& \underline{1.022} (0.010)
& \textbf{0.781} (0.005) 
 \\
\textbf{1-load factor}     

& 1.158 (0.002)
& 1.146 (0.002)
& 1.143 (0.007)
& 1.159 (0.003)
& 1.168 (0.003)
& 1.142 (0.004)
& \underline{1.026} (0.006)
& \textbf{1.010} (0.008) 
 \\
\textbf{avg. daily peak}   

& 1.180 (0.002)
& 1.184 (0.008) 
& 1.195 (0.011)
& 1.212 (0.002)
& 1.242 (0.001)
& 1.193 (0.006) 
& \underline{1.015} (0.001) 
& \textbf{0.996} (0.001) 
 \\
\textbf{peak demand}       

& 1.077 (0.008)
& 1.088 (0.007) 
& 1.100 (0.010)
& 1.115 (0.009)
& 1.132 (0.006)
& 1.098 (0.013) 
& \underline{1.000} (6e-5)
& \textbf{0.962} (0.005) 
 \\
\textbf{net electric. peak} 

& \underline{0.995} (0.001)
& \textbf{0.994} (0.002) 
& 0.997 (8e-4)
& 0.997 (1e-4)
& 1.003 (7e-5)
& 0.997 (7e-4) 
& 1.000 (5e-4)
& 1.006 (2e-4) 
 \\
\textbf{carbon emissions}     
&   \textbf{1.000} (0.001)
& \underline{1.000} (0.002)    
& 1.003 (7e-4)
& 1.005 (1e-4)
& 1.009 (7e-5)
&1.004 (7e-4)  
&  1.001 (5e-4)  
&   1.007 (2e-4)    \\  \hline
\textbf{total score}       
& 1.092 (0.003)
& 1.101 (0.003)
& 1.102 (0.008)
& 1.132 (0.002)
& 1.199 (0.001)
& 1.102 (0.006)
& \underline{1.011} (0.002)
&\textbf{0.962} (0.001)
\end{tabular}
\end{adjustbox}
\caption{Comparison with baselines: SAC \cite{kathirgamanathan2020centralised} and MARLISA \cite{vazquez2020marlisa} have been officially implemented for CityLearn; other baselines are implemented by \cite{raffin2019stable}. The reported values are the average and standard deviation (in brackets) across 10 independent runs on Climate Zone 1 data.}\label{tab:overall}
\end{table*}

\textbf{ZO-iRL: zeroth-order implicit RL.} \footnote{We name our method ZO-iRL because the policy is implicitly determined by solving an optimization problem and the learning algorithm is zeroth-order in an RL setting.} As our method is designed for single-agent episodic RL, we first reduce the original task that consists of a single period of 1 or 4 years into episodes of 24 hours. We use the per-step reward $-\max(0,e_t)^3$ as recommended by \cite{vazquez2020citylearn}, where $e_t$ is the net electricity consumption (or generation if $e_t<0$). This reward favours consumption patterns that are smoothly averaged without demand peaks, aligned with multiple metrics used in the evaluation, such as the 1-load factor and peak electricity demand. Another reduction is from multi-agent RL to single-agent RL, where each building's policy is updated independently, reducing the problem to decentralized control with additive rewards; such a reduction is computationally efficient for large-scale problems \cite{de2021constrained}. We omit the notational dependence on candidate $j$ and iteration $k$ when presenting the method.

\emph{Optimization planner.} We instantiate the optimization in \eqref{eq:sol-map} as follows.
The planned states $\bar{s}_t$ consist of state variables such as net electricity consumption and SOCs of storage devices; the action $\bar{a}_t\in\mathcal{A}$ is the action of the MDP; the surrogate reward $$\bar{r}_{t}(\bar{s}_t;\zeta)=-|e_t - e_{t-1}| - \theta_t e_t$$
is a combination of the negated ramping cost and the ``virtual'' electricity cost, where $\zeta=\{\theta_t\in[0,5]\}_{t\in[24]}$ can be viewed as virtual electricity prices to be learned to encourage desirable consumption patterns (e.g., load flattening and smoothing). Intuitively, a higher value of $\theta_t$ discourages planned electricity consumption in the corresponding hour $t$. 

The inequalities are grouped into technological constraints (e.g., maximum/minimum cooling power) and constraints on states and actions. The equalities are grouped into physics accounting for energy balances (i.e., consumption equal to supply) and technology (e.g., SOC update rules). Further details are provided in the appendix. Note that to set up the optimization \eqref{eq:sol-map}, we also need to provide predictions of energy demands and solar generation. For simplicity, our predictors are based on a simple averaging scheme that takes the average in the corresponding hours of the last 2 weeks of data; thus, there are no specific parameters to learn. 

\emph{Transition and guidance.} We use \eqref{eq:Q_normal-main} as the transition probability, with variance $\iota_k=0.4/k^2$ that is initialized to 0.4 and decreases by $k^2$ in each iteration. The guidance signal $\varrho$ is computed as follows. By the end of each episode, we examine the net electricity usage in the past 24 hours, $e_t$ for $t\in[24]$ and find the top 2 hours with the most electricity usage, denoted by $t_1$ and $t_2$. Then, the guidance signal $\varrho_t$ is $0.02$ if $t\in\{t_1,t_2\}$ and $-0.04/22$ otherwise. 
Note that we have centered the signal ($\sum_t \varrho_t=0$) by assigning negative values for hours other than peaks. We choose $\alpha_k=1$ for all $k$ over the entire 4-year period, as there is no training phase in the CityLearn Challenge and we prefer to adapt quickly during the test phase; this is not a violation of our theory, as we can choose to diminish $\alpha_k$ after a while to still satisfy the condition $\sum_{k=1}^\infty \alpha_k<\infty$. 

\textbf{Results.} For baselines, we use the implementation of \cite{raffin2019stable} with the default ADAM optimizer, where the policy is an NN architecture with tanh activation and two layers of 256 units each. From Table \ref{tab:overall}, we see that \mbox{ZO-iRL} has achieved the lowest cost ratios (i.e., best scores) of all, which is consistent with the official result of the competition (Table \ref{tab:performance}). In particular, as shown in Fig.~\ref{fig:Learning curve}, ZO-iRL is able to find a good policy in the first few months, while baselines seem to struggle; we speculate that more samples would eventually improve the performance of baselines, and all methods may benefit from schemes to handle the potentially nonstationary environment due to seasonal patterns. 
\section{Conclusion and future directions}
\label{sec:Conclusion}
We presented a novel adaptive optimization framework that has been shown to be very effective for energy storage management. Using solution functions as policies offers a promising way to introduce data-driven algorithms into the real world where convex optimization has been widely adopted. To adapt the optimization parameters, we developed an evolutionary search algorithm that can incorporate insights from control trajectory data as guidance for parameter updates. The method outperforms several baselines and ranked first in the latest 2021 CityLearn Challenge. Some potential future directions could be to extend the proposed framework to other methods such as Bayesian optimization or first-order methods such as actor-critic.
 


\section{Acknowledgments}

We would like to thank all the ZORL/ROLEVT team members at the CityLearn Challenge 2021: Qasim Wani, Zhiyao Zhang, and Mingyu Kim. We would also like to thank the organizers of the CityLearn Challenge: Dr. José R. Vázquez-Canteli, Dr. Zoltan Nagy, Dr. Gregor Henze, and Sourav Dey for organizing this competition and inspiring novel RL-based approaches to the contemporary challenges facing the power and energy sector. This project is funded by NSF, the Commonwealth Cyber Initiative (CCI), C3.ai Digital Transformation Institute, and the U.S. Department of Energy.

\bibliography{references.bib}
\newpage
\appendix
\onecolumn

\section{Proof of results in the main paper}

\subsection{Formalism of the guidance signal}

The formalism of the guidance signal requires some basics from random process and measure theory (interested readers are referred to \cite{hajek2015random}). We keep our presentation minimal but sufficient enough to carry out the analysis. Consider a stochastic process $(S_t^{j,k},A_t^{j,k})_{t\in[T]}$ defined by policy $\pi_{\zeta_j^k}$ interacting with the MDP environment, where $S_t^{j,k}$ and $A_t^{j,k}$ are random variables representing the state and action at time $t$. Let $\mathcal{I}_t=(\mathcal{S}\times\mathcal{A})^t$ be the set of possible histories up to time step $t$ within an episode, and $I_t^{j,k}\coloneqq(S_1^{j,k},A_1^{j,k},...,S_t^{j,k},A_t^{j,k})\in\mathcal{I}_t$ is a random vector taking values in $\mathcal{I}_t$ containing all state-action pairs observed up to step $t$. Denote by $\mathcal{F}_t^{j,k}$ a non-decreasing sequence of $\sigma$-algebra (a \emph{filtration}) generated by $I_t^{j,k}$. Then, the guidance signal $\varrho_j^k\in\mathcal{Z}'$ is a random variable adapted to the filtration $\mathcal{F}_T^{j,k}$, i.e., $\varrho_j^k$ is $\mathcal{F}_T^{j,k}$-measurable, with probability measure $M_k(\zeta_j^k,d\varrho)$ associated with a properly defined probability space, the existence of which is ensured by the Ionescu-Tulcea theorem. Note that the distribution of $\varrho_j^k$ depends on $\zeta_j^k$, since the stochastic process $I_t^{j,k}$ is determined by the policy $\pi_{\zeta_j^k}$, but is conditionally independent of all other candidates $\zeta_{j'}^k$ for $j'\neq j$.

\subsection{Proof of Lemma \ref{lem:continuity}}

Let $\Phi(s_t,\zeta)$ represent the feasible set of \eqref{eq:sol-map}. By Assumption \ref{asmptn:uniqueness}, $\Phi(s_t,\zeta)$ is convex for fixed $s_t$ and $\zeta$ and has a nonempty interior. This implies that $\Phi(s_t,\zeta)$ is continuous in $s_t$ and $\zeta$ \cite[example 5.10]{rockafellar2009variational}. Hence, by Berge maximum theorem \cite{berge1997topological}, $\pi_\zeta(s_t)$ is upper hemicontinuous in $\zeta$ for fixed $s_t\in\mathcal{S}$. However, we know that $\pi_\zeta(s_t)$ contains a single point due to the strict convexity of the objective function. Thus, for fixed $s_t\in\mathcal{S}$, $\pi_\zeta(s_t)$ is a single-valued function continuous in its parameter $\zeta$.

\subsection{Proof of Theorem \ref{thm:conv-general}}
 
Select from $\{\tilde{P}_k\}$ a weakly convergent subsequence $\{\tilde{P}_{k_i}\}$, which is possible due to Prohorov's theorem \cite[Ch. 6]{billingsley2013convergence}, and denote the limit by $\kappa(d\zeta)$. By Lemma \ref{lem:update-distr-guide}, we have that
\begin{equation}\label{eq:thm-update-P}
    \tilde{P}_{k+1}(d\zeta)=\left(\int\tilde{P}_k(dz)\exp(f(z))\right)^{-1}\int \tilde{P}_k(dz)\exp(f(z))\Big(Q_k(z,\varrho,d\zeta)M_k(z,d\varrho)+\Delta_{N_k}(d\zeta)\Big).
\end{equation}
By Assumption \ref{asm:main} $(d)$ and $(h)$, it follows that the subsequence $\{\tilde{P}_{k_i+1}\}$ weakly converges to the distribution $\vartheta_1(d\zeta)=c_1\exp(f(\zeta))\kappa(d\zeta)$, where $c_1$ is the normalization constant. Similarly, the subsequence $\{\tilde{P}_{k_i+m}\}$ weakly converges to the distribution \begin{equation*}
    \vartheta_m(d\zeta)=\frac{\exp(mf(\zeta))\kappa(d\zeta)}{\int\exp(mf(z))\kappa(dz)},
\end{equation*}
which, by Lemma \ref{lem:soft-max-converge-guide}, converges to $\lambda$. Thus, by the standard  diagonalization argument \cite{billingsley2013convergence}, we can show that there exists a subsequence $\{\tilde{P}_{k_j}\}$ that weakly converges to $\lambda$. Applying Lemma \ref{lem:update-distr-guide} again yields that $\{\tilde{P}_{k_j+1}\}$ converges to the same limit. Thus, any subsequence of $\{\tilde{P}_k\}$ converges to this limit, and the same holds for the sequence itself.

\subsection{Proof of Corollary \ref{cor:normal}}

Under Assumption \ref{asm:main} (except $(h)$), the distributions \eqref{eq:update-eq-distr} have continuous densities with respect to the Lebesgue measure. Let $A(\epsilon)=\{\zeta\in\mathcal{Z}:f(\zeta)\geq f^*-\epsilon\}$. By \eqref{eq:Q_normal} and Lemma \ref{lem:distr-update-rule-guide}, we have that $\tilde P_k(d\zeta)>0$ for any $k\in\mathbb{N}$. Fix an arbitrary $\delta>0$.  We shall choose $\{N_k\}$ such that for any $k\geq k_n$ and $\epsilon>0$, the following holds  
\begin{equation}
    \label{eq:improve-A}
    \tilde P_{k+1}(A(\epsilon+\epsilon_k))\geq (1-\delta_k)\tilde P_{k}(A(\epsilon)),
\end{equation}
where 
\begin{equation}
    \label{eq:cond_delta}
    0<\delta_k<1 \quad\text{for $k\in\mathbb{N}$}, \qquad\quad\sum_{k\in\mathbb{N}} \delta_k<\infty,
\end{equation}
and $\epsilon_k\geq 0$ are determined in terms of $\iota_k$, $\alpha_k$, and the sizes of the support of density $\psi$,
\begin{equation}
\label{eq:cond_eps}
    \sum_{k=1}^\infty\epsilon_k\leq\mathrm{constant}\sum_{k=1}^\infty\iota_k<\infty .
\end{equation}
Such sequence of $\{N_k\}$ and $k_n$ exist by Lemma \ref{lem:distr-update-rule-guide}, the finiteness of $\psi$, and the condition that $\sum_{k=1}^\infty\alpha_k<\infty$. Next, select $k_o\geq k_n$ such that $$\sum_{k=k_o}^\infty\epsilon_k<\frac{1}{2}\delta, $$
and let $\delta_1=\tilde P_{k_o}(A(\delta/2)).$ Then, for any $k\geq k_o$, we have
\begin{align*}
    \tilde P_{k+1}(A(\delta))&\geq \tilde P_{k_o}(A(\delta/2+\sum_{i=k_o}^k\delta_i))\prod_{i=k_o}^k(1-\delta_i) \\
    &\geq \delta_1\prod_{i=k_o}^\infty(1-\delta_i)\\
    &>0
\end{align*}
where the last inequality is implied by \eqref{eq:cond_delta}. The proof is complete. 

\textbf{Remarks on the guidance signal.} From the proof of Corollary \ref{cor:normal}, it can be observed that we can relax the condition that $\sum_{k=1}^\infty \alpha_k<\infty$ as long as the guidance signal $\alpha_k\varrho^k$ is chosen in such a way that \eqref{eq:improve-A}, \eqref{eq:cond_delta}, and \eqref{eq:cond_eps} are satisfied. This means that we can continue applying the guidance signal without the need to diminish its impact in the long run. However, \eqref{eq:improve-A} is difficult to ensure, as it requires designing a guidance that always points to the global optimal. Therefore, in practice, it is recommended to diminish the effect of guidance and eventually let the data drive the decision.

\subsection{An example of transition probability with noiseless function evaluations}

\begin{corollary}\label{cor:better}
Under Assumption \ref{asm:main} (except for $(h)$), and further assume that $f$ can be evaluated without noise (i.e., $\xi=0$). Let the transition probability $Q_k(z,\varrho,A)$ be defined by
\begin{align}
    Q_k(z,\varrho,A)=&\int1_{\left\lbrace \zeta\in A,f(z)\leq f(\zeta)\right\rbrace} T_k(z,\varrho,d\zeta)\nonumber\\
    &+1_{\left\lbrace z\in A\right\rbrace}\int1_{\left\lbrace f(\zeta)< f(z)\right\rbrace} T_k(z,\varrho,d\zeta),
    \end{align}\label{eq:Q_better-main}
where $\{T_k(z,\varrho,d\zeta)\}$ weakly converges to $\delta_z(d\zeta)$ for all $z,\varrho\in\mathcal{Z}$. Then, there exists a sequence of natural numbers $N_k$ such that the sequence of distributions $\{\tilde P_k\}$ weakly converges to $\lambda$ for $k\rightarrow\infty$.
\end{corollary}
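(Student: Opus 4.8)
The plan is to follow the same route as the proof of Corollary~\ref{cor:normal}: all of Assumption~\ref{asm:main} is assumed except condition~$(h)$, so it suffices to verify that the elitist kernel $Q_k$ in the statement of Corollary~\ref{cor:better} forces $(h)$, after which Theorem~\ref{thm:conv-general} delivers $\tilde P_k\Rightarrow\lambda$. First I would record the ``easy'' parts: for bounded continuous $g$ one has $\int g\,dQ_k(z,\varrho,\cdot)=g(z)+\int\big(g(\zeta)-g(z)\big)\mathbf 1_{\{f(\zeta)\geq f(z)\}}T_k(z,\varrho,d\zeta)$, and since $T_k(z,\varrho,\cdot)\Rightarrow\delta_z$ the correction vanishes, so $Q_k(z,\varrho,\cdot)\Rightarrow\delta_z$ for every $z,\varrho$, as needed. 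The only structural novelty is that $Q_k$ is now a mixture of an absolutely continuous part and an atom at $z$ (the ``rejection'' mass $\mathbf 1_{\{z\in A\}}\int\mathbf 1_{\{f(\zeta)<f(z)\}}T_k$); I would note that the weak-convergence machinery of Theorem~\ref{thm:conv-general} (Prohorov extraction, the Boltzmann-update identity \eqref{eq:thm-update-P}, the diagonalization) goes through for such a kernel with only cosmetic changes, the atom being harmless because it only reinforces concentration at $z$.

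The key estimate replacing the ``$\epsilon_k$-spreading'' of Corollary~\ref{cor:normal} is that with noiseless evaluations every superlevel set $A(\epsilon)=\{\zeta\in\mathcal Z:f(\zeta)\geq f^*-\epsilon\}$ is \emph{exactly absorbing} under $Q_k$: for every $z\in A(\epsilon)$, every $\varrho$, and every $k$, $Q_k(z,\varrho,A(\epsilon))=1$. Indeed, by the definition of $Q_k$ its mass on $A(\epsilon)$ equals the $T_k$-mass of $\{\zeta\in A(\epsilon):f(\zeta)\geq f(z)\}$ plus, since $z\in A(\epsilon)$, the $T_k$-mass of $\{\zeta:f(\zeta)<f(z)\}$; because $f(z)\geq f^*-\epsilon$ the first set coincides with $\{\zeta:f(\zeta)\geq f(z)\}$, and the two sets partition $\mathcal Z$, so the total is $1$. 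Mass entering $A(\epsilon)$ from the outside only helps and need not be tracked. This is the noiseless analogue of \eqref{eq:improve-A} with $\epsilon_k\equiv 0$.

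Next I would run the telescoping argument of Corollary~\ref{cor:normal} with this sharpening. The selection step reweights $\tilde P_k$ by $\exp(f)$; since $\exp(f)\geq\exp(f^*-\epsilon)$ on $A(\epsilon)$ and $<\exp(f^*-\epsilon)$ off it, a routine monotonicity estimate on the ratio $\int_{A(\epsilon)}\exp(f)\,d\tilde P_k\big/\int_{\mathcal Z}\exp(f)\,d\tilde P_k$ shows the reweighting cannot decrease the mass of $A(\epsilon)$. Combining this with absorption and folding the finite-sample deviation of Lemma~\ref{lem:distr-update-rule-guide} into a multiplicative factor, one can choose $\{N_k\}$ with $N_k\to\infty$ so that $\tilde P_{k+1}(A(\epsilon))\geq(1-\delta_k)\tilde P_k(A(\epsilon))$ for all $k$, with $0<\delta_k<1$ and $\sum_k\delta_k<\infty$, uniformly in $\epsilon$. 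Since $f$ is continuous near $\Upsilon$, $A(\epsilon)$ contains a ball around some point of $\Upsilon$, so condition~$(g)$ together with absorption gives $\tilde P_{k_o}(A(\epsilon))>0$ for some $k_o$; telescoping then yields $\tilde P_k(A(\epsilon))\geq\tilde P_{k_o}(A(\epsilon))\prod_{i\geq k_o}(1-\delta_i)>0$ for all $k\geq k_o$, which is precisely condition~$(h)$. Theorem~\ref{thm:conv-general} now applies.

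The hard part will be the same delicate point as in Corollary~\ref{cor:normal}: upgrading the raw $O(N_k^{-1/2})$ finite-sample error into the \emph{multiplicative} recursion $\tilde P_{k+1}(A(\epsilon))\geq(1-\delta_k)\tilde P_k(A(\epsilon))$ with a summable sequence $\{\delta_k\}$ that is independent of $\epsilon$ — a crude additive bound would fail for small $\epsilon$, since $\tilde P_1(A(\epsilon))\to 0$ as $\epsilon\to 0$. A secondary technicality is confirming that the atomic component of $Q_k$ does not disrupt the Boltzmann-update identity and the weak-limit identification in the proof of Theorem~\ref{thm:conv-general}. The elitist absorbing property itself is the genuinely new ingredient, and it is genuinely easy.
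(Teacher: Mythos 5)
Your proposal is correct and follows essentially the same route as the paper: verify Assumption~\ref{asm:main}$(h)$ via the absorbing (elitist) property of the kernel in \eqref{eq:Q_better-main} and then invoke Theorem~\ref{thm:conv-general}. The paper's own proof is in fact terser than yours---it simply notes that $\tilde P_1(\mathbb{B}^*(\epsilon))>0$ by condition $(g)$ and that the elitist kernel makes $\tilde P_k(\mathbb{B}^*(\epsilon))$ nondecreasing in $k$, without explicitly tracking the finite-sample term $\Delta_{N_k}$ or the $(1-\delta_k)$ telescoping that you (reasonably) import from the proof of Corollary~\ref{cor:normal}.
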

\begin{proof}
By Assumption \ref{asm:main} $(c)$ and $(g)$, we have that $\tilde{P}_1(\mathbb{B}^*(\epsilon))>0$ for any $\epsilon>0$. By \eqref{eq:Q_better-main}, we have that $$\tilde{P}_k(\mathbb{B}^*(\epsilon))\geq\cdots\geq \tilde{P}_1(\mathbb{B}^*(\epsilon))>0$$ for all $k\in\mathbb{N}$. Hence, Assumption \ref{asm:main} $(h)$ is satisfied. By Theorem \ref{thm:conv-general}, the claim is proved.
\end{proof}

\textbf{Remarks.} To implement the transition of \eqref{eq:Q_better-main}, one first needs to sample a variable $\zeta$ according to $T_k(z,\varrho,d\zeta)$ and observe its reward value $f(\zeta)$; then, the output is $\zeta$ if $f(\zeta)\geq f(z)$ and $z$ otherwise. Such a scheme depends crucially on a reliable way of comparing candidates (e.g., noiseless evaluation). 
\subsection{Supporting lemmas}

\begin{lemma}\label{lem:soft-max-converge-guide}
Under Assumption \ref{asm:main} $(b), (c)$, and $(d)$, the sequence of distributions 
$$\frac{\exp(kf(\zeta))\mu(d\zeta)}{\int\exp(kf(z))\mu(dz)}\Rightarrow \lambda(d\zeta),$$
i.e., weakly converges to $\lambda(d\zeta)$ for $k\rightarrow\infty$.
\end{lemma}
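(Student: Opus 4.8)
The plan is to show that the sequence of probability measures
\[
\pi_k(d\zeta) = \frac{\exp(kf(\zeta))\mu(d\zeta)}{\int\exp(kf(z))\mu(dz)}
\]
concentrates on the set $\Upsilon=\argmax_{\zeta} f$ as $k\to\infty$, which is essentially a Laplace-principle / Boltzmann-distribution argument. First I would fix an arbitrary $\epsilon>0$ and split $\mathcal{Z}$ into the ``near-optimal'' set $A(\epsilon)=\{\zeta\in\mathcal{Z}: f(\zeta)\geq f^*-\epsilon\}$ and its complement $A(\epsilon)^c=\{\zeta: f(\zeta)<f^*-\epsilon\}$. The key estimate is a lower bound on the normalizing constant: since $f$ is continuous on $\mathbb{B}^*(\epsilon)$ (Assumption \ref{asm:main}$(c)$) and $\mu(\mathbb{B}^*(\delta))>0$ for every $\delta>0$ (Assumption \ref{asm:main}$(d)$), for every $\eta>0$ the set $\{f>f^*-\eta\}$ has strictly positive $\mu$-measure, so $\int\exp(kf(z))\mu(dz)\geq \exp(k(f^*-\eta))\,\mu(\{f>f^*-\eta\})$. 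Pairing this with the trivial upper bound $\int_{A(\epsilon)^c}\exp(kf(\zeta))\mu(d\zeta)\leq \exp(k(f^*-\epsilon))\,\mu(\mathcal{Z})$ gives
\[
\pi_k\big(A(\epsilon)^c\big)\leq \frac{\exp(k(f^*-\epsilon))\,\mu(\mathcal{Z})}{\exp(k(f^*-\eta))\,\mu(\{f>f^*-\eta\})} = \mu(\mathcal{Z})\,\mu(\{f>f^*-\eta\})^{-1}\,\exp\big(-k(\epsilon-\eta)\big),
\]
and choosing $\eta=\epsilon/2$ shows $\pi_k(A(\epsilon)^c)\to 0$ geometrically. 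Hence $\pi_k(A(\epsilon))\to 1$ for every $\epsilon>0$, i.e. the mass escapes every fixed neighbourhood of the optimal level set $\mathbb{B}^*(\epsilon)\supseteq$ (a neighbourhood of) $\Upsilon$; here I would use that by continuity of $f$ on a neighbourhood of $\Upsilon$ and compactness of $\mathcal{Z}$ (Assumption \ref{asm:main}$(b)$), $A(\epsilon)\subseteq\mathbb{B}^*(\rho(\epsilon))$ for some $\rho(\epsilon)\downarrow 0$ as $\epsilon\downarrow 0$, so concentration on super-level sets translates into concentration on spatial neighbourhoods of $\Upsilon$.

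Next I would convert this concentration statement into weak convergence to a specific measure $\lambda$ supported on $\Upsilon$. Since $\mathcal{Z}$ is compact, $\{\pi_k\}$ is tight, so by Prohorov's theorem every subsequence has a further weakly convergent subsequence; let $\lambda$ be any subsequential limit. For any closed set $C$ with $C\cap\Upsilon=\emptyset$, compactness gives $\sup_{\zeta\in C}f(\zeta)<f^*$, so $C\subseteq A(\epsilon)^c$ for small enough $\epsilon$, whence $\lambda(C)\leq\limsup_k\pi_k(C)=0$ by the portmanteau theorem and the estimate above; therefore $\lambda(\Upsilon)=1$. To pin down the limit uniquely (so that the full sequence converges, not just subsequences), I would identify $\lambda$ via its action on continuous test functions: for $h\in C(\mathcal{Z})$,
\[
\int h\,d\pi_k = \frac{\int h(\zeta)\exp(kf(\zeta))\mu(d\zeta)}{\int\exp(kf(z))\mu(dz)},
\]
and the concentration bound lets one replace the integrals over $\mathcal{Z}$ by integrals over $A(\epsilon)$ up to an error that vanishes uniformly in $k$; as $\epsilon\downarrow 0$ this reduces the limit to a ratio of integrals living on arbitrarily small neighbourhoods of $\Upsilon$, which is exactly how $\lambda$ is defined in the paper's notation (a normalized restriction of the reference measure governed by the top level set of $f$). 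Since the limit of $\int h\,d\pi_k$ is the same for every subsequence, $\pi_k\Rightarrow\lambda$.

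The main obstacle is handling the case where $f$ is \emph{not} assumed continuous on all of $\mathcal{Z}$ — Assumption \ref{asm:main}$(c)$ only gives continuity on a neighbourhood $\mathbb{B}^*(\epsilon_0)$ of the optimal set — so some care is needed that the lower bound on the normalizer only invokes the behaviour of $f$ near $\Upsilon$ (which it does, since the bound uses $\mu(\{f>f^*-\eta\})>0$, and these superlevel sets are eventually contained in $\mathbb{B}^*(\epsilon_0)$ where $f$ is continuous, guaranteeing they are genuinely ``fat'' sets of positive $\mu$-measure rather than degenerate). A secondary subtlety is that $f^*=\max f$ is actually attained and $\Upsilon\neq\emptyset$: this follows from $\mathcal{Z}$ compact plus upper semicontinuity of $f$, but strictly speaking one should note that $f=\mathbb{E}[R(\pi_\zeta)]$ inherits continuity on $\mathbb{B}^*(\epsilon_0)$ and that $f^*$ is approached through that region, so the $\argmax$ is a nonempty compact subset of $\mathbb{B}^*(\epsilon_0)$. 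Once these measure-theoretic points are nailed down, the Laplace-type estimate is routine and the portmanteau/Prohorov wrap-up is standard.
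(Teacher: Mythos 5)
Your proposal is correct and rests on the same core Laplace-type estimate as the paper's proof: you compare the mass of $\exp(kf)$ on a neighbourhood of $\Upsilon$ (bounded below using $\mu(\mathbb{B}^*(\delta))>0$ from Assumption~(d) and continuity of $f$ there from Assumption~(c)) against the exponentially smaller mass where $f\leq f^*-\epsilon$, which is exactly the sandwich the paper performs with its sets $\mathbb{B}_1,\mathbb{B}_2,\mathbb{D}_1$ and the constants $C_1<C_2$. The packaging differs only cosmetically --- you work with superlevel sets $A(\epsilon)$ and close via Prohorov/portmanteau, whereas the paper verifies weak convergence directly on continuous test functions --- and both arguments are equally informal about pinning down the limit $\lambda$ when $\Upsilon$ is not a singleton, so there is no gap relative to the paper's own standard.
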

\begin{proof}

By the definition of weak convergence, it suffices to show that for any function $\Psi(\zeta)$ continuous on $\mathcal{Z}$, it holds that \begin{equation}
    \lim_{k\rightarrow\infty}c_k\int\exp(kf(\zeta))\Psi(\zeta)\mu(d\zeta)=\int\Psi(\zeta)\lambda(d\zeta),
    \label{eq:weak-conv}
\end{equation} 
where $c_k=1/\int\exp(kf(z))\mu(dz)$. To proceed,  Let $\mathbb{B}_i=\mathbb{B}(\epsilon_i)=\{\zeta\in\mathcal{Z}:\min_{\zeta'\in\Lambda}\|\zeta'-\zeta\|\leq \epsilon_i\}$ and $\mathbb{D}_i=\{\zeta\in\mathcal{Z}:\min_{\zeta'\in\Lambda}\|\zeta'-\zeta\|\geq \epsilon_i\}$, for $i=0,1,2$ and some $\epsilon_0,\epsilon_1,$ and $\epsilon_2$ to be determined.  For any $\delta>0$, by continuity of $\Psi$, there exists $\epsilon_0>0$ such that $|\Psi(z)-\int\Psi(\zeta)\lambda(d\zeta)|\leq \delta$ for all $z\in\mathbb{B}_0$. Choose some $\epsilon_1>0$ such that $\epsilon_1<\epsilon_0$. Then, we have
\begin{align*}
    &\left|c_k\int\exp(kf(\zeta))\Psi(\zeta)\mu(d\zeta)-\int\Psi(\zeta)\lambda(d\zeta)\right|\\
    &\leq c_k\int_{\mathbb{B}_1}\exp(kf(z))\left|\Psi(z)-\int\Psi(\zeta)\lambda(d\zeta)\right|\mu(d z)+c_k\int_{\mathbb{D}_1}\exp(kf(z))\left|\Psi(z)-\int\Psi(\zeta)\lambda(d\zeta)\right|\mu(d z)\\
    &\leq \delta\underbrace{ c_k\int_{\mathbb{B}_1}\exp(kf(z))\mu(d z)}_{(i)}+2\|\Psi\|_\infty \underbrace{c_k\int_{\mathbb{D}_1}\exp(kf(z))\mu(d z)}_{(ii)},
\end{align*}
where the first inequality is due to triangle inequality, and the second inequality is due to the choice of $\epsilon_1$ (also, recall that $\|\Psi\|_\infty=\sup|\Psi(z)|$). Hence, the lemma is proved if we can show that $(i)\rightarrow 1$ and $(ii)\rightarrow 0$ as $k\rightarrow\infty$.

To this end, let $C_1=\sup_{\zeta\in\mathbb{D}_1}f(\zeta)$. By Assumption \ref{asm:main} $(c)$, there exists $\epsilon_2$ such that $0<\epsilon_2<\epsilon_1$, and $$C_2=\inf_{\zeta\in\mathbb{B}_2}f(\zeta)>C_1.$$
For any $k>0$, we have $$\int_{\mathbb{B}_1}\exp(kf(z)-kC_1)\mu(dz)>\int_{\mathbb{B}_2}\exp(kf(z)-kC_1)\mu(dz)\geq \int_{\mathbb{B}_2}\exp(k(C_2-C_1))\mu(dz).$$
Thus, 
$$\frac{\int_{\mathbb{D}_1}\mu(dz)}{\int_{\mathbb{B}_2}\exp(k(C_2-C_1))\mu(dz)}\geq\underbrace{ \frac{\int_{\mathbb{D}_1}\exp(kf(z))\mu(dz)}{\int_{\mathbb{B}_1}\exp(kf(z))\mu(dz)}}_{(iii)}\geq 0.$$
By driving $k\rightarrow\infty$ to the limit and using the sandwich theorem, we have that $(iii)\rightarrow 0$. This immediately implies that $(i)\rightarrow 1$ and $(ii)\rightarrow 0$ as $k\rightarrow\infty$, hence concluding the proof.
\end{proof}

\begin{lemma}
\label{lem:update-distr-guide}
Let Assumption \ref{asm:main} $(a), (b)$, and $(d)$ be fulfilled. Then, the marginal distributions can be written as \begin{align}
        \tilde{P}_{k+1}(d\zeta)&=\left(\int\tilde{P}_k(dz)\exp(f(z))\right)^{-1}\int \tilde{P}_k(dz)\exp(f(z))Q_k(z,\varrho,d\zeta)M_k(z,d\varrho)+\Delta_{N_k}(d\zeta)\label{eq:update-eq-distr},
    \end{align}
    where  the signed measures $\Delta_{N_k}(d\zeta)$ converge to zero in variation for ${N_k}\rightarrow\infty$ with the rate ${N_k}^{-1/2}.$
\end{lemma}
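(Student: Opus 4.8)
\textbf{Proof proposal for Lemma \ref{lem:update-distr-guide}.}

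The plan is to start from the exact (non-asymptotic) recursion for the joint distribution $P_{k+1}$ given in Lemma \ref{lem:distr-update-rule-guide}, marginalize to a single coordinate, and then show that the random normalization $\beta(\omega_{N_k}) = 1/\sum_{j=1}^{N_k}\exp(f(z_j)+\xi_j)$ can be replaced by its "deterministic" counterpart $\bigl(N_k\int\tilde P_k(dz)\mathbb{E}\exp(\xi)\exp(f(z))\bigr)^{-1}$ up to an error that vanishes at rate $N_k^{-1/2}$ in total variation; using the normalization $\mathbb{E}\exp(\xi_j^k)=1$ from Assumption \ref{asm:main}$(a)$ this deterministic constant is exactly $\bigl(N_k\int\tilde P_k(dz)\exp(f(z))\bigr)^{-1}$, which after cancelling the factor $N_k$ against the $N_k$ identical summands in the numerator yields the stated form. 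Concretely, marginalizing \eqref{eq:distr-update-rule} to the first coordinate $\zeta_1=\zeta$ (the other $N_{k+1}-1$ coordinates integrate to one since each bracket is a probability measure), we get
\begin{equation*}
    \tilde P_{k+1}(d\zeta) = \int_{\Omega^{N_k}}\chi_k(d\omega_{N_k})\,\beta(\omega_{N_k})\sum_{i=1}^{N_k}\Lambda(z_i,\varrho_i,\xi_i,d\zeta),
\end{equation*}
and by exchangeability of the $N_k$ i.i.d.\ triples $(z_i,\varrho_i,\xi_i)$ under $\chi_k$ this equals $N_k\int_{\Omega^{N_k}}\chi_k(d\omega_{N_k})\,\beta(\omega_{N_k})\,\Lambda(z_1,\varrho_1,\xi_1,d\zeta)$.

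Next I would write $N_k\beta(\omega_{N_k}) = 1/\bar U_{N_k}$ where $\bar U_{N_k} = \frac{1}{N_k}\sum_{j=1}^{N_k}\exp(f(z_j)+\xi_j)$ is an empirical average of i.i.d.\ bounded random variables with mean $\bar u_k := \int\tilde P_k(dz)\exp(f(z))\cdot\mathbb{E}\exp(\xi) = \int\tilde P_k(dz)\exp(f(z))$, using independence of $z_j$ and $\xi_j$ and $\mathbb{E}\exp(\xi)=1$. Boundedness (Assumption \ref{asm:main}$(a)$: $\xi\in[-c_\xi,c_\xi]$, and $(b)$: $|f|\le c_f$) gives $0 < e^{-c_f-c_\xi}\le \bar U_{N_k}\le e^{c_f+c_\xi}$ and likewise $\bar u_k$ is bounded away from $0$, so $1/x$ is Lipschitz on the relevant range; hence $|1/\bar U_{N_k} - 1/\bar u_k|\le C|\bar U_{N_k}-\bar u_k|$, and $\mathbb{E}|\bar U_{N_k}-\bar u_k|\le (\mathrm{Var}\,\bar U_{N_k})^{1/2} = \mathrm{const}\cdot N_k^{-1/2}$ by the standard i.i.d.\ variance bound. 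Splitting the integrand as $\frac{1}{\bar u_k}\Lambda + \bigl(\frac{1}{\bar U_{N_k}}-\frac{1}{\bar u_k}\bigr)\Lambda$, the first piece gives exactly the claimed leading term $\bigl(\int\tilde P_k(dz)\exp(f(z))\bigr)^{-1}\int\tilde P_k(dz)\exp(f(z))Q_k(z,\varrho,d\zeta)M_k(z,d\varrho)$ (here $\varrho$ is integrated against $M_k(z,d\varrho)$, which is part of $\chi_k$, and I would make that integration explicit), and the second piece is the signed measure $\Delta_{N_k}(d\zeta)$. Its total variation is bounded by $\mathbb{E}\bigl[|1/\bar U_{N_k}-1/\bar u_k|\cdot \Lambda(z_1,\varrho_1,\xi_1,\mathcal Z)\bigr]$; since $\Lambda(z,\varrho,\xi,\mathcal Z) = \exp(f(z)+\xi)\le e^{c_f+c_\xi}$ is bounded, this is $\le \mathrm{const}\cdot\mathbb{E}|\bar U_{N_k}-\bar u_k| = O(N_k^{-1/2})$.

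The main obstacle — really the only non-routine point — is handling the coupling between the random weight $\beta(\omega_{N_k})$ and the summand $\Lambda(z_i,\varrho_i,\xi_i,\cdot)$: the $i$-th summand appears inside the denominator as well, so $z_i,\xi_i$ are not independent of $\beta$. I would deal with this by the exchangeability reduction above (reducing the sum of $N_k$ terms to $N_k$ copies of the $i=1$ term) and then, when replacing $1/\bar U_{N_k}$ by $1/\bar u_k$, keeping $(z_1,\varrho_1,\xi_1)$ fixed and only averaging the denominator's remaining $N_k-1$ terms — or more cleanly, just bounding the error crudely via the Lipschitz estimate on $1/x$ over the deterministic bounded range, which absorbs the dependence since the bound $\mathbb{E}|\bar U_{N_k}-\bar u_k|=O(N_k^{-1/2})$ holds regardless of any conditioning on a single coordinate (adjusting the constant). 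One should double-check that Assumption \ref{asm:main}$(f)$ ($\sup_{z,\varrho}M_k(z,d\varrho)<\infty$) and $(d)$ ($\sup q_k\le L_k$) are not actually needed here — they are used downstream in Theorem \ref{thm:conv-general}, not for this lemma, which only needs $(a),(b),(d)$ as stated (the $(d)$ invocation being just that $Q_k$ is a bona fide probability kernel so the bracket terms integrate to one).
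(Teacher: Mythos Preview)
Your reduction from Lemma~\ref{lem:distr-update-rule-guide} to the single-term expression $N_k\int\chi_k(d\omega_{N_k})\beta(\omega_{N_k})\Lambda(z_1,\varrho_1,\xi_1,d\zeta)$ via exchangeability is exactly what the paper does, and your Lipschitz-on-$1/x$ idea for handling the random normalizer is in the same spirit. The gap is in the sentence ``$\bar U_{N_k}$ is an empirical average of i.i.d.\ bounded random variables with mean $\bar u_k=\int\tilde P_k(dz)\exp(f(z))$.'' Under $\chi_k$ the $z_j$ are \emph{not} i.i.d.: the joint law $P_k$ is the \emph{unconditional} distribution, obtained by averaging the product law $\otimes_j p_k$ over the randomness in $p_k$ inherited from iteration $k-1$. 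Thus the $z_j$ are merely exchangeable, and $\bar U_{N_k}$ converges (conditionally, by the LLN) to the \emph{random} variable $\gamma=\int p_k(dz)\exp(f(z))$, not to the constant $\bar u_k=\mathbb{E}\gamma$. Consequently $\mathbb{E}|\bar U_{N_k}-\bar u_k|\to\mathbb{E}|\gamma-\bar u_k|$, which is in general strictly positive, so your error term $\Delta_{N_k}$ as written does not tend to zero at all, let alone at rate $N_k^{-1/2}$.

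The paper handles this by working with the random limit $\gamma$ rather than $\bar u_k$: it writes $|v_N(z)|=\big|\mathbb{E}[\psi/\gamma_N]-\mathbb{E}\psi/\mathbb{E}\gamma\big|$, inserts $\gamma$ to get $(1/\mathbb{E}\gamma)\big|\mathbb{E}[\psi(\gamma-\gamma_N)/\gamma_N]\big|$, and then bounds this by a constant times $\mathbb{E}|\gamma-\gamma_N|$, which is $O(N^{-1/2})$ by the central limit theorem for symmetrically dependent (exchangeable) random variables \cite{blum1958central}. So the missing ingredient in your argument is precisely this passage through the de Finetti-type random mean and the associated exchangeable CLT. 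A smaller point: your claim that Assumption~\ref{asm:main}$(d)$ is used only to ensure $Q_k$ is a probability kernel understates its role---the paper uses the density representation $Q_k(z,\varrho,d\zeta)=q_k(z,\varrho,\zeta)\mu(d\zeta)$ and the uniform bound $\sup q_k\le L_k$ explicitly, both to translate variation convergence into $\int|v_N|\mu(d\zeta)\to 0$ and to bound $\|\psi\|_\infty\le L_k\exp(c_f+c_\xi)$.
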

\begin{proof}
For notational simplicity, we use $N$ for ${N_k}$ throughout the proof. By Assumption \ref{asm:main} $(a)$ and Lemma \ref{lem:distr-update-rule-guide}, the marginal distribution $\tilde{P}_{k+1}(d\zeta)$ is given by:
\begin{align*}
    \tilde{P}_{k+1}(d\zeta)&=\int_{\Omega^N}\chi_k(d\omega_N) \left\lbrace\beta(\omega_N)\sum_{i=1}^N\Lambda(\zeta_i,\varrho_i,\xi_i,d\zeta)\right\rbrace\\
    &=\sum_{i=1}^N\int_{\Omega^N}\chi_k(d\omega_N) \beta(\omega_N)\Lambda(\zeta_i,\varrho_i,\xi_i,d\zeta)\\
    &=\int_{\Omega^N}\chi_k(d\omega_N) \left\lbrace N\beta(\omega_N)\right\rbrace\Lambda(\zeta_1,\varrho_1,\xi_1,d\zeta).
\end{align*}
which can be represented in the form of \eqref{eq:update-eq-distr} with
\begin{align*}
    \Delta_N(d\zeta)&={\int_{\Omega^N}\chi_k(d\omega_N) \Lambda(\zeta_1,\varrho_1,\xi_1,d\zeta) \left\lbrace N\beta(\omega_N)-\left(\int\tilde{P}_k(dz)\exp(f(z))\right)^{-1}\right\rbrace}\\
    &\quad+ {\left(\int\tilde{P}_k(dz)\exp(f(z))\right)^{-1} \left\lbrace \int_{\Omega^N}\chi_k(d\omega_N) \Lambda(\zeta_1,\varrho_1,\xi_1,d\zeta)-\int_\Omega\tilde{P}_k(dz)\exp(f(z))Q_k(z,\varrho,d\zeta)M_k(z,d\varrho)\right\rbrace}\\
    &=(i)+(ii)
\end{align*}
We shall show that $(i)\rightarrow 0$ in variation for $N\rightarrow\infty$ and $(ii)= 0$. 
Due to Assumption \ref{asm:main} $(d)$, the convergence of $(i)$ is equivalent to the fact that $\int|v_N(\zeta)|\mu(d\zeta)\rightarrow 0$, where
\begin{equation*}
    v_N(z)={\int_{\Omega^N}\chi_k(d\omega_N) \exp(f(\zeta_1)+\xi_1)q_k(\zeta_1,\varrho_1,z) \left\lbrace N\beta(\omega_N)-\left(\int\tilde{P}_k(dz)\exp(f(z))\right)^{-1}\right\rbrace}.
\end{equation*}
To proceed, let $\gamma_N=\frac{1}{N}\sum_{i=1}^N\exp(f(\zeta_i)+\xi_i)$ and $\psi(z)=\exp(f(\zeta_1)+\xi_1)q_k(\zeta_1,\varrho_1,z)$. Due to the symmetrical dependence of random elements $\zeta_1,...,\zeta_N$ and $\varrho_1,...,\varrho_N$, as well as the independence of $\xi_1,...,\xi_N$, the random variables $\gamma_N$ converge in mean for $N\rightarrow \infty$ to some random variable $\gamma$ in dependent of all $\gamma_i(\omega_i)$, $y_i=f(\zeta_i)+\xi_i$, for $i\in\mathbb{N}$, and $$\mathbb{E}\gamma=\mathbb{E}\exp(y_i)=\int\exp(f(\zeta)+\xi)\tilde{P}_{k}(d\zeta)F_{k}(d\xi).$$ Equivalently, for any $\delta_1>0$, there exists $N_\gamma(\delta_1)\geq 1$ such that $\mathbb{E}|\gamma_N-\gamma|<\delta_1$ for all $N\geq N_\gamma(\delta_1)$.  Then,
\begin{align}
    |v_N(z)|&=\left|\mathbb{E}\left(\frac{\psi(z)}{\gamma_N}\right)-\frac{\mathbb{E}\psi(z)}{\mathbb{E} \gamma}\right|\\
    &=\frac{1}{\mathbb{E}\gamma}\left|\mathbb{E}\left(\frac{\psi(z)\gamma}{\gamma_N}\right)-{\mathbb{E}\psi(z)}\right|\\
    &\leq \exp(c_f)\left|\mathbb{E}\left(\frac{\psi(z)|\gamma-\gamma_N|}{\gamma_N}\right)\right|\\
    &\leq \exp(2c_f)\|\psi\|_\infty\mathbb{E}|\gamma-\gamma_N|\\
    &\leq L_k \exp(3c_f+c_\xi)\mathbb{E}|\gamma-\gamma_N|\label{eq:var_bound_vN},
\end{align}
where the second equality is due to the independence of $\gamma$ from $\gamma_N$ and $\psi$, the first and second inequalities are due to $\gamma,\gamma_N\geq \exp(-c_f)$ (by Assumption \ref{asm:main} $(b)$), and the last relation is due to $\|\psi\|_\infty\leq \exp(f(\zeta)+\xi))L_k\leq L_k\exp(c_f+c_\xi)$. In order to show that $\int|v_N(z)|\mu(dz)\rightarrow 0$, we need to prove that for any $\delta>0$ and $z\in\mathcal{Z}$, there exists $N^\star(\delta,z)$ such that for $N\geq N^\star(\delta,z)$, there holds $|v_N(z)|\leq \delta$. This can hold if one takes $\delta_1=\delta L_k^{-1}\exp(-3c_f-c_\xi)$ and $N^\star(\delta,z)=N_\gamma(\delta_1)$.

Now, by \eqref{eq:var_bound_vN}, we have that $\int|v_N(\zeta)|\mu(d\zeta)\leq L_k \exp(3c_f+c_\xi)\mathbb{E}|\gamma-\gamma_N|$. From the central limit theorem for symmetrically dependent random variables (see \cite{blum1958central}), it follows that $\mathbb{E}|\gamma-\gamma_N|=\mathcal{O}(N^{-1/2})$. Consequently, we have shown that $\int|v_N(\zeta)|\mu(d\zeta)=\mathcal{O}(N^{-1/2})$.

To show that $(ii)= 0$, note that
\begin{align*}
    &\int_{\Omega^N}\chi_k(d\omega_N) \Lambda(\zeta_1,\varrho_1,\xi_1,d\zeta)-\int_\Omega\tilde{P}_k(dz)\exp(f(z))Q_k(z,\varrho,d\zeta)M_k(z,d\varrho)\\
    &=\int_{\mathcal{Z}}\tilde{P}_k(dz) \exp(f(z))Q_k(z,\varrho,d\zeta)M_k(z,d\varrho)
    \left\lbrace\int \exp\left(\xi\right) F_k(d\xi)-1\right\rbrace,
\end{align*}
which is $0$ by Assumption \ref{asm:main} $(a)$. Hence, we have concluded the proof.
\end{proof}

\section{Additional details for the CityLearn Challenge}

\subsection{Details of optimization model}
We refer the reader to \cite{vazquez2020citylearn} and the corresponding online documentation\footnote{link: \url{https://sites.google.com/view/citylearnchallenge}} for the detailed setup of the contest. We will focus only on our strategy in this document. In particular, we provide details on the construction of the optimization model in \ref{eq:sol-map}. Denote the hourly index by $r \in \{1,2,\cdots ,T \}$, where $T = 24$. Suppose we are at the beginning of the hour $r$. Then we need to plan for the actions for the future hours until the end of the day and execute the plan for the next hour $r$, a.k.a., rolling-horizon planning. Next, we describe hyperparameters, variables, objective, and constraints in \ref{eq:sol-map}.
 
\textbf{Hyperparameters.} Hyperparameters are required to instantiate an optimization and are not part of the optimization variables to be solved by an optimization algorithm.
\begin{itemize}[leftmargin=2em]
    \item The hyperparameters to be set by prior knowledge include:  \emph{(1)} electric heater: efficiency $\eta_{\text{ehH}}$, nominal power $E_{\text{max}}^{\text{ehH}}$;  \emph{(2)} heat pump: technical efficiency $\eta^{\text{hp}}_{\text{tech}}$,  target cooling temperature $t_c^{\text{hp}}$, nominal power $E_{\text{max}}^{\text{hpc}}$; \emph{(3)} electric battery: rate of decay $Cf^{\text{bat}}$, capacity $Cp^{\text{bat}}$, efficiency $\eta_t^{\text{bat}}$; \emph{(4)} heat storage: rate of decay $Cf^{\text{Hsto}}$,  capacity $Cp^{\text{Hsto}}$, efficiency $\eta_t^{\text{Hsto}}$; \emph{(5)} cooling storage: rate of decay $Cf^{\text{Csto}}$,  capacity $Cp^{\text{Csto}}$, efficiency $\eta_t^{\text{Csto}}$. 
    \item The hyperparameters provided by the predictors include: \emph{(1)} hourly coefficient of performance (COP) of heat pump $\text{COP}^C_t = \eta_{\text{tech}}^{\text{hp}} \frac{t_c^{\text{hp}}+273.15}{\text{temp}_t - t_c^{\text{hp}}}$ , where $\text{temp}_t$ is the predicted outside temperature for hour $t$; \emph{(2)} solar generation $E^{PV}_t$; \emph{(3)} electricity non-shiftable load $E^{NS}_t$; \emph{(4)} heating demand $H^{bd}_t$; and \emph{(5)} cooling demand $C^{bd}_t$. At hour $r$, the above predictions are required for hour $r\leq t\leq T$. In our algorithm, predictions are provided by simply averaging the last 2 weeks of data in the corresponding hour.
    \item The hyperparameters to be learned by Algorithm \ref{alg} are the virtual electricity price $\{\theta_t\}_{t=1,...,24}$ for 24 hours. These values are bounded between $[0,10]$.
\end{itemize}

\textbf{Optimization variables.} The  variables for the  optimization at hour $r$ include:
\begin{enumerate}[leftmargin=2em]
    \item Net electricity grid import: $E_t^{\text{grid}}$, $T \geq t \geq r$
    \item Heat pump electricity usage: $E_t^{\text{hpC}}$,  $T \geq t \geq r$
    \item Electric heater electricity usage: $E_t^{\text{ehH}}$,  $T \geq t \geq r$
    \item Electric battery state of charge: $\text{SOC}_t^{\text{bat}}$, $T \geq t \geq r$
    \item Heat storage state of charge: $\text{SOC}_t^{\text{H}}$,  $T \geq t \geq r$
    \item Cooling storage state of charge: $\text{SOC}_t^{\text{C}}$,  $T \geq t \geq r$
    \item Electrical storage action: $a_t^{\text{bat}}$,  $T \geq t \geq r$
    \item Heat storage action: $a_t^{\text{Hsto}}$,  $T \geq t \geq r$
    \item Cooling storage action: $a_t^{\text{Csto}}$,  $T \geq t \geq r$
\end{enumerate}
The actions of the policy at hour $r$ are $a_r^{\text{bat}}$, $a_r^{\text{Hsto}}$, and $a_r^{\text{Csto}}$. The remaining variables are considered auxiliary variables for planning.

\textbf{Objective function.} {The objective function is given by:}
\begin{equation}
    |E^{\text{grid}}_t - E^{\text{grid}}_{t-1}| + \theta_t E^{\text{grid}}_t + \sum_{t' = t+1}^T \Big(|E^{\text{grid}}_{t'} - E^{\text{grid}}_{t'-1}| + \theta_{t'}E^{\text{grid}}_{t'} \Big).
\end{equation}
Note that we use $e_t$ for $E^{\text{grid}}_t$ in the main text. Also, the above objective is used in a standard minimization problem; to make it consistent with the maximization problem in \eqref{eq:sol-map}, we can take the negation of the value.

\textbf{Constraints.} {The constraints include both energy balance constraints and technology constraints.}

\emph{Energy balance constraints:}

\begin{itemize}[leftmargin=2em]
    \item Electricity balance for each hour $t \geq r$:\\ $E_t^{\text{PV}}+ E_t^{\text{grid}} = E_t^{\text{NS}} + E_t^{\text{hpC}}+ E_t^{\text{ehH}} + a_t^{\text{bat}}C_p^{\text{bat}} $ 
   \item Heat balance for each hour $t \geq r$:\\
   $E_t^{\text{ehH}} = a_t^{\text{Hsto}}C_p^{\text{Hsto}} + H_t^{\text{bd}}$
   
   \item Cooling balance for each hour $t \geq r$:\\
   $E_t^{\text{hpC}}\text{COP}_t^{\text{C}} = a_t^{\text{Csto}}C_p^{\text{Csto}} + C_t^{\text{bd}}$

\end{itemize}

\emph{Heat pump technology constraints:}
\begin{itemize}[leftmargin=2em]
    \item Maximum cooling for each hour $t \geq r$:\\
    $E_t^{\text{hpC}} \leq E_{\text{max}}^{\text{hpC}}$
    \item Minimum cooling for each hour $t \geq r$:\\
    $E_t^{\text{hpC}} \geq 0$
\end{itemize}

\emph{Electric heater technology constraints:}

\begin{itemize}[leftmargin=2em]
    \item Maximum limit for each hour $t \geq r$:\\
    $E_t^{\text{ehH}} \leq E_{\text{max}}^{\text{ehH}}$
   \item Minimum limit for each hour $t \geq r$:\\
   $E_t^{\text{ehH}} \geq 0$
\end{itemize}

\emph{Electric battery technology constraints:}

\begin{itemize}[leftmargin=2em]
    \item Initial SOC: \\
    $SOC_r^{\text{bat}} = (1 - C_f^{\text{bat}}SOC_{r-1}^{\text{bat}}) + a_r^{\text{bat}}\eta^{\text{bat}}$
   \item SOC updates for each hour $t \geq r$:\\
   $SOC_t^{\text{bat}} = (1 - C_f^{\text{bat}})SOC_{t-1}^{\text{bat}} + a_t^{\text{bat}}\eta^{\text{bat}}$
\item Action limits for each hour $t \geq r$: \\
$-1 \leq a_t^{\text{bat}} \leq 1$
\item Bounds of SOC or each hour $t \geq r$:\\
$0\leq SOC_t^{\text{bat}} \leq 1$
\end{itemize}

\emph{Heat storage technology constraints:}
\begin{itemize}[leftmargin=2em]
    \item Initial SOC: \\
    $SOC_r^{\text{H}} = (1 - C_f^{\text{Hsto}}SOC_{r-1}^{\text{H}}) + a_r^{\text{Hsto}}\eta^{\text{Hsto}}$
   \item SOC updates for each hour $t \geq r$:\\
   $SOC_t^{\text{H}} = (1 - C_f^{\text{Hsto}})SOC_{t-1}^{\text{H}} + a_t^{\text{Hsto}}\eta^{\text{Hsto}}$
\item Action limits or each hour $t \geq r$: \\
$-1\leq a_t^{\text{Hsto}} \leq 1$
\item Bounds of SOC or each hour $t \geq r$:\\
$0\leq SOC_t^{\text{H}} \leq 1$
\end{itemize}

\emph{Cooling storage technology constraints:}
\begin{itemize}[leftmargin=2em]
    \item Initial SOC: \\
    $SOC_r^{\text{C}} = (1 - C_f^{\text{Csto}}SOC_{r-1}^{\text{C}}) + a_r^{\text{Csto}}\eta^{\text{Csto}}$
   \item SOC updates for each hour $t \geq r$:\\
   $SOC_t^{\text{C}} = (1 - C_f^{\text{Csto}})SOC_{t-1}^{\text{C}} + a_t^{\text{Csto}}\eta^{\text{Csto}}$
  \item Action limits or each hour $t \geq r$: \\
$-1 \leq a_t^{\text{Csto}} \leq 1$
\item Bounds of SOC or each hour $t \geq r$:\\
$0\leq SOC_t^{\text{C}} \leq 1$
\end{itemize}
The above optimization can be formulated as a linear program and solved efficiently. For more implementation details, please refer to our code (submitted as supplementary materials).

\section{Additional experimental results}

\subsection{Official results for the 2021 CityLearn Challenge}

\begin{table}[h]
\centering
\begin{tabular}{@{}llllll@{}}
& 
  \multicolumn{1}{c}{\textbf{ZO-iRL}(ours)} &
  \multicolumn{1}{c}{\textbf{ICD-CA}} &
  \multicolumn{1}{c}{\textbf{IDLab}} &
  \multicolumn{1}{c}{\textbf{Breakfast Club}} \\ \toprule
\textbf{Total score}      

& \hspace{0.2cm}\textbf{0.944}
& \hspace{0.1cm}1.070
& 1.070
& \hspace{0.4cm}1.130
\\
\textbf{total last year}     
& \hspace{0.2cm}\textbf{0.942}
& \hspace{0.1cm}1.052
& 1.077
& \hspace{0.4cm}1.067
 \\
\textbf{coord. score}     
& \hspace{0.2cm}\textbf{0.915}
& \hspace{0.1cm}1.107
& 1.094
& \hspace{0.4cm}1.195
 \\
 \textbf{coord. score last year}     
& \hspace{0.2cm}\textbf{0.918}
& \hspace{0.1cm}1.074
& 1.098
& \hspace{0.4cm}1.095
 \\
 \textbf{carbon emissions}     
& \hspace{0.2cm}1.003
& \hspace{0.1cm}\textbf{1.000}
& 1.028
& \hspace{0.4cm}1.003
 \\
\end{tabular}
\caption{Official results for the 2021 CityLearn Challenge \cite{nagy2021citylearn}. Here, the total score is the average of all 6 cost metrics considered in the competition. The coordination score is the average of the first 4 metrics (see the main paper for these metrics). Last year scores are calculated based on the performance of the last year within the total 4-year simulation period.}
\end{table}

\subsection{Hyperparameters of ZO-iRL and baselines}

\begin{table}[htbp]
\centering
\begin{tabular}{lc}
\textbf{Parameter}                           & \textbf{Value}        \\ \hline
\# of parameter candidates $N_k$                                      & 3                     \\
Initial variance $\iota_1$                  & 0.4                   \\
Guidance signal $\rho$& specified in the main text\\
Guidance rate $\alpha_k$                                          & 1                     \\
Duration of one episode (hours)                                   & 24                    \\
Range of virtual electricity price                                & $[0,5]$               \\
State-action trajectory buffer size (days)                        & 7
\end{tabular} \caption{Hyperparameters for ZO-iRL.}
\end{table}

\begin{table}[htbp]
\centering
\begin{tabular}{lllllll}
\multicolumn{1}{c}{\multirow{2}{*}{\textbf{Parameter}}} & \multicolumn{6}{c}{\textbf{Value}}                          \\  
\multicolumn{1}{c}{}                                    & DDPG                     & DQN  & PPO  & TD3  & A2C  & SAC  \\ \hline
Learning rate                                           & \multicolumn{1}{c}{1e-3} & 1e-4 & 3e-4 & 1e-3 & 7e-4 & 3e-4 \\
\# of epochs                                            & NA                       & NA   & 10   & NA   & 5    & NA   \\
Buffer size                                             & 1e6                      & 1e6  & NA   & 1e6  & NA   & 1e6  \\
Batch size                                              & 100                      & 32   & 64   & 100  & NA   & 256  \\
Discount factor                                         & 0.99                     & 0.99 & 0.99 & 0.99 & 0.99 & 0.99
\end{tabular} \caption{Parameter values used for RL baselines. NA means not applicable. ADAM optimizer is used for each baseline, where the policy is given by the NN architecture with a $\tanh$ activation function and two layers of 256 units each. }
\end{table}

\subsection{Results for all climate zones}

Here, we provide results for all climate zones. Note that ZO-iRL performs some random parameter exploration in the first few weeks, which results in worse performance. However, over time, performance improves due to the guided ES, as shown in Fig. \ref{fig:Learning curve3}.

\begin{table*}[h]
\centering
\scalebox{0.95}{\begin{tabular}{@{}llllllllll@{}}
\textbf{Method} & 
  
  \multicolumn{1}{c}{\textbf{SAC}} &
  \multicolumn{1}{c}{\textbf{A2C}} &
  \multicolumn{1}{c}{\textbf{DDPG}} &
  \multicolumn{1}{c}{\textbf{PPO}} &
  \multicolumn{1}{c}{\textbf{TD3}}&\multicolumn{1}{c}{\textbf{ZO-iRL}}  \\ \hline
\textbf{ramping cost}      

& 1.244 (0.196)
& 2.714 (0.186)
& 1.327 (0.181)
& 2.718 (0.079)
& 1.500 (0.228)
& 0.750 (0.010)
 \\
\textbf{1-load factor}     

& 1.162 (0.026)
& 1.265 (0.012)
& 1.192 (0.034)
& 1.273 (0.022)
& 1.279 (0.120)
& 1.028 (0.001)
 \\
\textbf{avg. daily peak}   

& 1.195 (0.047)
& 1.413 (0.024) 
& 1.218 (0.033) 
& 1.371 (0.018)
& 1.351 (0.121)
& 0.994 (0.003)
 \\
\textbf{peak demand}       

& 1.081 (0.001)
& 1.149 (0.045)
& 1.119 (0.041) 
& 1.419 (0.082)
& 1.195 (0.161)
& 0.950 (0.020)
 \\
\textbf{net electric. peak} 

& 0.992 (0.001)
& 1.011 (0.002)
& 0.993 (0.004) 
& 1.013 (0.001)
& 1.099 (0.2066)
& 1.008 (2e-4)
 \\
\textbf{carbon emissions}     
&   0.998 (0.001)
&  1.017 (0.002)     
& 0.999 (0.004)    
& 1.023 (0.002)
& 1.104 (0.203)
& 1.010 (1e-4)   \\  \hline
\textbf{total score}       
& 1.112 (0.044)
& 1.428 (0.042)
& 1.141 (0.047)
& 1.469(0.026)
& 1.255 (0.141)
& 0.957 (0.004)
\end{tabular}}
\caption{Comparison of ZO-iRL and baselines for \textbf{Climate Zone 2}. The reported values are the average and standard deviation (in brackets) across 10 independent runs.}
\end{table*}

\begin{table*}[h]
\centering
\scalebox{0.95}{\begin{tabular}{@{}llllllllll@{}}
\textbf{Method} & 
  
  \multicolumn{1}{c}{\textbf{SAC}} &
  \multicolumn{1}{c}{\textbf{A2C}} &
  \multicolumn{1}{c}{\textbf{DDPG}} &
  \multicolumn{1}{c}{\textbf{PPO}} &
  \multicolumn{1}{c}{\textbf{TD3}}&\multicolumn{1}{c}{\textbf{ZO-iRL}}  \\ \hline
\textbf{ramping cost}      

& 1.098 (0.011)
& 2.986 (0.223)
& 1.367 (0.182)
& 3.154 (0.066)
& 1.263 (0.127)
& 0.775 (0.004)
 \\
\textbf{1-load factor}     

& 1.140 (0.002)
& 1.272 (0.009)
& 1.188 (0.040)
& 1.375 (0.038)
& 1.240 (0.136)
& 1.043 (0.001)
 \\
\textbf{avg. daily peak}   

& 1.169 (0.002)
& 1.441 (0.024) 
& 1.242 (0.063) 
& 1.439 (0.025)
& 1.277 (0.177)
& 1.010 (0.003)
 \\
\textbf{peak demand}       

& 1.182 (0.001)
& 1.272 (0.043)
& 1.199 (0.022) 
& 1.472 (0.084)
& 1.259 (0.127)
& 0.952 (0.014)
 \\
\textbf{net electric. peak} 

& 0.994 (0.002)
& 1.013 (0.003)
& 0.996 (0.004) 
& 1.010 (0.001)
& 1.099 (0.206)
& 1.006 (3e-4)
 \\
\textbf{carbon emissions}     
&  1.000 (0.002)
&  1.019 (0.003)     
& 1.002 (0.004)    
& 1.020 (0.001)
& 1.104 (0.203)
& 1.007 (1e-4)   \\  \hline
\textbf{total score}       
& 1.097 (0.002)
& 1.501 (0.047)
& 1.166 (0.048)
& 1.578(0.025)
& 1.207 (0.152)
& 0.966 (0.003)
\end{tabular}}
\caption{Comparison of ZO-iRL and baselines for \textbf{Climate Zone 3}. The reported values are the average and standard deviation (in brackets) across 10 independent runs.}
\end{table*}

\begin{table*}[h]
\centering
\scalebox{0.95}{\begin{tabular}{@{}llllllllll@{}}
\textbf{Method} & 
  
  \multicolumn{1}{c}{\textbf{SAC}} &
  \multicolumn{1}{c}{\textbf{A2C}} &
  \multicolumn{1}{c}{\textbf{DDPG}} &
  \multicolumn{1}{c}{\textbf{PPO}} &
  \multicolumn{1}{c}{\textbf{TD3}}&\multicolumn{1}{c}{\textbf{ZO-iRL}}  \\ \hline
\textbf{ramping cost}      

& 1.024 (0.006)
& 2.814 (0.468)
& 1.566 (0.230)
& 3.070 (0.108)
& 1.310 (0.072)
& 0.739 (0.003)
 \\
\textbf{1-load factor}     

& 1.117 (0.003)
& 1.229 (0.024)
& 1.185 (0.036)
& 1.449 (0.019)
& 1.187 (0.038)
& 1.013 (0.006)
 \\
\textbf{avg. daily peak}   

& 1.126 (0.002)
& 1.411 (0.072) 
& 1.249 (0.070) 
& 1.429 (0.015)
& 1.262 (0.051)
& 1.003 (0.001)
 \\
\textbf{peak demand}       

& 1.134 (1e-4)
& 1.238 (0.069)
& 1.155 (0.036) 
& 1.444 (0.061)
& 1.205 (0.099)
& 0.999 (0.026)
 \\
\textbf{net electric. peak} 

& 0.987 (0.002)
& 1.010 (0.006)
& 0.995 (0.004) 
& 1.007 (0.002)
& 0.990 (0.003)
& 1.007 (4e-4)
 \\
\textbf{carbon emissions}     
&  0.994 (0.002)
&  1.017 (0.006)     
& 1.000 (0.004)    
& 1.015 (0.001)
& 0.996 (0.004)
& 1.009 (8e-4)   \\  \hline
\textbf{total score}       
& 1.064 (0.001)
& 1.453 (0.103)
& 1.192 (0.060)
& 1.569(0.032)
& 1.158 (0.042)
& 0.962 (0.003)
\end{tabular}}
\caption{Comparison of ZO-iRL and baselines for \textbf{Climate Zone 4}. The reported values are the average and standard deviation (in brackets) across 10 independent runs.}
\end{table*}

\begin{table*}[h]
\centering
\scalebox{0.95}{\begin{tabular}{@{}llllllllll@{}}
\textbf{Method} & 
  
  \multicolumn{1}{c}{\textbf{SAC}} &
  \multicolumn{1}{c}{\textbf{A2C}} &
  \multicolumn{1}{c}{\textbf{DDPG}} &
  \multicolumn{1}{c}{\textbf{PPO}} &
  \multicolumn{1}{c}{\textbf{TD3}}&\multicolumn{1}{c}{\textbf{ZO-iRL}}  \\ \hline
\textbf{ramping cost}      

& 1.245 (0.162)
& 2.603 (0.158)
& 1.320 (0.161)
& 2.673 (0.047)
& 1.385 (0.144)
& 0.789 (0.005)
 \\
\textbf{1-load factor}     

& 1.241 (0.055)
& 1.314 (0.011)
&  1.212 (0.033)
& 1.332 (0.032)
& 1.253 (0.043)
& 1.045 (0.011)
 \\
\textbf{avg. daily peak}   

& 1.172 (0.049)
& 1.347 (0.025) 
& 1.197 (0.043) 
& 1.346 (0.021)
& 1.221 (0.044)
& 1.004 (0.002)
 \\
\textbf{peak demand}       

& 1.285 (0.097)
& 1.356 (0.021)
& 1.203 (0.035) 
& 1.535 (0.184)
& 1.231 (0.066)
& 1.014 (0.022)
 \\
\textbf{net electric. peak} 

& 0.989 (0.002)
& 1.003 (0.002)
& 0.990 (0.006) 
& 1.003 (8e-4)
& 0.993 (0.003)
& 1.004 (0.001)
 \\
\textbf{carbon emissions}     
&  0.995 (0.002)
&  1.009 (0.002)     
& 0.997 (0.007)    
& 1.015 (7e-4)
& 0.999 (0.002)
& 1.005 (0.001)   \\  \hline
\textbf{total score}       
& 1.154 (0.057)
& 1.438 (0.032)
& 1.153 (0.041)
& 1.484(0.042)
& 1.180 (0.049)
& 0.977 (0.005)
\end{tabular}}
\caption{Comparison of ZO-iRL and baselines for \textbf{Climate Zone 5}. The reported values are the average and standard deviation (in brackets) across 10 independent runs.}
\end{table*}

\begin{figure}[htbp]
  \centering
  \includegraphics[width=\columnwidth]{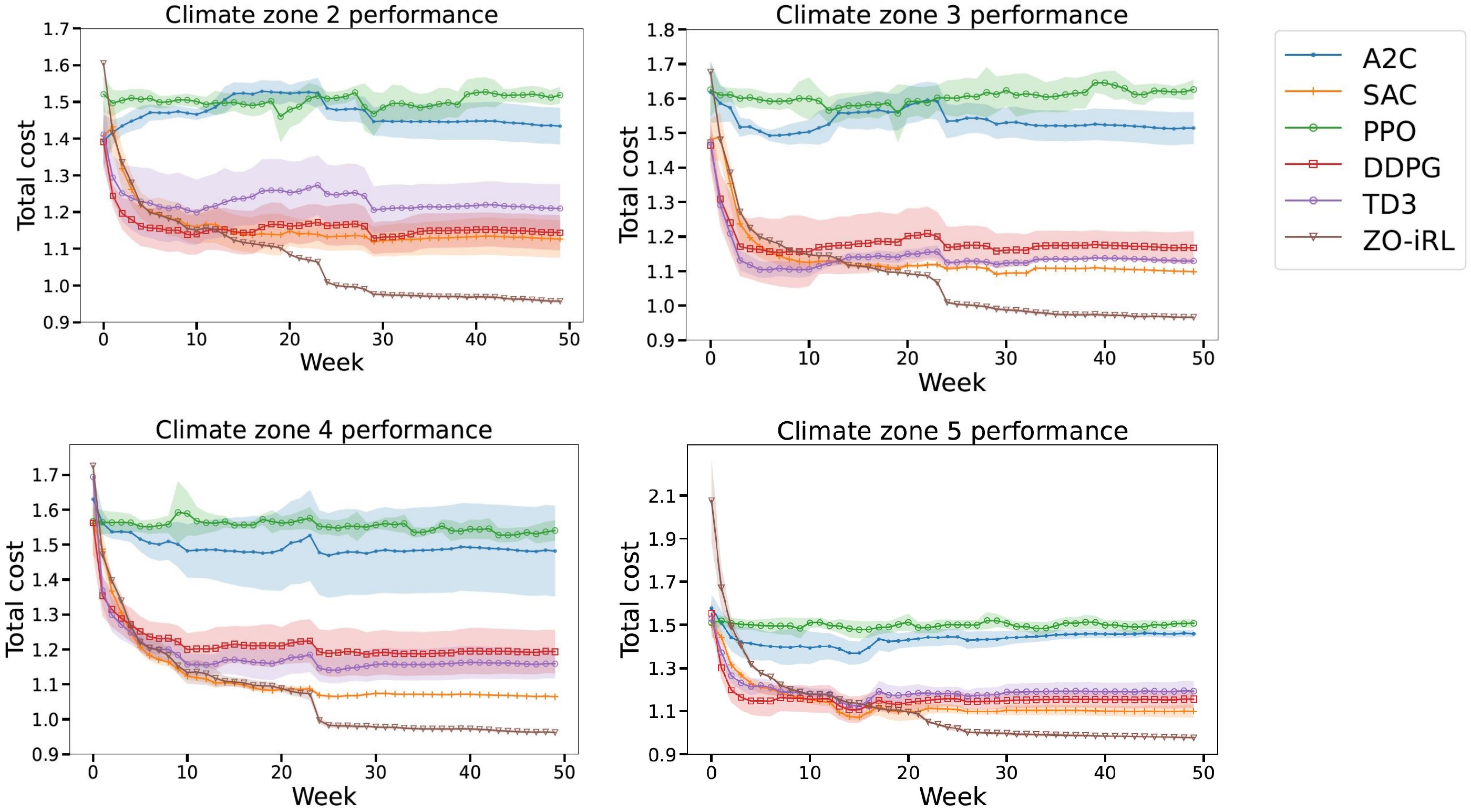}
\caption{Learning curves of ZO-iRL  and baselines for Climate Zones 2--5. Note that ZO-iRL is the only method that consistently achieves a cost below 1 across different runs in different climate zones. }
\label{fig:Learning curve3}
\end{figure}

\clearpage
\newpage

\subsection{Visualization of parameter evolution}
In this section, we visualize the evolution of parameters under ZO-iRL for some buildings. We also juxtapose the corresponding patterns of empirical peak counts, net electricity use, electricity demand, heating demand, and cooling demand. The empirical count is calculated for each hour as the number of times the corresponding hour has the top 2 net electricity usage in a week. The higher the empirical counts, the more frequent the corresponding hour has peak usage. We also note that electricity usage is higher than electricity demand due to the additional energy demand for heating and cooling.

\begin{figure}[h]
\centering
\begin{subfigure}{.42\textwidth}
  \centering
  \includegraphics[width=\columnwidth]{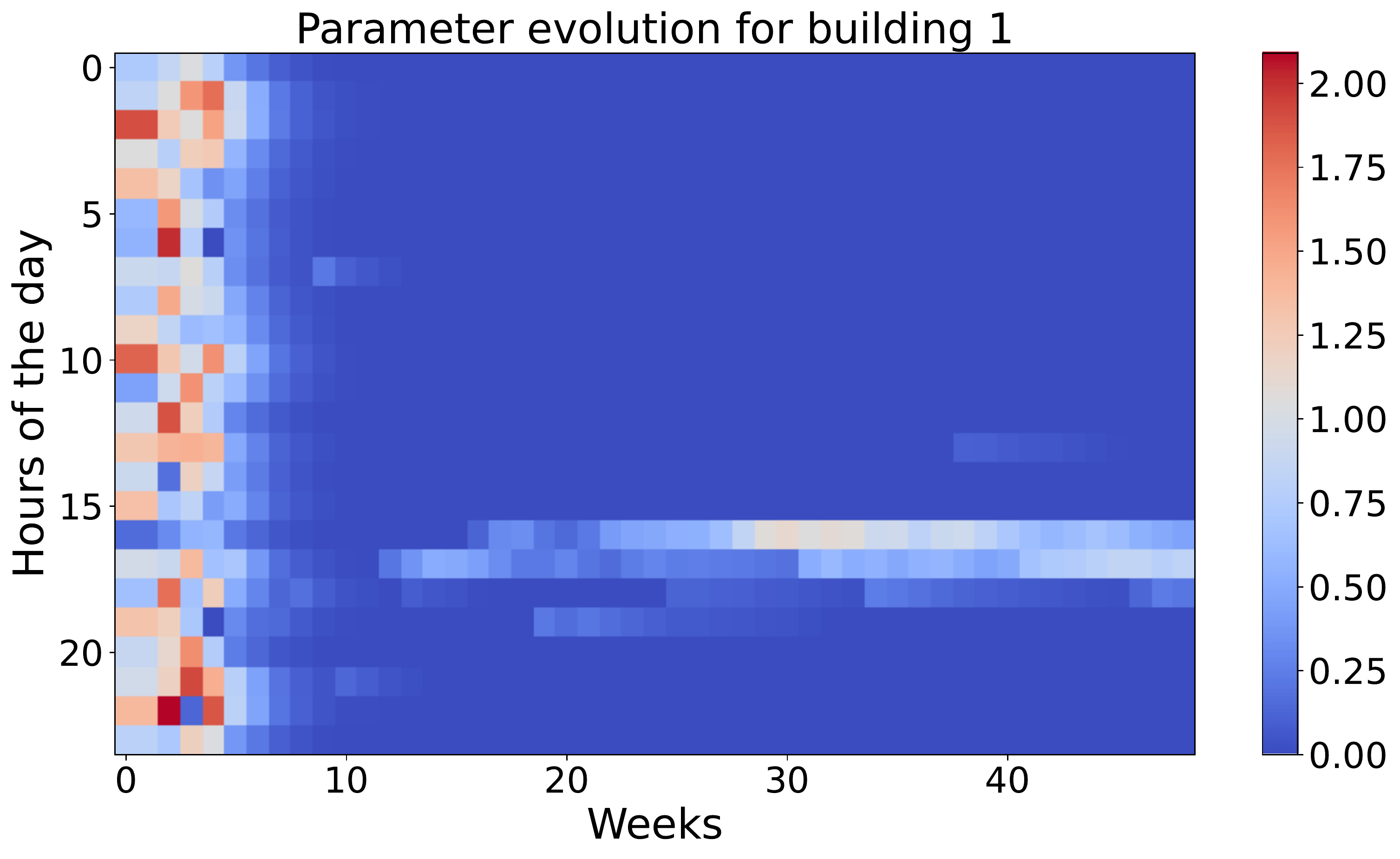}
\caption{virtual electricity prices $\theta$}
\end{subfigure}
\begin{subfigure}{.4\textwidth}
  \centering
  \includegraphics[width=\columnwidth]{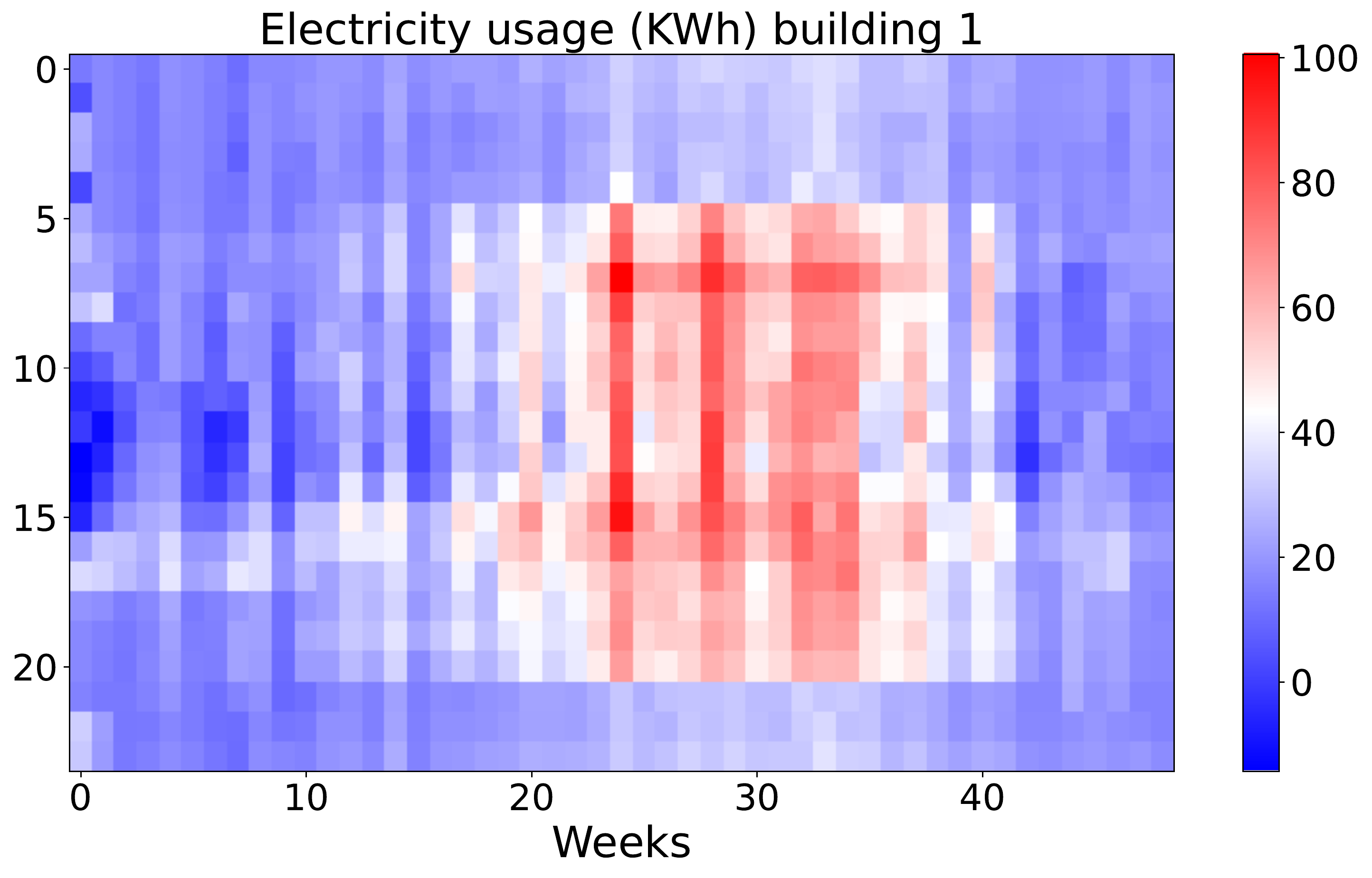}
\caption{net electricity usage}
\end{subfigure}
\begin{subfigure}{.4\textwidth}
  \centering
  \includegraphics[width=\columnwidth]{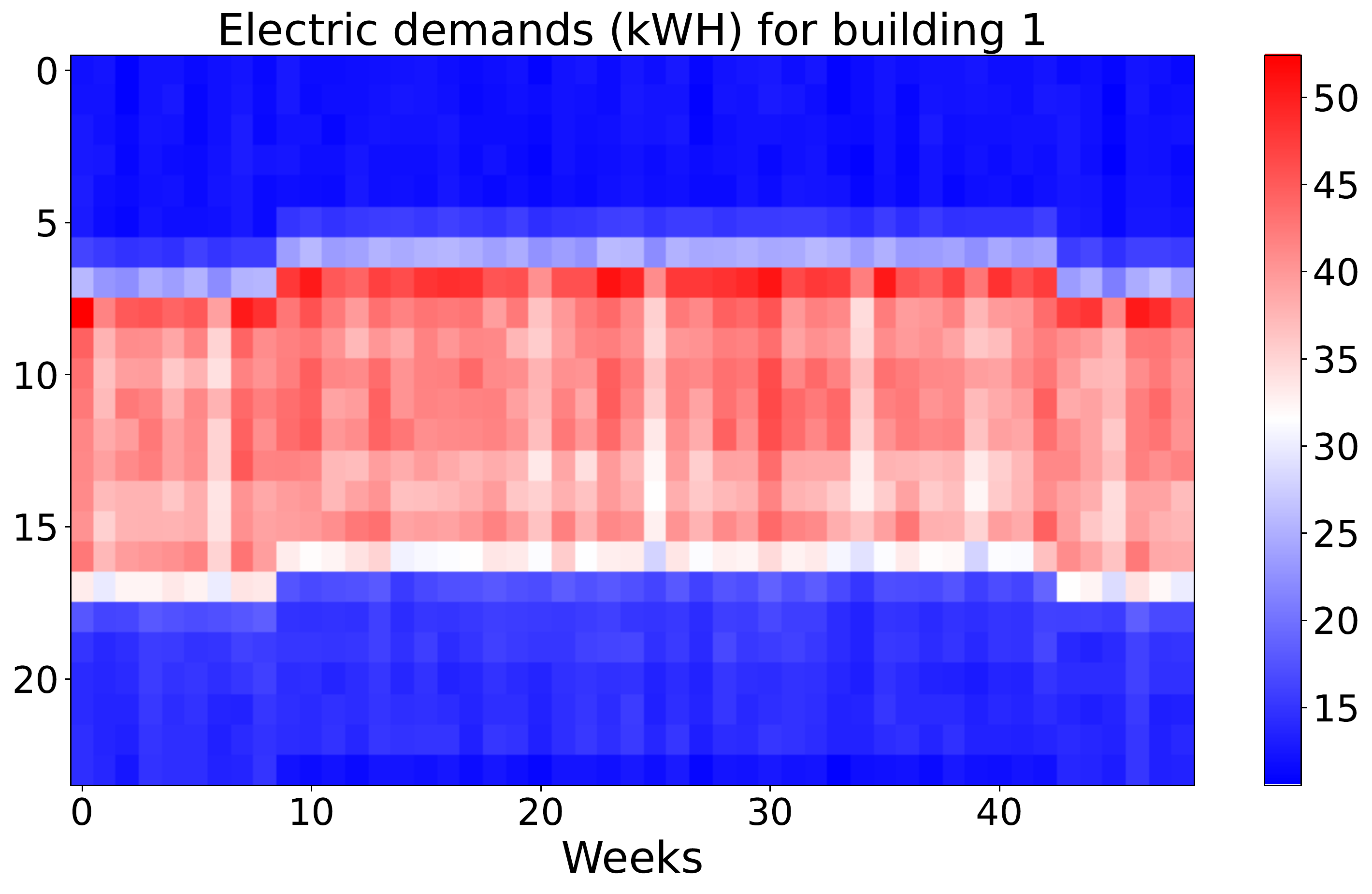}
\caption{electricity demands}
\end{subfigure}
\begin{subfigure}{.39\textwidth}
  \centering
  \includegraphics[width=\columnwidth]{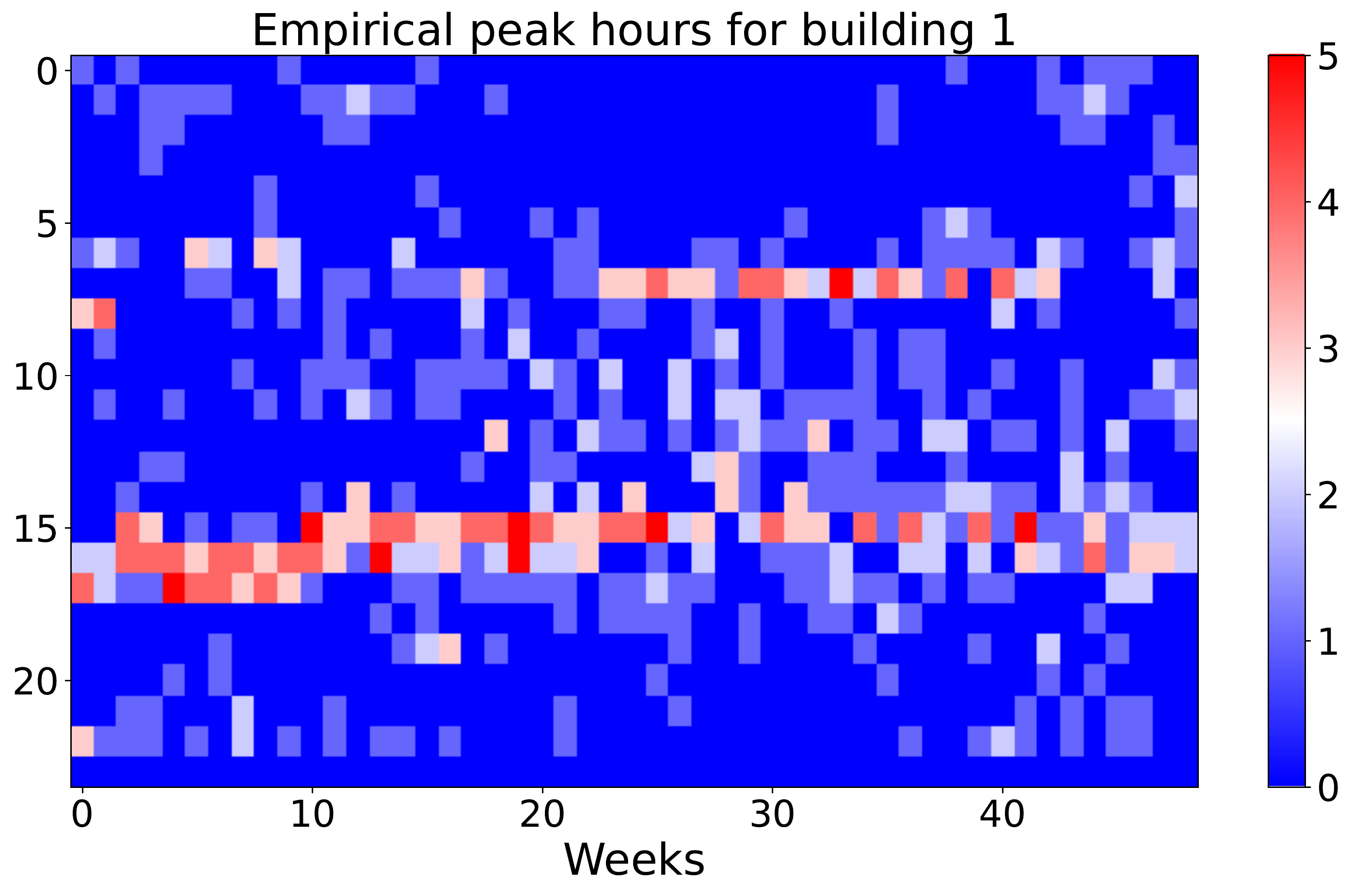}
\caption{empirical counts of peaks}
\end{subfigure}
\begin{subfigure}{.4\textwidth}
  \centering
  \includegraphics[width=\columnwidth]{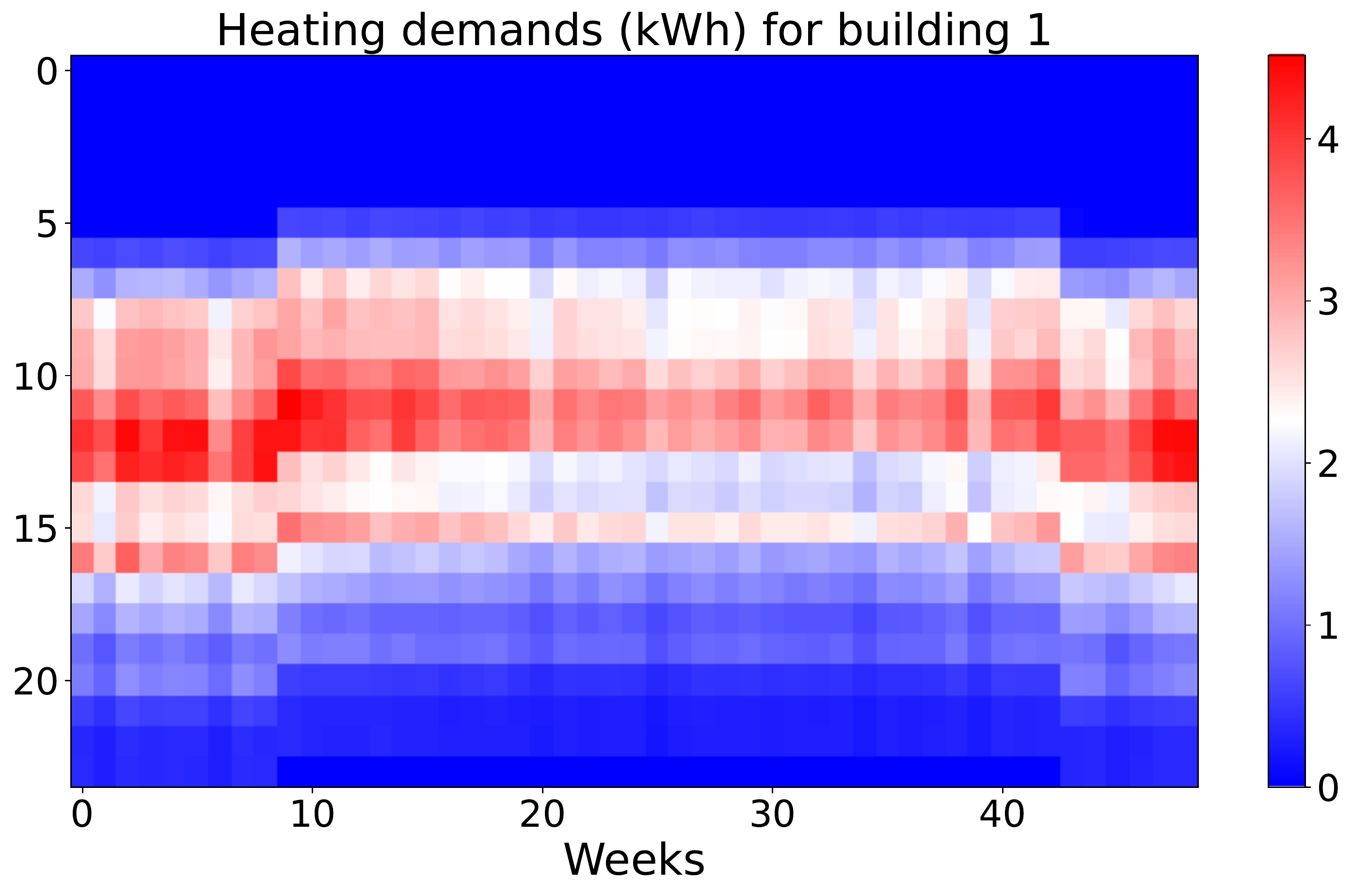}
\caption{heating demands}
\end{subfigure}
\begin{subfigure}{.4\textwidth}
  \centering
  \includegraphics[width=\columnwidth]{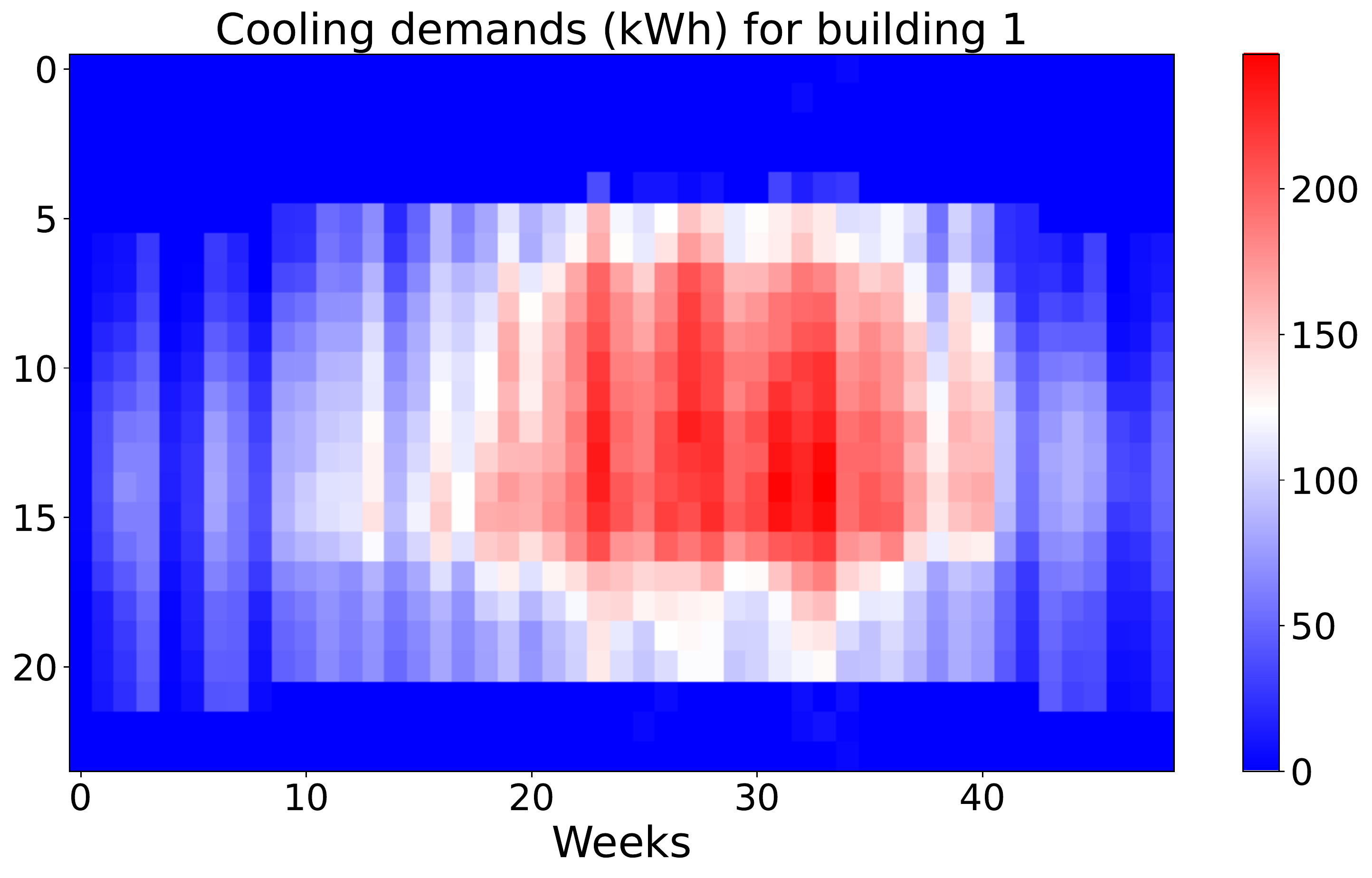}
\caption{cooling demands}
\end{subfigure}
\caption{Visualization of (a) parameter learning, (b) net electricity  use (c) electric loads, (d) empirical counts of peaks, (e) heating demand, and (d) cooling demand for Building 1. It can be observed that Building 1 continues to increase the virtual electricity prices for hours around 16--18 in response to consistently observed peaks in those hours. Due to storage controls, the net electricity usage pattern is smoother (spreading throughout the day) than the demand patterns, as observed in all buildings.}
\end{figure}

\begin{figure}[h]
\centering
\begin{subfigure}{.42\textwidth}
  \centering
  \includegraphics[width=\columnwidth]{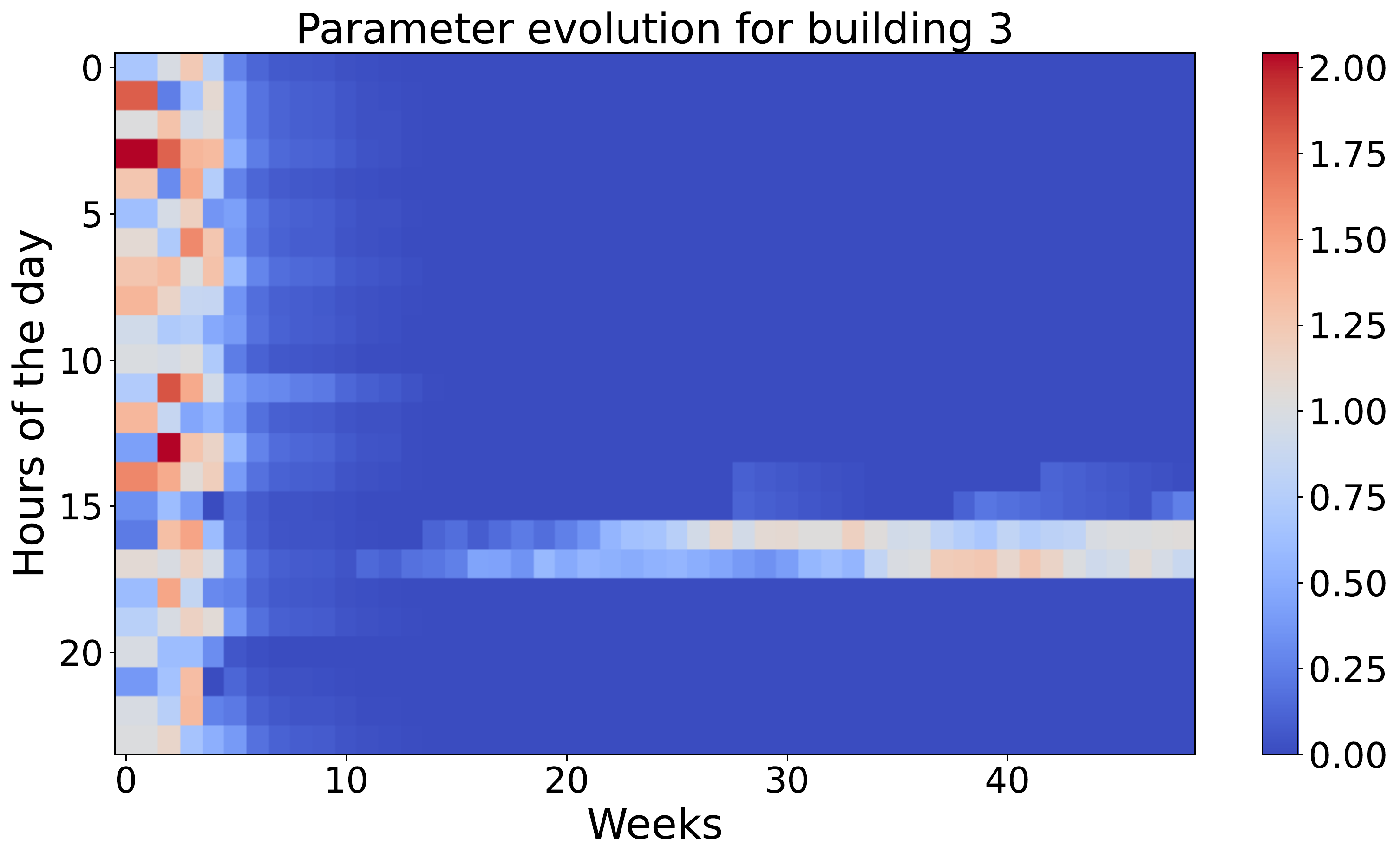}
\caption{virtual electricity prices $\theta$}
\end{subfigure}%
\begin{subfigure}{.4\textwidth}
  \centering
  \includegraphics[width=\columnwidth]{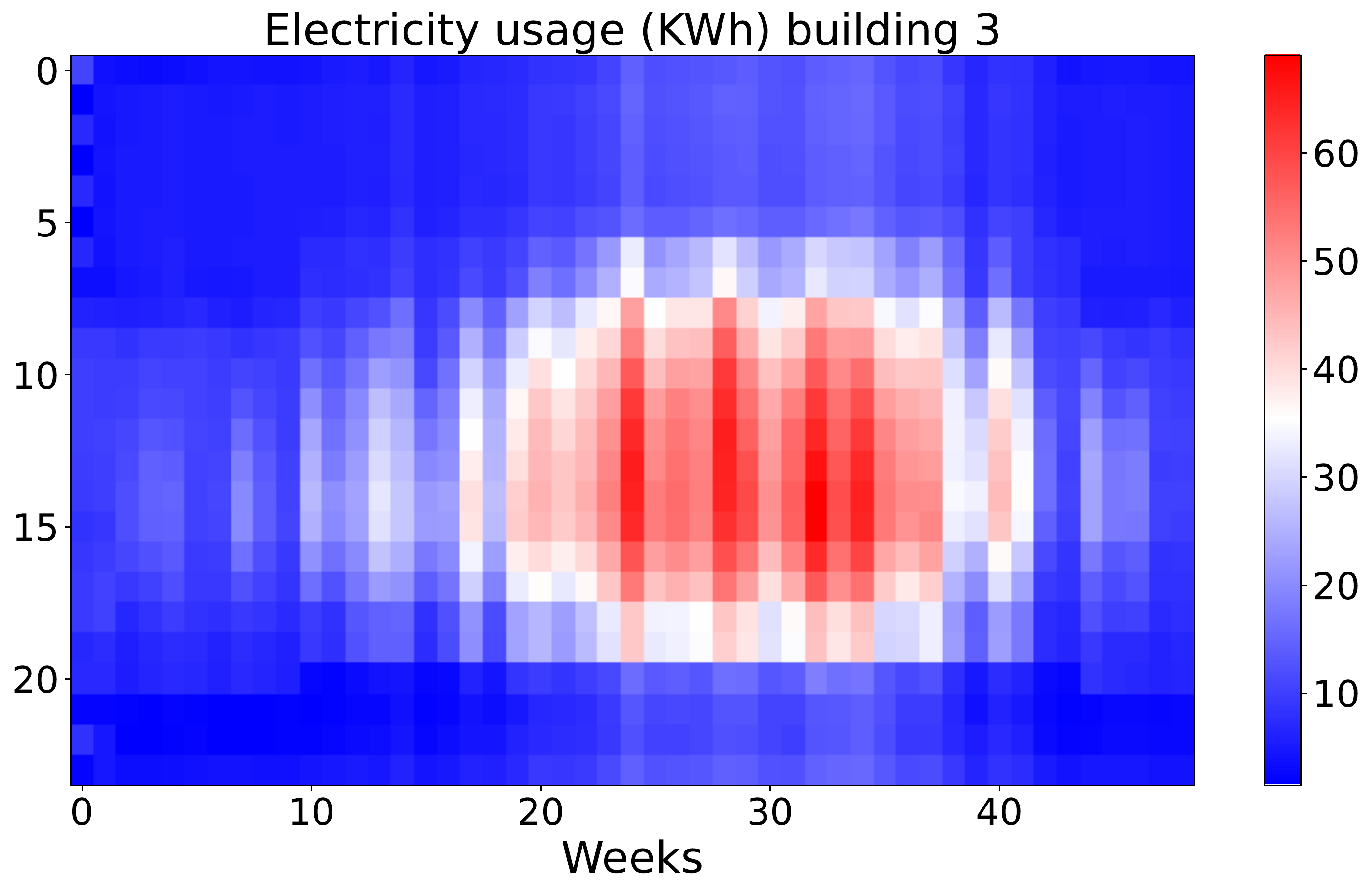}
\caption{net electricity usage}
\end{subfigure}
\begin{subfigure}{.4\textwidth}
  \centering
  \includegraphics[width=\columnwidth]{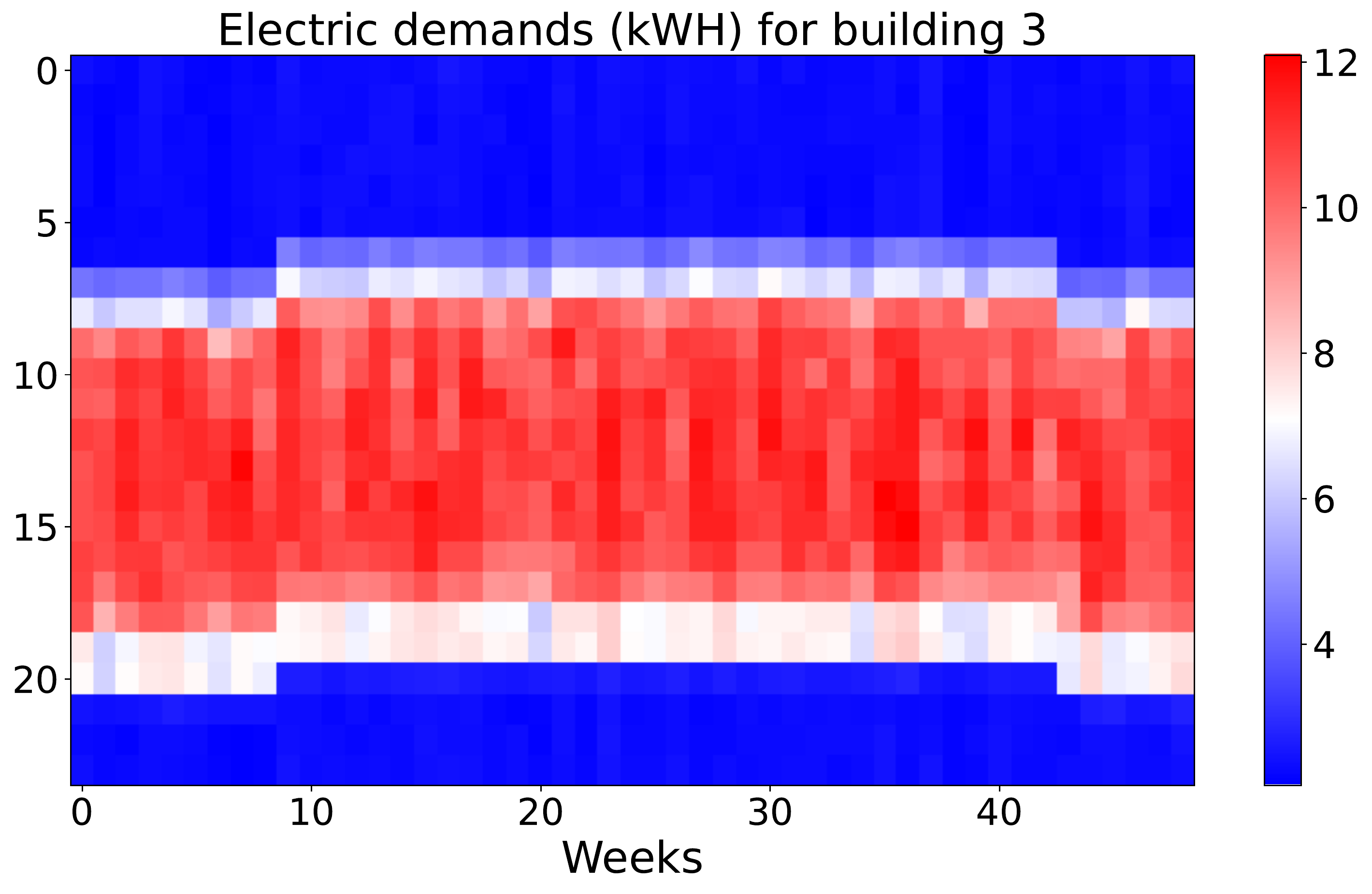}
\caption{electricity demands}
\end{subfigure}
\begin{subfigure}{.4\textwidth}
  \centering
  \includegraphics[width=\columnwidth]{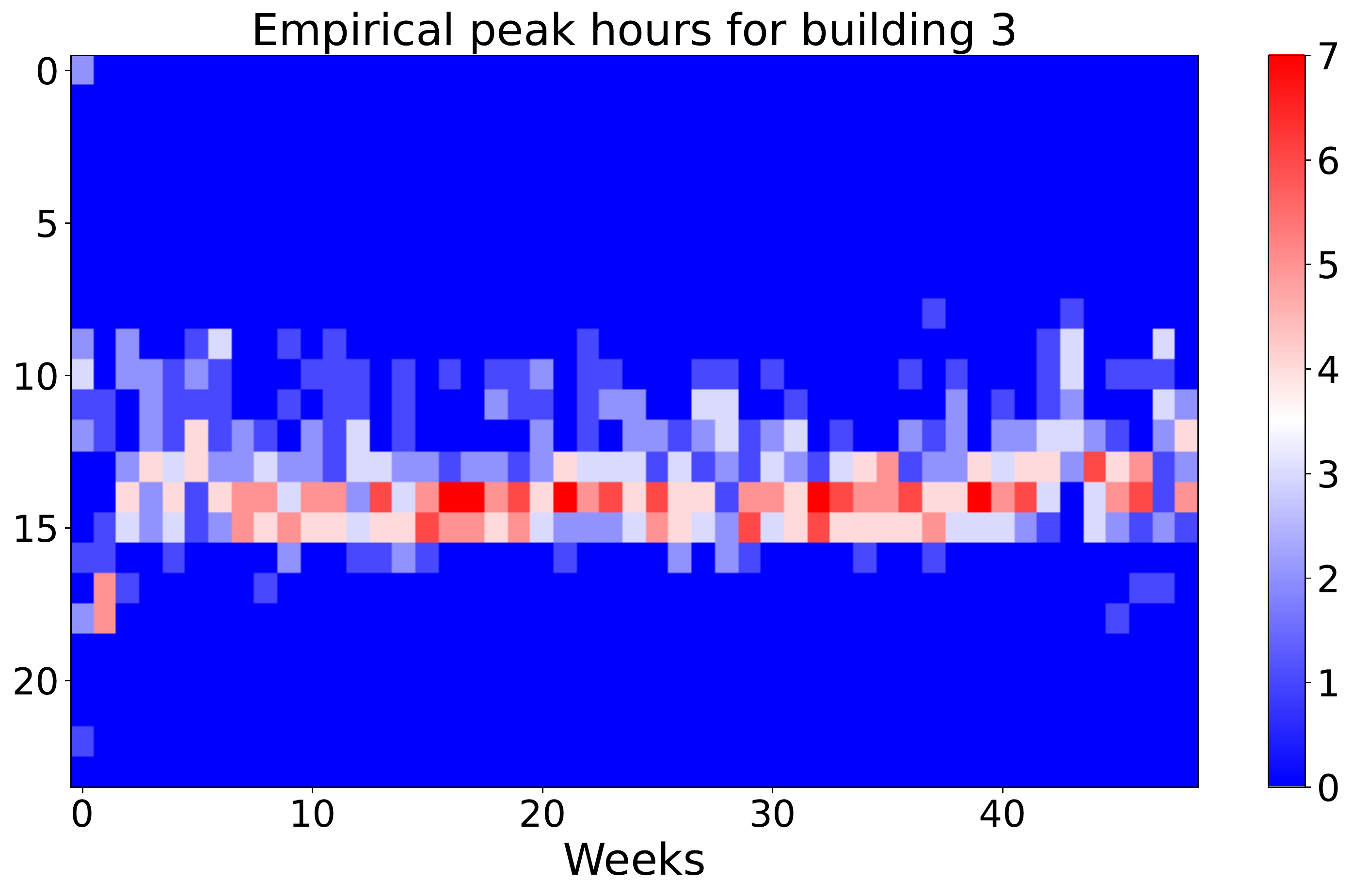}
\caption{empirical counts of peaks}
\end{subfigure}
\begin{subfigure}{.4\textwidth}
  \centering
  \includegraphics[width=\columnwidth]{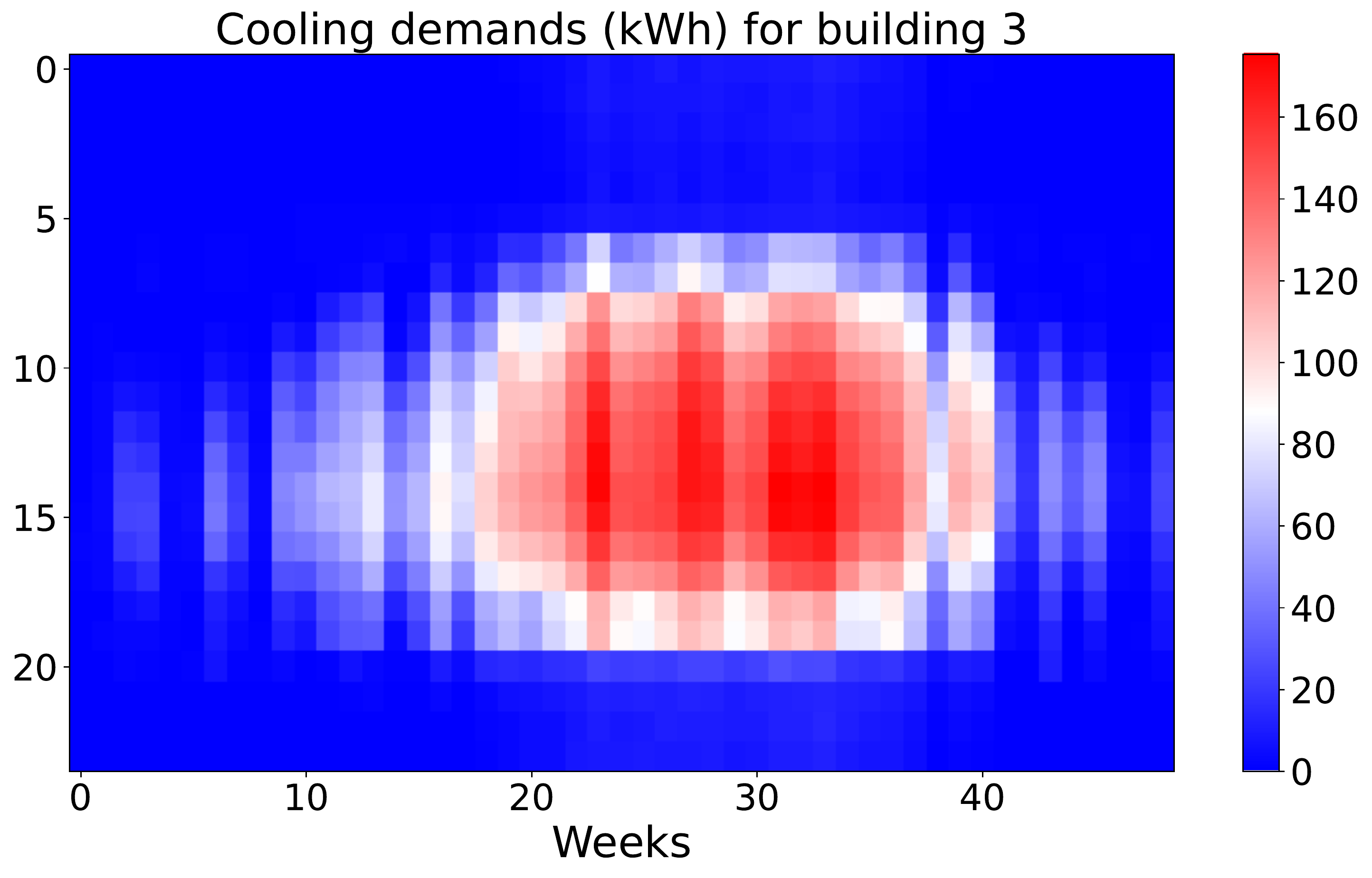}
\caption{cooling demands}
\end{subfigure}
\caption{Similar patterns can be observed for Building 3. Note that for this building, there is no heating demands.}
\end{figure}

\begin{figure}[h]
\centering
\begin{subfigure}{.42\textwidth}
  \centering
  \includegraphics[width=\columnwidth]{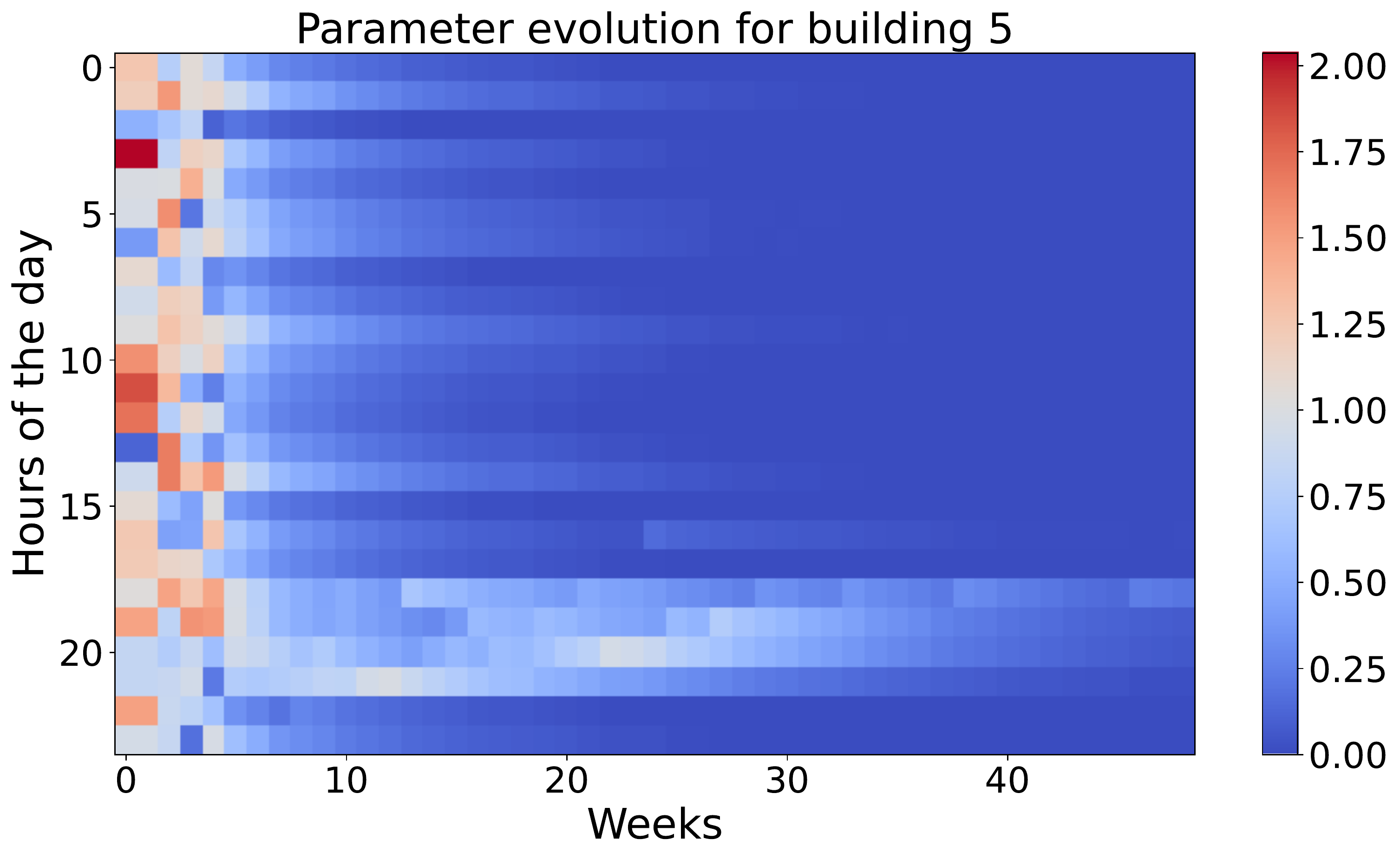}
\caption{virtual electricity prices $\theta$}
\end{subfigure}%
\begin{subfigure}{.4\textwidth}
  \centering
  \includegraphics[width=\columnwidth]{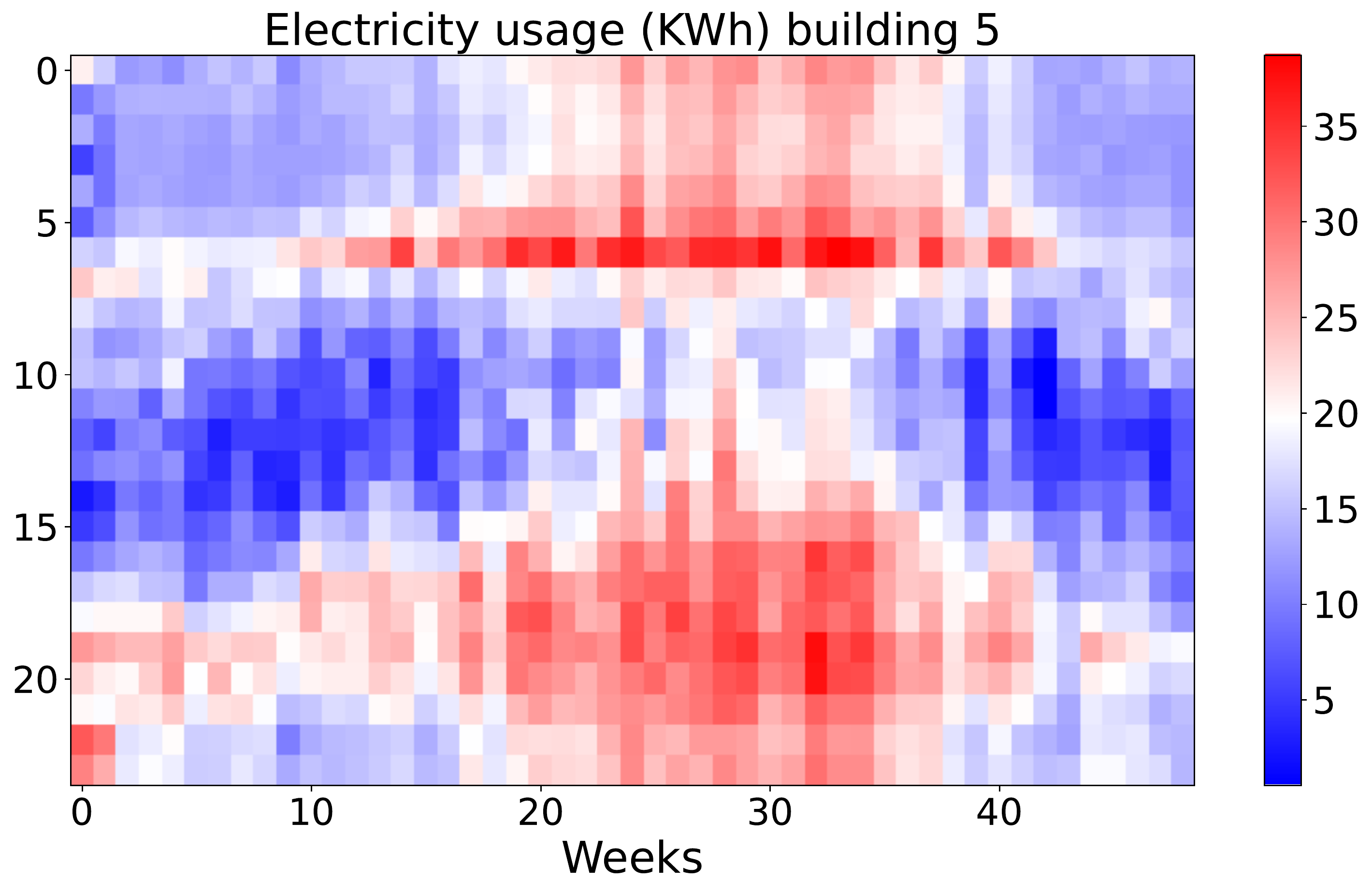}
\caption{net electricity usage}
\end{subfigure}
\begin{subfigure}{.4\textwidth}
  \centering
  \includegraphics[width=\columnwidth]{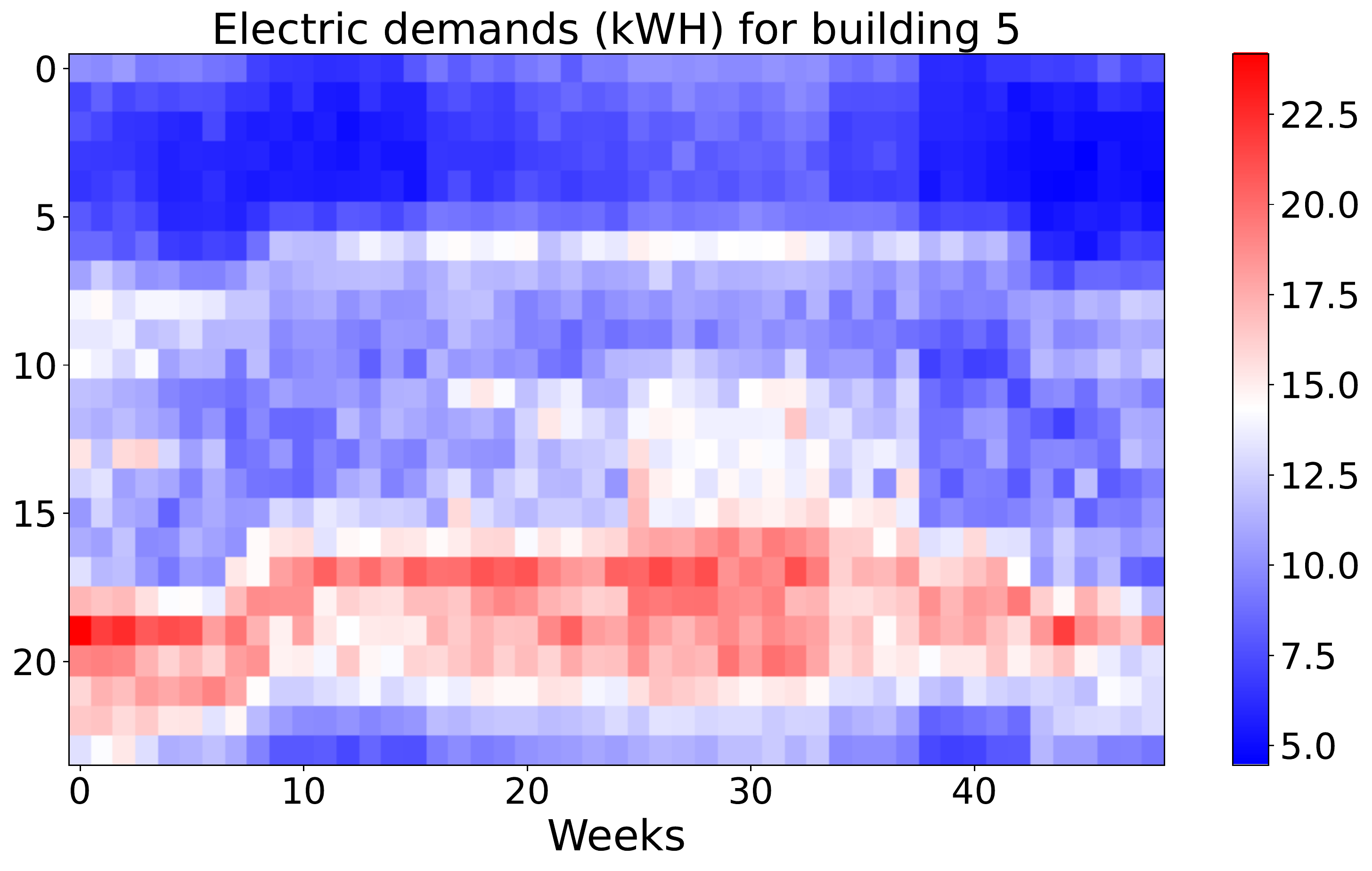}
\caption{electricity demands}
\end{subfigure}
\begin{subfigure}{.4\textwidth}
  \centering
  \includegraphics[width=\columnwidth]{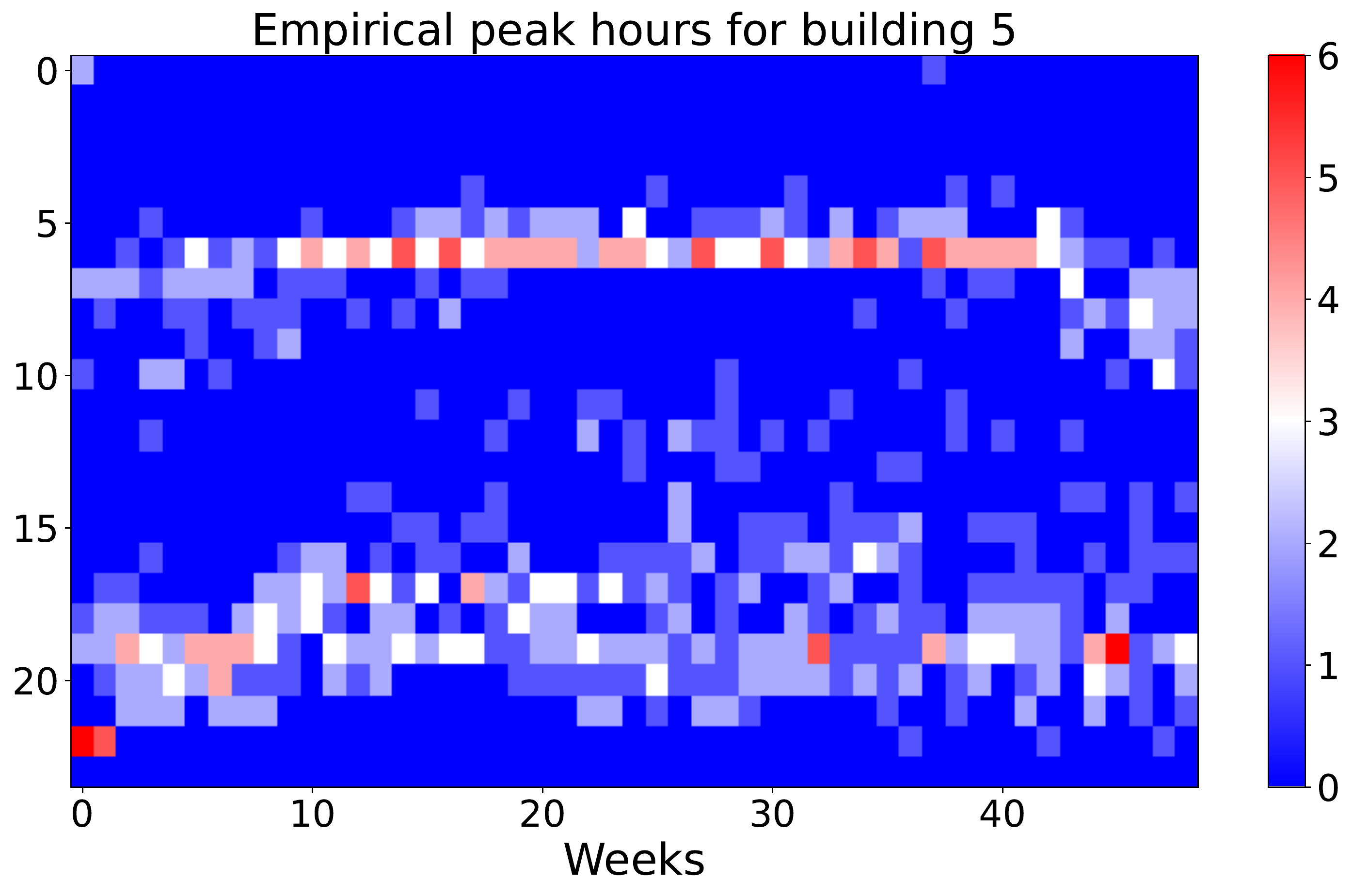}
\caption{empirical counts of peaks}
\end{subfigure}
\begin{subfigure}{.4\textwidth}
  \centering
  \includegraphics[width=\columnwidth]{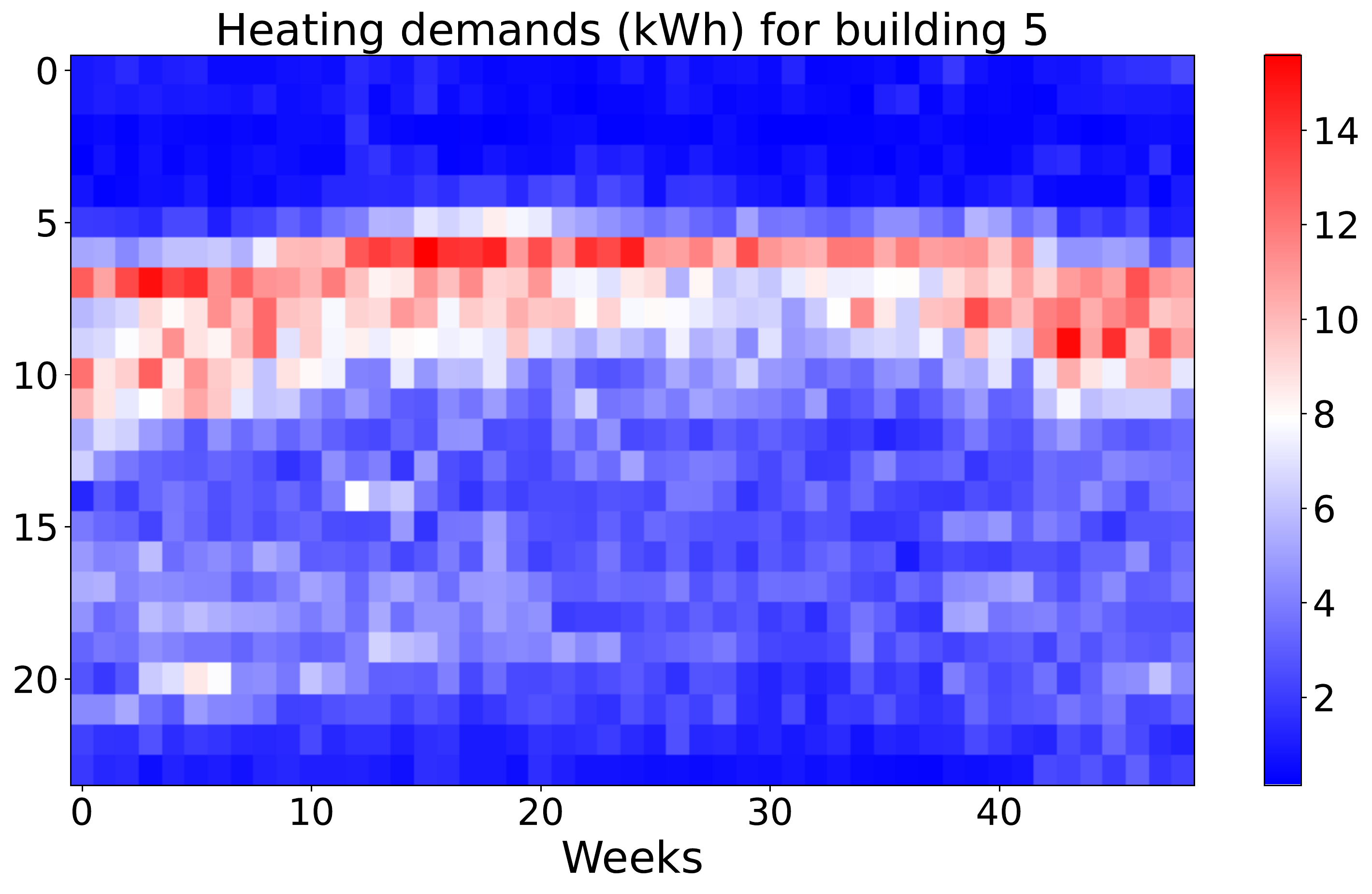}
\caption{heating demands}
\end{subfigure}
\begin{subfigure}{.4\textwidth}
  \centering
  \includegraphics[width=\columnwidth]{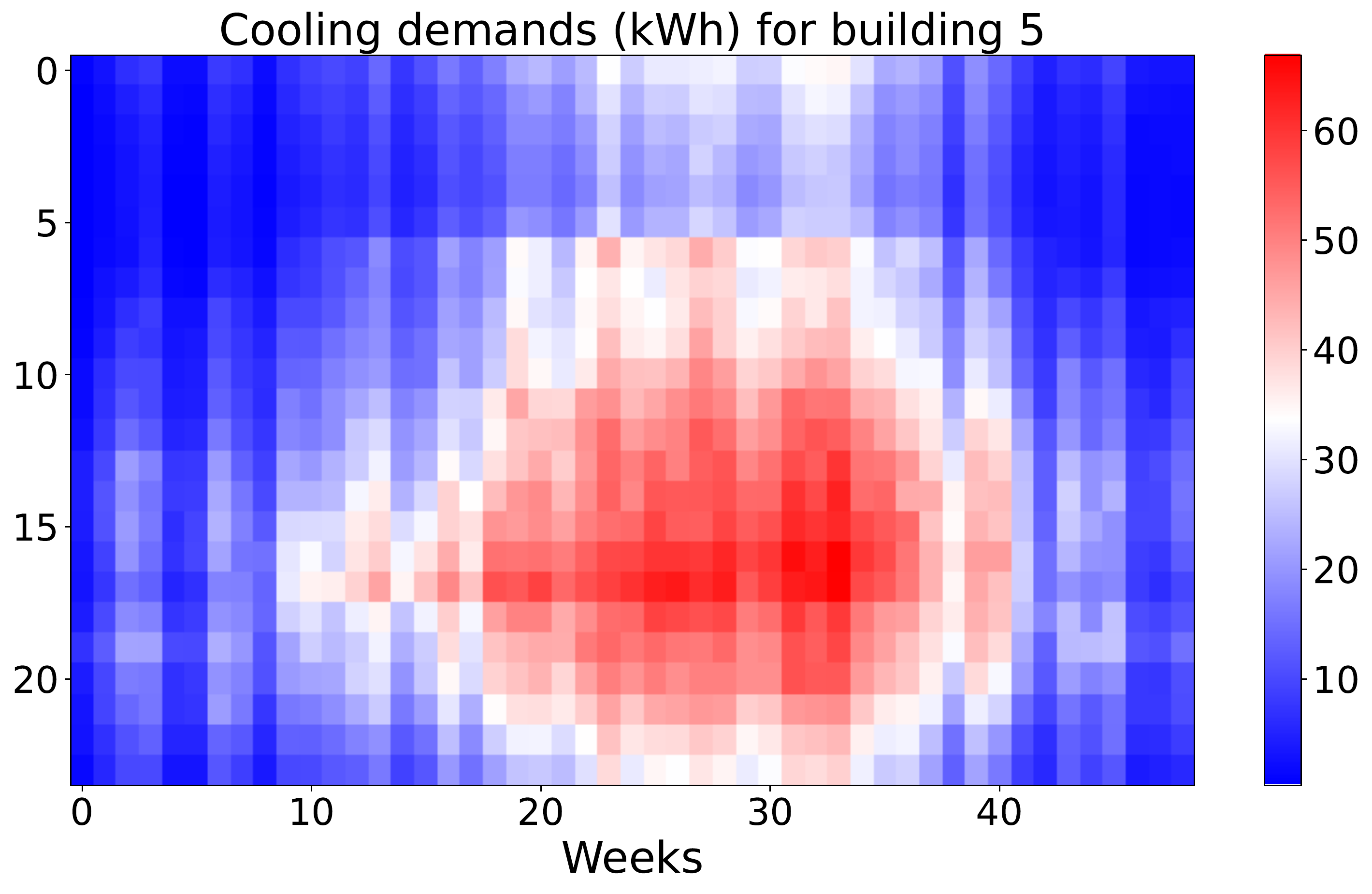}
\caption{cooling demands}
\end{subfigure}
\caption{Visualization of (a) parameter learning, (b) net electricity demand, (c) electric loads, (d) empirical counts of peaks, (e) heating demand, and (f) cooling demand for Building 8. It can be observed that Building 8 increases the virtual electricity price during hours 17--23 in response to high electricity peaks. As peak issues are mitigated, virtual electricity prices eventually decline, as can be seen after week 30.}
\end{figure}

\begin{figure}[h]
\centering
\begin{subfigure}{.42\textwidth}
  \centering
  \includegraphics[width=\columnwidth]{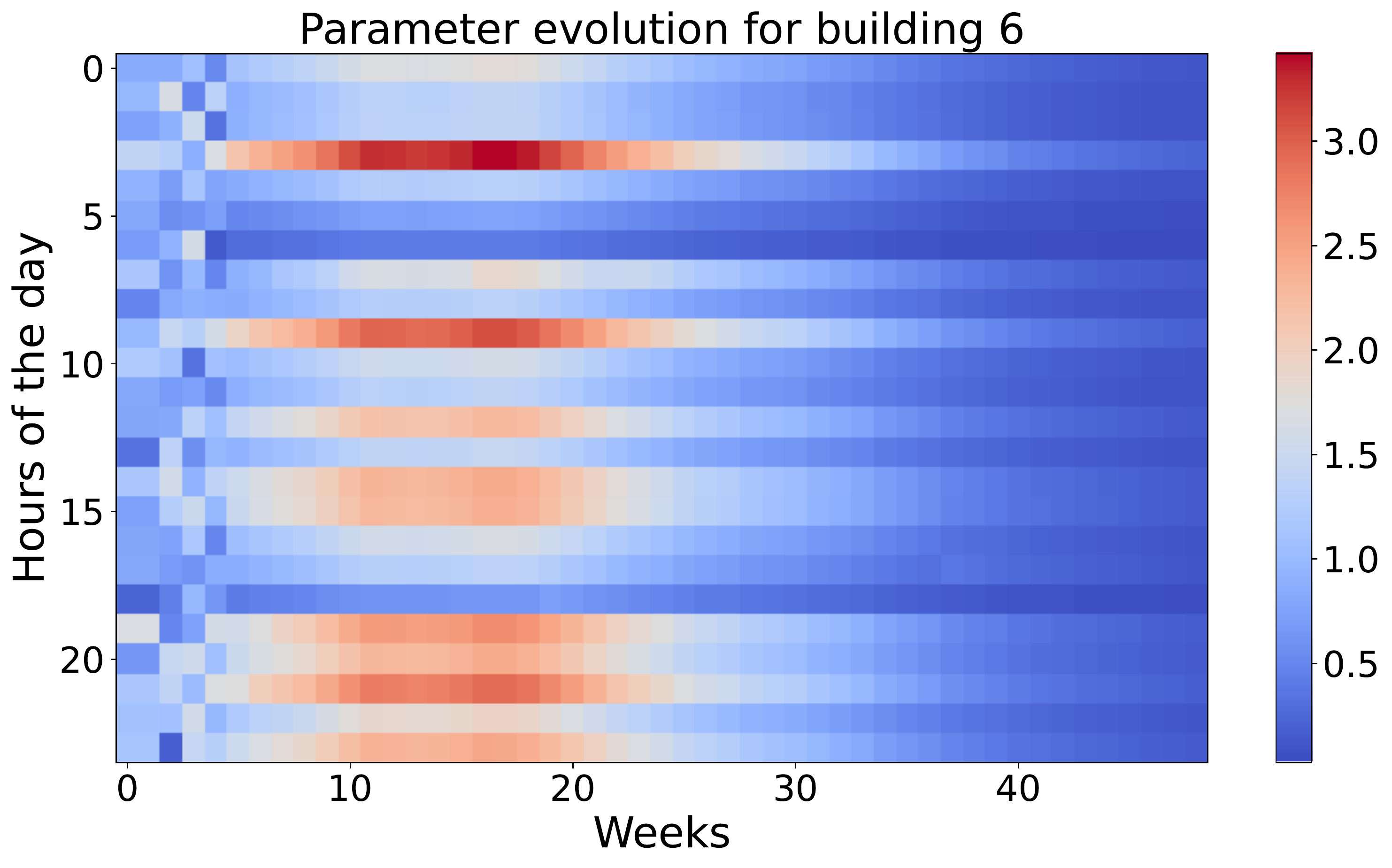}
\caption{virtual electricity prices $\theta$}
\end{subfigure}%
\begin{subfigure}{.4\textwidth}
  \centering
  \includegraphics[width=\columnwidth]{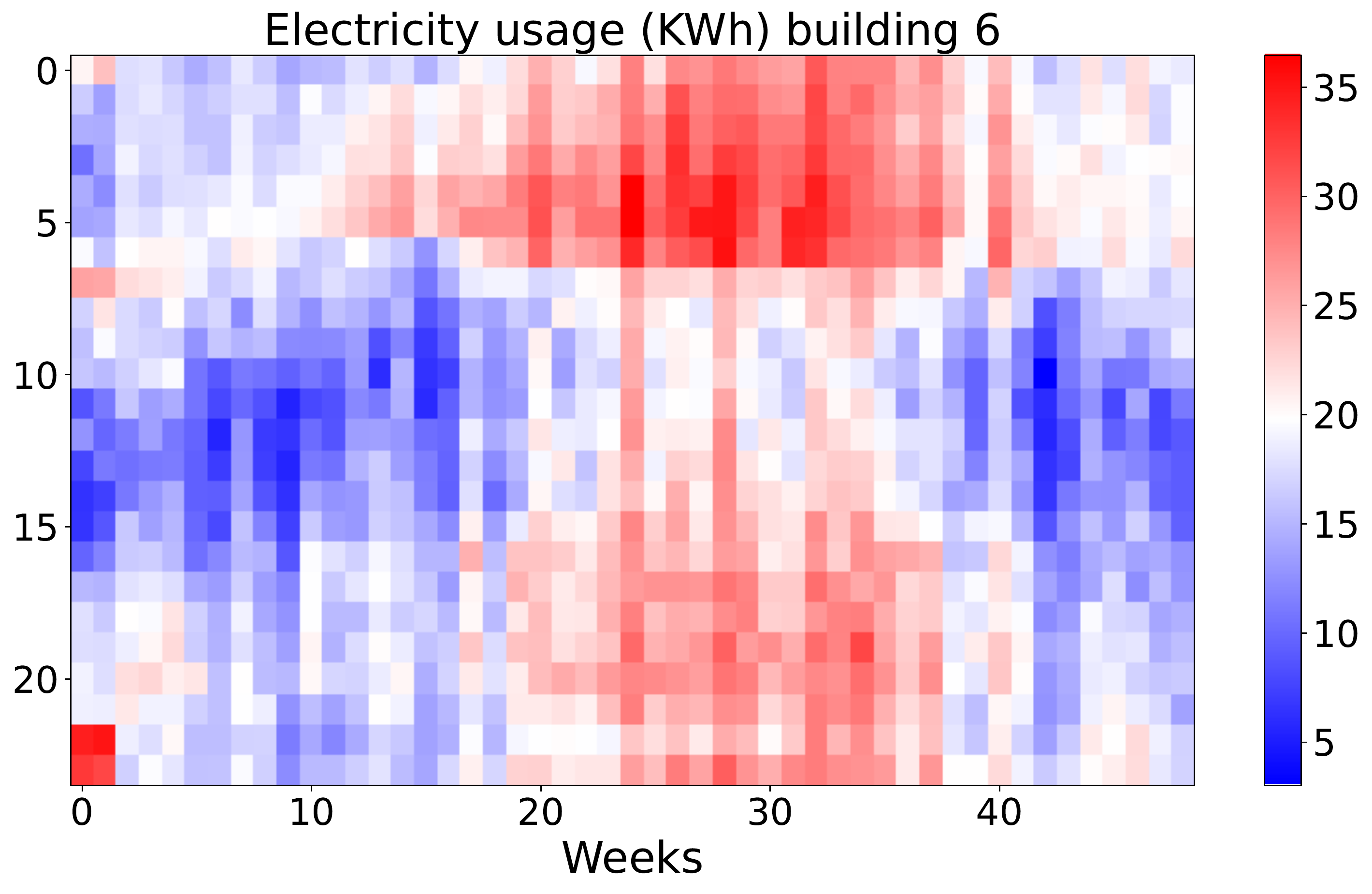}
\caption{net electricity usage}
\end{subfigure}
\begin{subfigure}{.4\textwidth}
  \centering
  \includegraphics[width=\columnwidth]{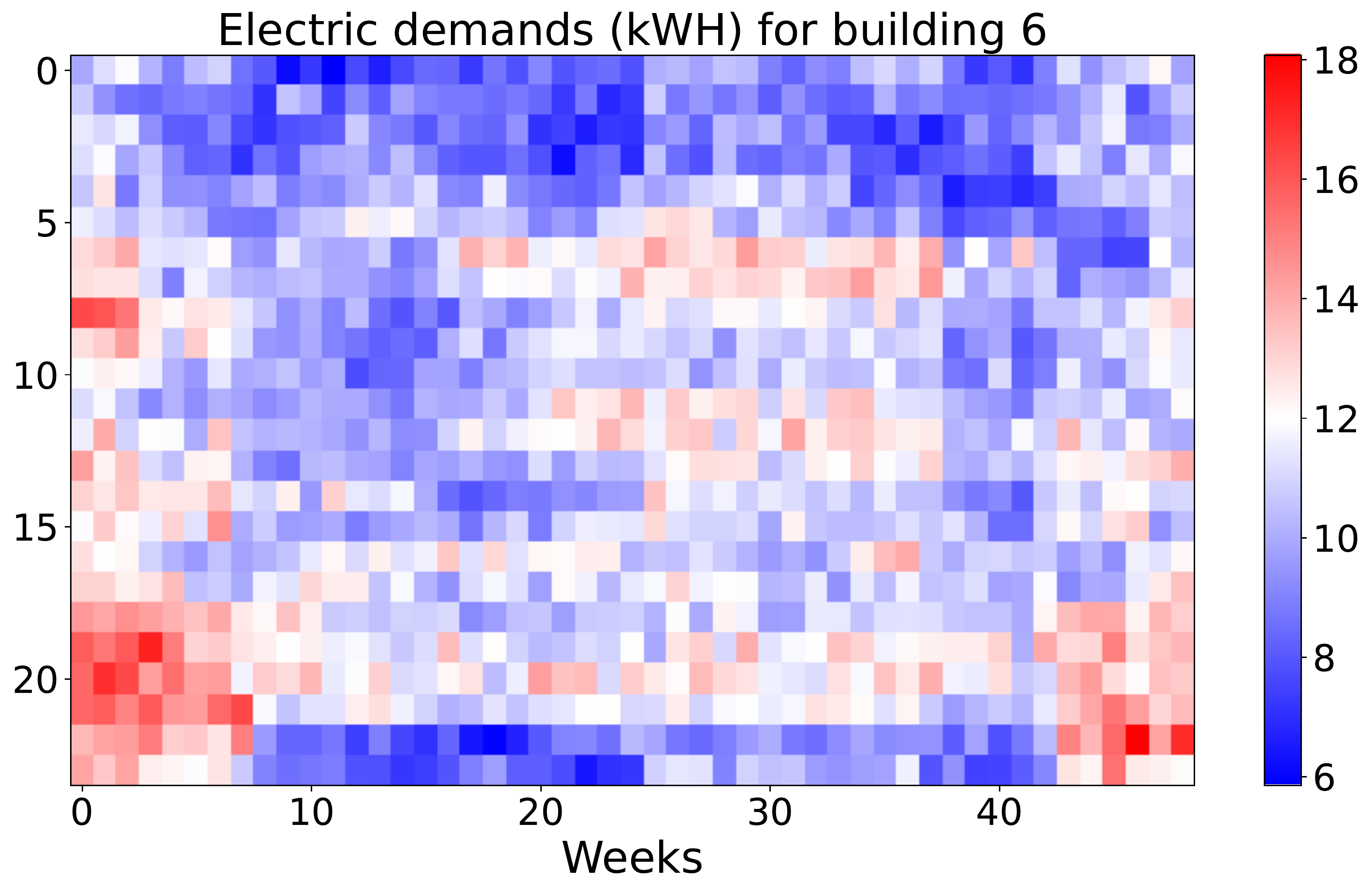}
\caption{electricity demands}
\end{subfigure}
\begin{subfigure}{.4\textwidth}
  \centering
  \includegraphics[width=\columnwidth]{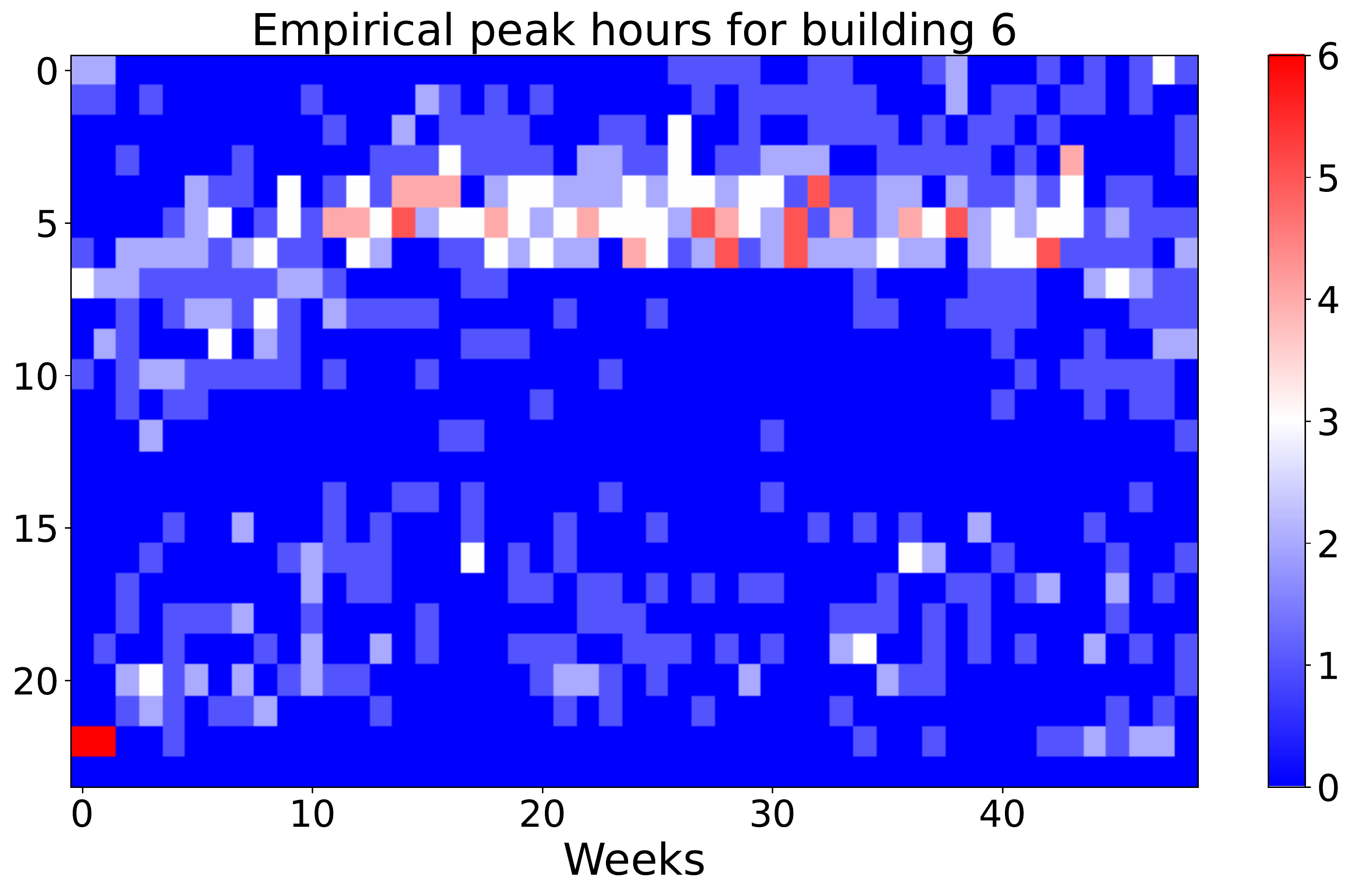}
\caption{empirical counts of peaks}
\end{subfigure}
\begin{subfigure}{.4\textwidth}
  \centering
  \includegraphics[width=\columnwidth]{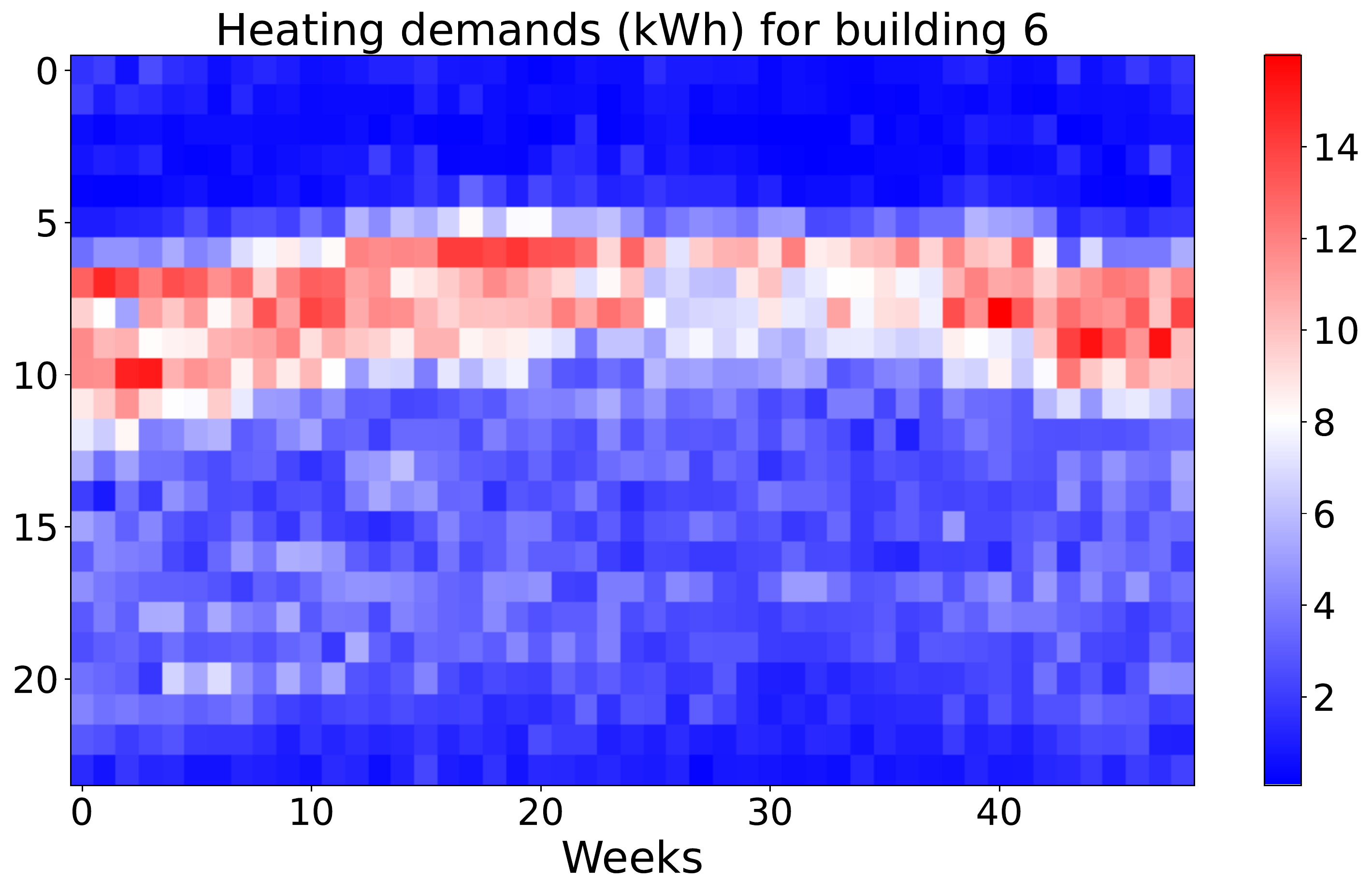}
\caption{heating demands}
\end{subfigure}
\begin{subfigure}{.4\textwidth}
  \centering
  \includegraphics[width=\columnwidth]{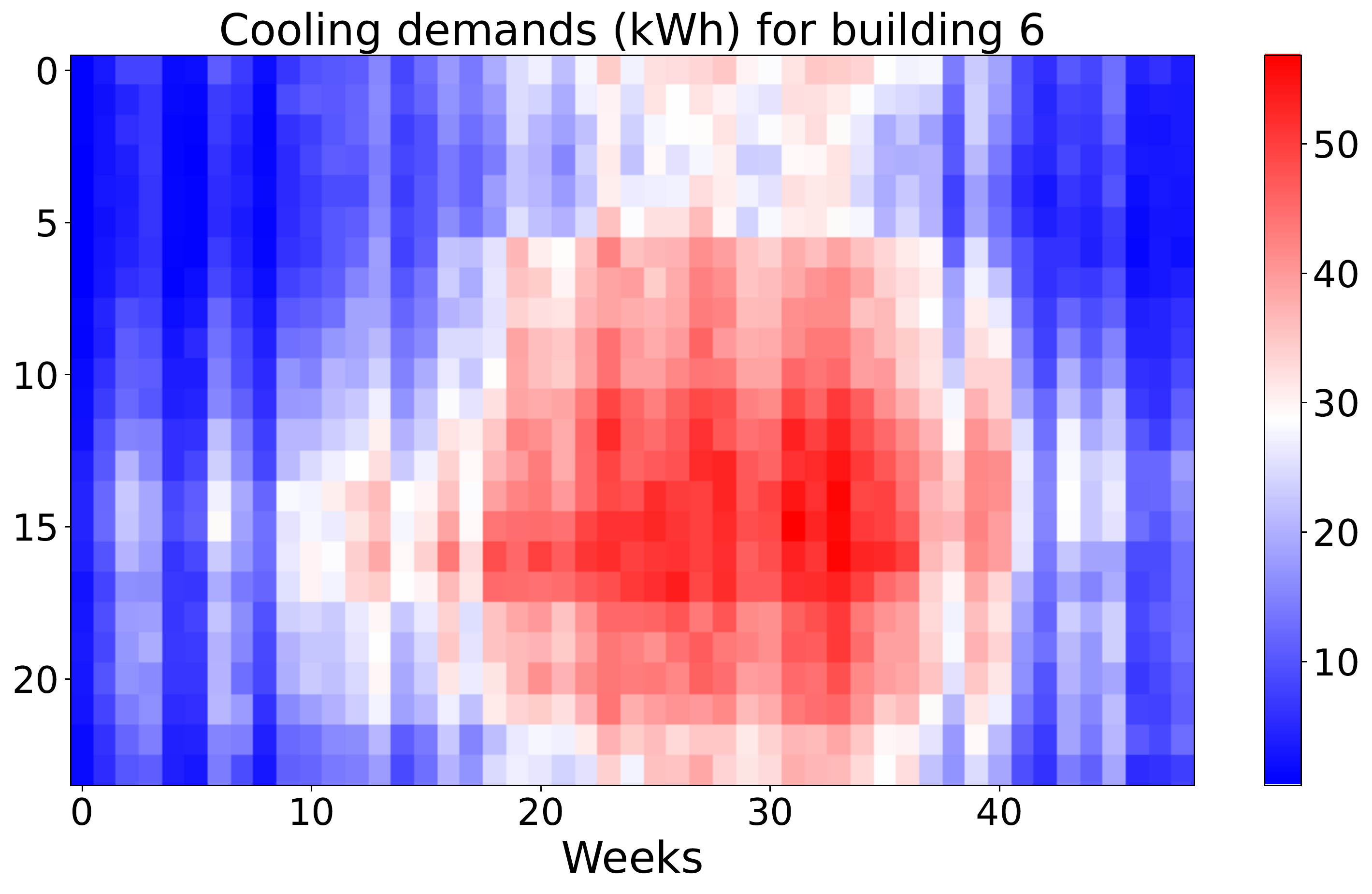}
\caption{cooling demands}
\end{subfigure}
\caption{Similar patterns can be observed for Building 6, where virtual electricity prices rise in response to electricity peaks. For this building, the peaks are more dispersed throughout the day. }
\end{figure}

\begin{figure}[h]
\centering
\begin{subfigure}{.42\textwidth}
  \centering
  \includegraphics[width=\columnwidth]{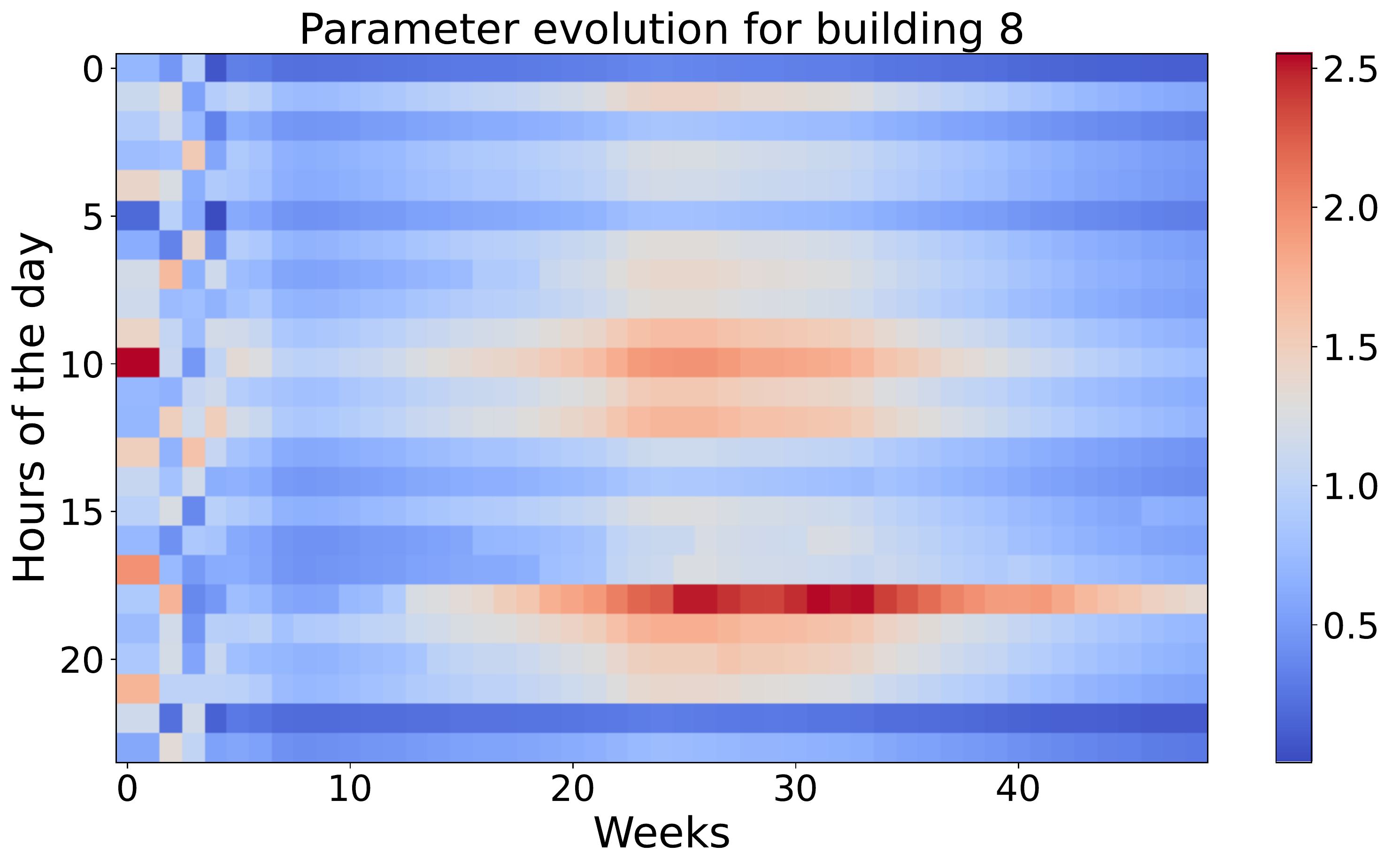}
\caption{virtual electricity prices $\theta$}
\end{subfigure}%
\begin{subfigure}{.4\textwidth}
  \centering
  \includegraphics[width=\columnwidth]{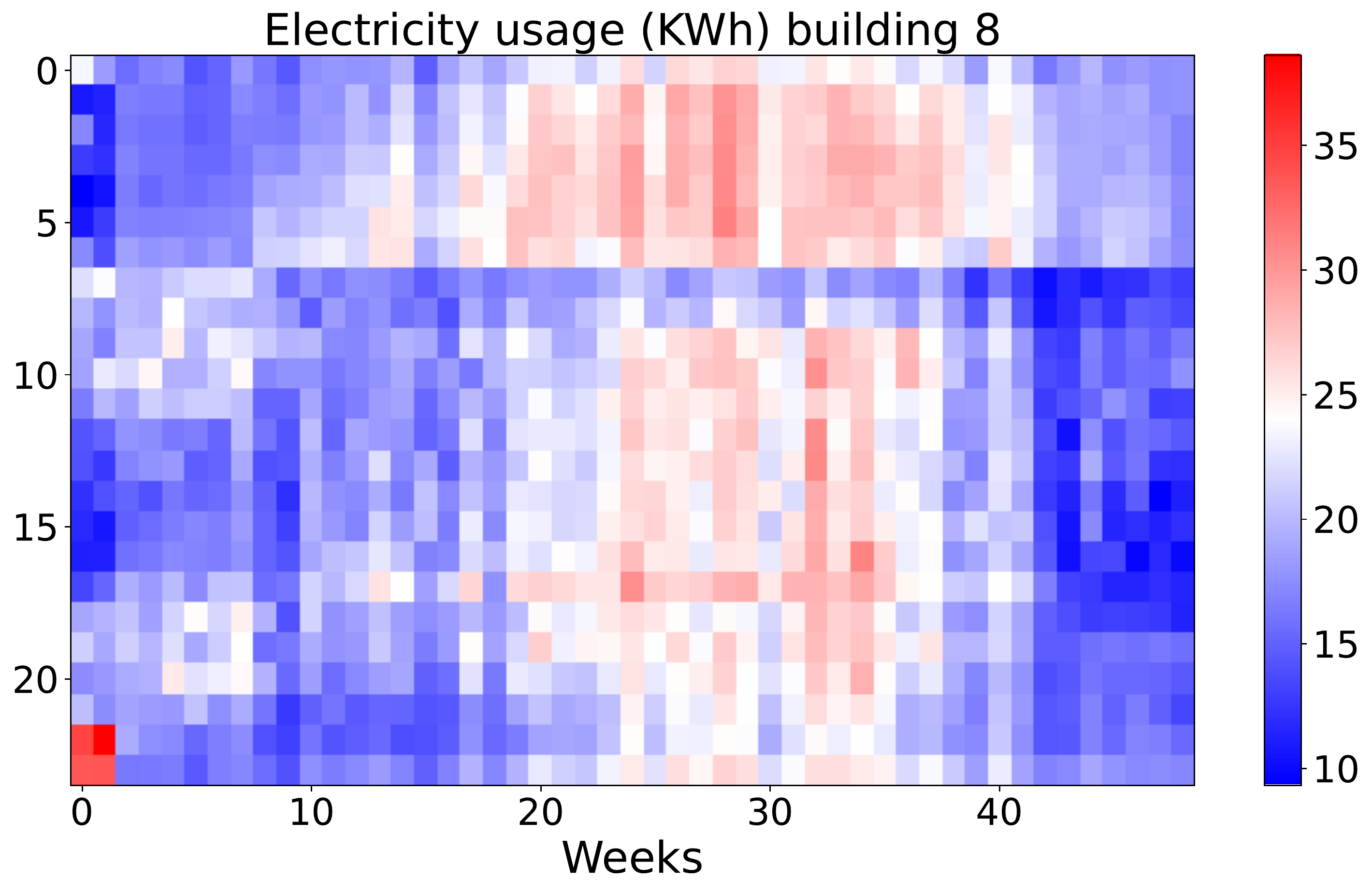}
\caption{net electricity usage}
\end{subfigure}
\begin{subfigure}{.4\textwidth}
  \centering
  \includegraphics[width=\columnwidth]{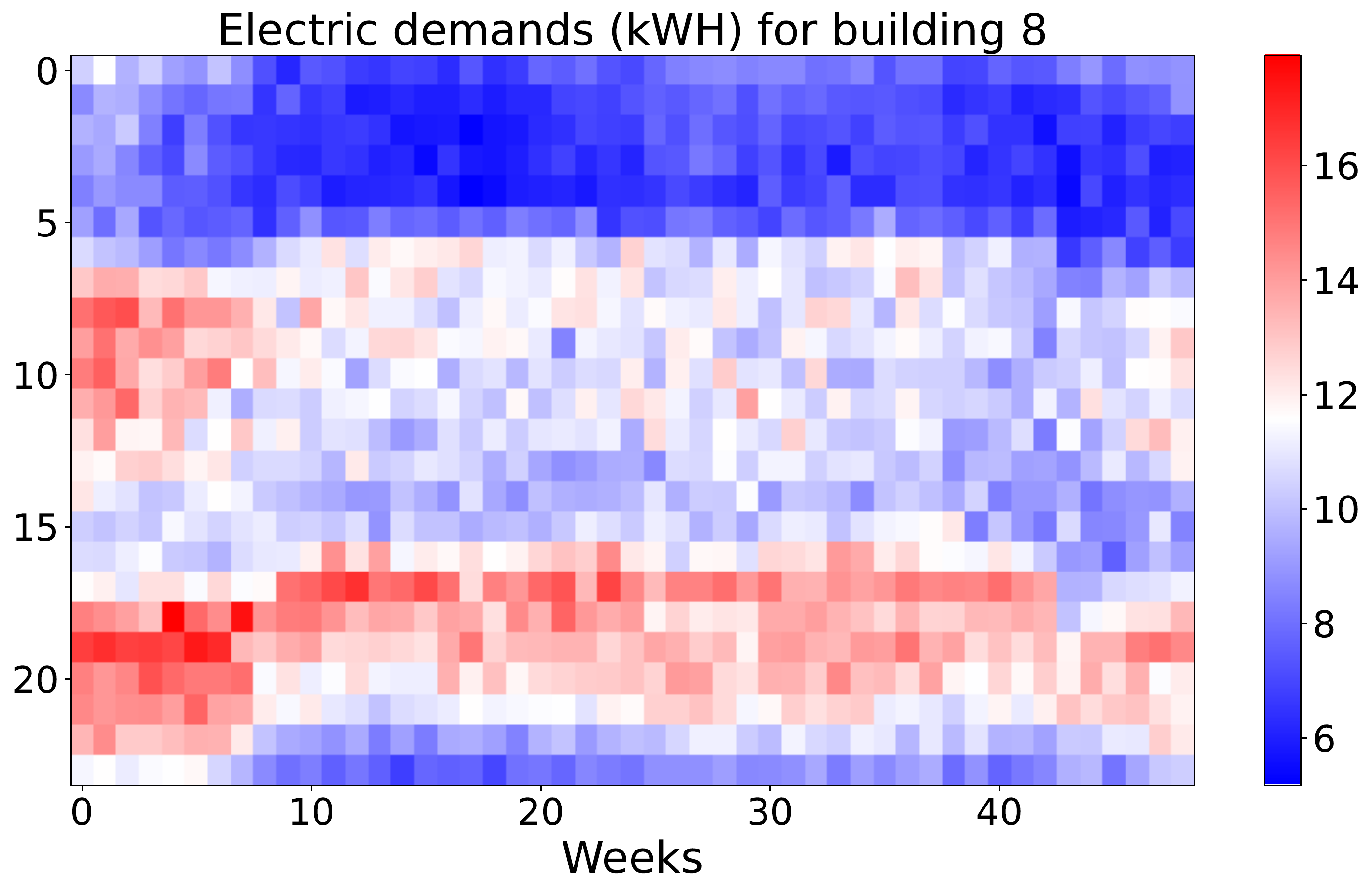}
\caption{electricity demands}
\end{subfigure}
\begin{subfigure}{.4\textwidth}
  \centering
  \includegraphics[width=\columnwidth]{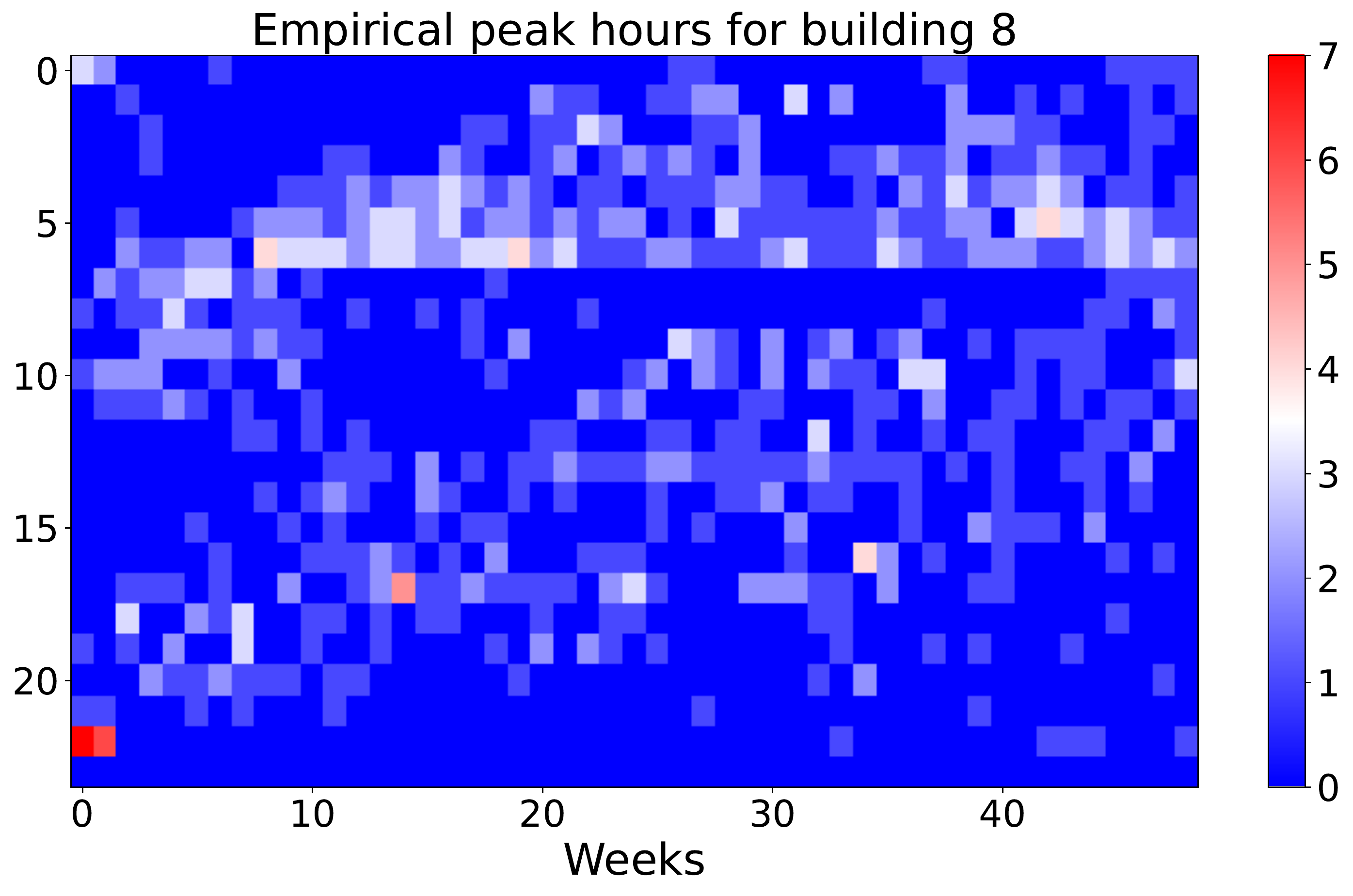}
\caption{empirical counts of peaks}
\end{subfigure}
\begin{subfigure}{.4\textwidth}
  \centering
  \includegraphics[width=\columnwidth]{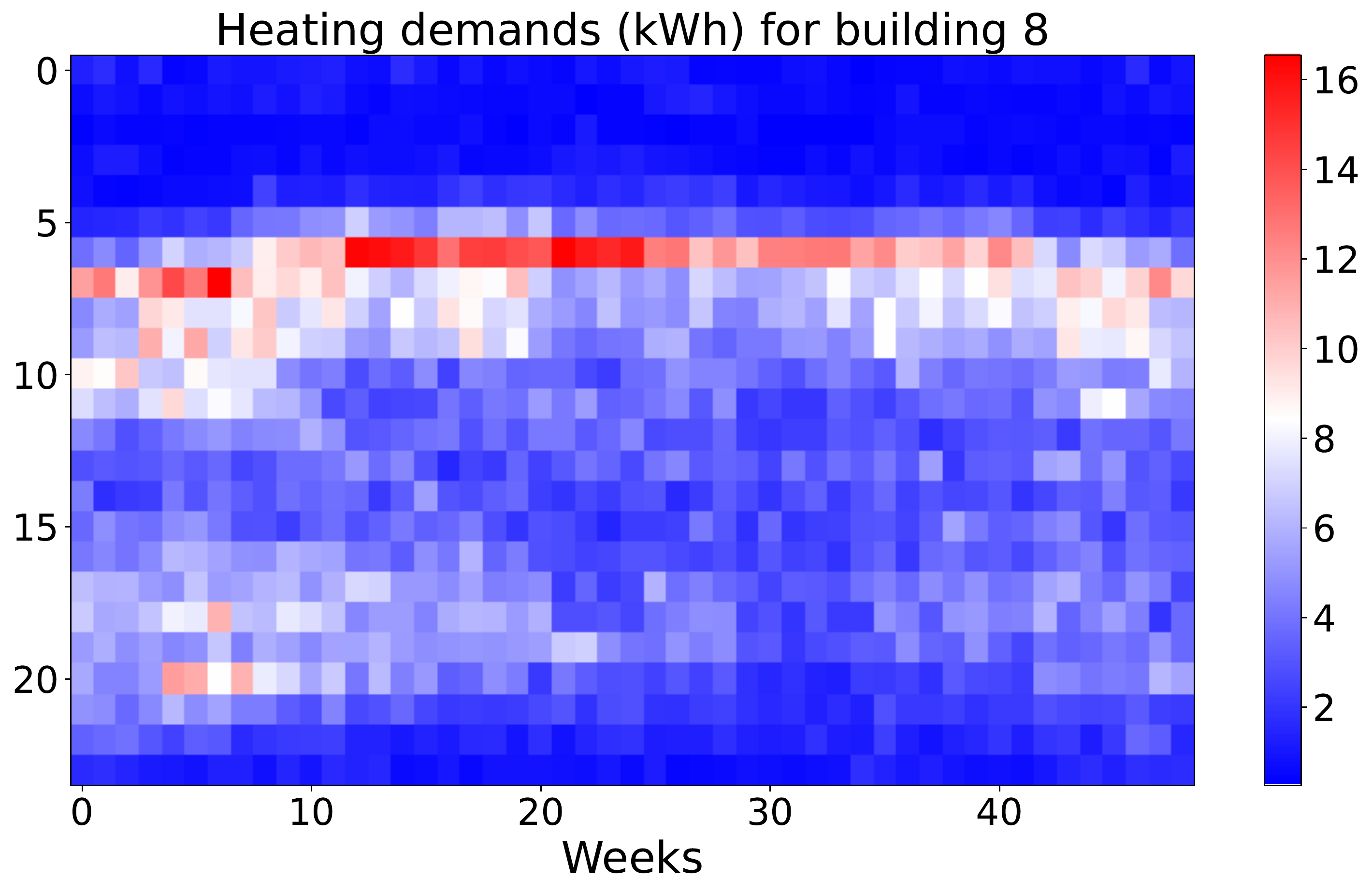}
\caption{heating demands}
\end{subfigure}
\begin{subfigure}{.4\textwidth}
  \centering
  \includegraphics[width=\columnwidth]{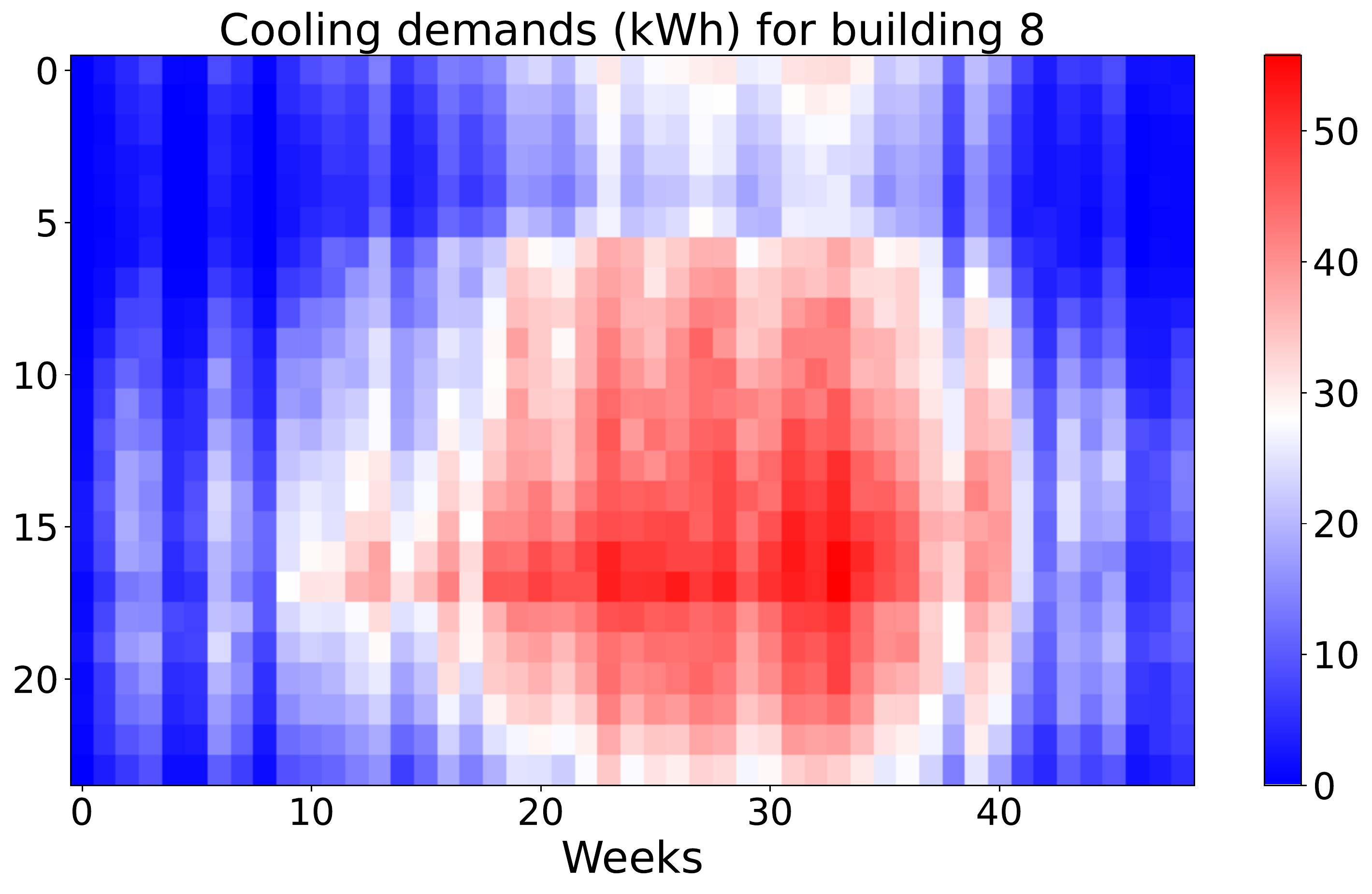}
\caption{cooling demands}
\end{subfigure}
\caption{Similar patterns are observed for Building 8.}
\end{figure}

\clearpage
\newpage

\subsection{Sensitivity analysis}
In this section, we perform sensitivity analysis of ZO-iRL for different parameters/design choices.

\subsubsection{Initial variance of the transition distribution.}
Initial variance $\iota_1$ determines how much randomness we inject during each evolutionary update. Note that in each iteration, we reduce the variance by $1/{k^2}$. We can observe from Table \ref{tab:var-sensitive} that the ZO-iRL algorithm is robust to the initial variance of the transition probability.

\begin{table}[htbp]
\centering
\begin{tabular}{c|ccc}
\textbf{}            & \multicolumn{3}{c}{\textbf{initial variance}} \\ \hline
                     & 0.1           & 0.3           & 0.5           \\
\textbf{total score} & 0.960 (5e-4)  & 0.961 (7e-4)  & 0.962 (0.001)
\end{tabular}\caption{Performance of ZO-iRL for different initial variance values.}\label{tab:var-sensitive}
\end{table}

\subsubsection{Number of parameter candidates.}
Here, we examine the number of candidates sampled for each update, $N_k\in\{3, 5,7\}$. In general, the performance on ZO-iRL is robust to this parameter. There is a trade-off between the number of candidates used in each update and the frequency of updates; as we increase $N_k$, we can expect to find a better candidate in the larger pool; however, we may also decrease the frequency of updates as the evaluation of each candidate takes one week in an online setting. As a result, it appears that increasing the number of candidates sampled does not help improve performance.

\begin{table}[htbp]
\centering
\begin{tabular}{c|ccc}
\textbf{}            & \multicolumn{3}{c}{\textbf{number of candidates}} \\ \hline
                     & 3           & 5          & 7           \\
\textbf{total score} & 0.959 (0.001)  & 0.964 (0.002)  & 0.965 (0.002)
\end{tabular}\caption{Performance of ZO-iRL for different numbers of sampled candidates per update.}\label{tab:number-sensitive}
\end{table}

\subsubsection{Guidance signal.} 
We report the sensitivity of the ZO-iRL algorithm to guidance signal parameters. We consider variants of the guidance signal with respect to 1) number of hours of top electricity use in the past day; 2) incremental value. We keep the guidance learning rate fixed at $\alpha_k = 1$. In the main text, we consider the top-2 hours of electricity usage to be assigned a value of 0.02 (the rest hours are adjusted accordingly so that the sum over all hours of the guidance signal is 0). Here, we consider the following variants: \emph{(a)} top-1 electricity hour to be assigned values of 0.02; \emph{(b)} top-2 electricity hour to be assigned values of 0.04; \emph{(c)} top-3 electricity hour to be assigned values of 0.04; \emph{(d)} top-6 electricity hour to be assigned values of 0.02. We can see from Table \ref{tab:guidance-sensitive} that the proposed algorithm is robust to these variants.

\begin{table}[htbp]
\centering
\begin{tabular}{c|cccl}
\textbf{}            & \multicolumn{4}{c}{\textbf{guidance parameters}}                                                      \\ \hline
                     & top-1& top-2 & top-3 & top-6 \\
\textbf{total score} & 0.962 (2e-4)            & 0.963 (0.003)           & 0.963(0.002)            & 0.975 (0.002)          
\end{tabular}\caption{Performance of ZO-iRL for variants of guidance signals.}\label{tab:guidance-sensitive}
\end{table}

\end{document}